\let\proof\@undefined
\let\endproof\@undefined
\newcommand{\mc}[1]{\mathcal{#1}}
\newcommand{\mbb}[1]{\mathbb{#1}}
\newcommand{\tsf}[1]{\textsf{#1}}
\newcommand{\defeq}{\triangleq}
\newcommand{\R}{\mathbb{R}}
\newcommand{\Rp}{\R_{+}}
\newcommand{\Pp}{\mathbb{P}}
\newcommand{\ind}{1\hspace{-2.5mm}{1}}
\newcommand{\E}{\mathbb{E}}
\newcommand{\ceil}[1]{\left\lceil{#1}\right\rceil}
\newcommand{\floor}[1]{\left\lfloor{#1}\right\rfloor}
\newcommand{\abs}[1]{\left\lvert{#1}\right\rvert}
\newcommand{\card}[1]{\abs{#1}}
\newcommand{\lambdauc}{\lambda^{\textup{\tsf{UC}}}}
\newcommand{\Lambdauc}{\Lambda^{\textup{\tsf{UC}}}}
\newcommand{\lambdamc}{\lambda^{\textup{\tsf{MC}}}}
\newcommand{\Lambdamc}{\Lambda^{\textup{\tsf{MC}}}}
\newcommand{\hLambdauc}{\widehat{\Lambda}^{\textup{\tsf{UC}}}}
\newcommand{\hLambdamc}{\widehat{\Lambda}^{\textup{\tsf{MC}}}}
\newcommand{\Lambdabuc}{\Lambda^{\textup{\tsf{BUC}}}}
\newcommand{\Lambdabmc}{\Lambda^{\textup{\tsf{BMC}}}}
\newcommand{\hLambdabuc}{\widehat{\Lambda}^{\textup{\tsf{BUC}}}}
\newcommand{\hLambdabmc}{\widehat{\Lambda}^{\textup{\tsf{BMC}}}}
\newcommand{\Buc}{\mc{B}^{\textup{\tsf{UC}}}}
\newcommand{\Bmc}{\mc{B}^{\textup{\tsf{MC}}}}
\newcommand{\ald}{\bar{\alpha}}
\newtheorem{lemma}{Lemma}
\newtheorem{theorem}[lemma]{Theorem}
\theoremstyle{definition}
\newtheorem{egdummy}{Example}
\newenvironment{example}{%
    \begin{egdummy}%
        \upshape%
        
}
{\qed%
\end{egdummy}}
\theoremstyle{remark}
\begin{document}

\bibliographystyle{unsrt}

\title{The Balanced Unicast and Multicast Capacity Regions of Large Wireless Networks} 

\author{Urs Niesen, Piyush Gupta, and Devavrat Shah
\thanks{This work was supported, in parts, by DARPA grant
(ITMANET) 18870740-37362-C, by NSF grants CCR-0325673 and CNS-0519535, and by a AFOSR grant under the complex networks program. The material in this paper
was presented, in parts, at the Allerton Conference on Communication,
Control, and Computing, Monticello, IL, September 2008, and at the IEEE
INFOCOM Conference, Rio de Janeiro, Brazil, April 2009.}
\thanks{U. Niesen was with the Laboratory for Information and Decision Systems 
at the Massachusetts Institute of Technology. He is now with the
Mathematics of Networks and Communications Research Department, Bell
Labs, Alcatel-Lucent.
Email: urs.niesen@alcatel-lucent.com}
\thanks{P. Gupta is with the Mathematics of Networks and 
Communications Research Department, Bell Labs, Alcatel-Lucent. 
Email: pgupta@research.bell-labs.com}
\thanks{D. Shah is with the Laboratory for Information and Decision Systems
at the Massachusetts Institute of Technology. 
Email: devavrat@mit.edu}
}

\maketitle

\begin{abstract}
    We consider the question of determining the scaling of the
    $n^2$-dimensional balanced unicast and the $n 2^n$-dimensional
    balanced multicast capacity regions of a wireless network with $n$
    nodes placed uniformly at random in a square region of area $n$ and
    communicating over Gaussian fading channels. We identify this scaling
    of both the balanced unicast and multicast capacity regions in terms of
    $\Theta(n)$, out of $2^n$ total possible, cuts. These cuts only
    depend on the geometry of the locations of the source nodes and
    their destination nodes and the traffic demands between them, and
    thus can be readily evaluated. Our results are constructive and
    provide optimal (in the scaling sense) communication schemes. 
\end{abstract}

\section{Introduction}
\label{sec:intro}

Characterizing the capacity region of wireless networks is a long
standing open problem in information theory.  The exact capacity region
is, in fact, not known for even simple networks like a three node relay
channel or a four node interference channel. In this paper, we consider
the question of approximately determining the unicast and multicast
capacity regions of wireless networks by identifying their scaling in
terms of the number $n$ of nodes in the network.

\subsection{Related Work} 

In the last decade, exciting progress has been made towards
approximating the capacity region of wireless networks. We shall
mention a small subset of work related to this paper. 

We first consider unicast traffic. The unicast capacity region of a
wireless network with $n$ nodes is the set of all simultaneously
achievable rates between all possible $n^2$ source-destination pairs.
Since finding this unicast capacity region of a wireless network exactly
seems to be intractable, Gupta and Kumar proposed a simpler but
insightful question in \cite{gup}. First, instead of asking for the
entire $n^2$-dimensional unicast capacity region of a wireless network
with $n$ nodes, attention was restricted to the scenario where each node
is source exactly once and chooses its destination uniformly at random
from among all the other nodes.  All these $n$ source-destination pairs
communicate at the same rate, and the interest is in finding the maximal
achievable such rate.  Second, instead of insisting on finding this
maximal rate exactly, they focused on its asymptotic behavior as the
number of nodes $n$ grows to infinity.

This setup has indeed turned out to be more amenable to analysis. In
\cite{gup}, it was shown that under random placement of nodes in a given
region and under certain models of communication motivated by current
technology (called \emph{protocol channel model} in the following), the
per-node rate for random source-destination pairing with uniform traffic
can scale at most as $O(n^{-1/2})$ and this can be achieved (within
poly-logarithmic factor in $n$) by a simple scheme based on multi-hop
communication. Many works since then have broadened the channel and
communication models under which similar results can be proved (see, for
example, \cite{xie,jov,xue,lev,xie2,fra,gup2,xie3,kra,aer,ozg,nie}).  In
particular, under the \emph{Gaussian fading channel model} with a
power-loss of $r^{-\alpha}$ for signals sent over a distance of $r$, it
was shown in \cite{ozg} that in extended wireless networks (i.e., $n$
nodes are located in a region of area $\Theta(n)$) the largest uniformly
achievable per-node rate under random source-destination pairing scales
essentially like $\Theta\big(n^{1-\min\{3,\alpha\}/2}\big)$. 

Analyzing such random source-destination pairing with uniform traffic
yields information about the $n^2$-dimensional unicast capacity region
along one dimension. Hence, the results in \cite{gup} and in \cite{ozg}
mentioned above provide a complete characterization of the scaling of
this one-dimensional slice of the capacity region for the protocol and
Gaussian fading channel models, respectively. It is therefore natural to
ask if the scaling of the entire $n^2$-dimensional unicast capacity
region can be characterized. To this end, we describe two related
approaches taken in recent works. 

One approach, taken by Madan, Shah, and L\'ev\^eque \cite{mad}, builds
upon the celebrated works of Leighton and Rao \cite{LR98} and Linial,
London, and Rabinovich \cite{llr} on the approximate characterization of
the unicast capacity region of capacitated wireline networks. For such
wireline networks, the scaling of the unicast capacity region is
determined (within a $\log(n)$ factor) by the minimum weighted cut of
the network graph. As shown in \cite{mad}, this naturally extends to
wireless networks under the protocol channel model, providing an
approximation of the unicast capacity region in this case. 

Another approach, first introduced by Gupta and Kumar \cite{gup},
utilizes geometric properties of the wireless network. Specifically, the
notion of the {\em transport capacity} of a network, which is the
rate-distance product summed over all source-destination pairs, was
introduced in \cite{gup}. It was shown that in an extended wireless
network with $n$ nodes and under the protocol channel model, the
transport capacity can scale at most as $\Theta(n)$. This bound on the
transport capacity provides a hyper-plane which has the capacity region
and origin on the same side. Through a repeated application of this
transport capacity bound at different scales \cite{sub, sub2} obtained
an implicit characterization of the unicast capacity region under the
protocol channel model. 

For the Gaussian fading channel model, asymptotic upper bounds for the
transport capacity were obtained in \cite{xie, jov}, and for more
general distance weighted sum rates in \cite{ahm}. 

So far, we have only considered unicast traffic. We now turn to
multicast traffic. The multicast capacity region of a wireless network
with $n$ nodes is the set of all simultaneously achievable rates between
all possible $n 2^n$ source--multicast-group pairs. Instead of
considering this multicast capacity region directly, various authors
have analyzed the scaling of restricted traffic patterns under a
protocol channel model assumption (see
\cite{ltf,sls,keriri06,chsa05,jaro05}, among others). For example, in
\cite{ltf}, Li, Tang, and Frieder obtained a scaling characterization
under a protocol channel model and random node placement for multicast
traffic when each node chooses a certain number of its destinations
uniformly at random. Independently, in \cite{sls}, Shakkottai, Liu, and
Srikant considered a similar setup and also obtained the precise scaling
when sources and their multicast destinations are chosen at random. Both
of these results are non information-theoretic (in that they assume a
protocol channel model). Furthermore, they provide information about the
scaling of the $n 2^n$-dimensional multicast capacity region only
along one particular dimension.

\subsection{Our Contributions}

Despite the long list of results, the question of approximately
characterizing the unicast capacity region under the Gaussian fading
channel model remains far from being resolved. In fact, for Gaussian
fading channels, the only traffic pattern that is well understood is
random source-destination pairing with uniform rate. This is limiting in
several aspects. First, by choosing for each source a destination at
random, most source-destination pairs will be at a distance of the
diameter of the network with high probability, i.e., at distance
$\Theta(\sqrt{n})$ for an extended network. However, in many wireless
networks some degree of locality of source-destination pairs can be
expected. Second, all source-destination pairs are assumed to be
communicating at uniform rate. Again, in many settings we would expect
nodes to be generating traffic at widely varying rates. Third, each node
is source exactly once, and destination on average once. However, in
many scenarios the same source node (e.g., a server) might transmit data
to many different destination nodes, or the same destination node might
request data from many different source nodes. All these heterogeneities
in the traffic demands can result in different scaling behavior of the
performance of the wireless network than what is obtained for random
source-destination pairing with uniform rate. 

As is pointed out in the last section, even less is known about the
multicast capacity region under Gaussian fading. In fact, the only
available results are for the protocol channel model, and even there
only for special traffic patterns resulting from randomly choosing
sources and their multicast groups and assuming uniform rate.  To the
best of our knowledge, no information-theoretic results (i.e., assuming
Gaussian fading channels) are available even for special traffic
patterns.

We address these issues by analyzing the scaling of a broad class of
traffic, termed \emph{balanced traffic} in the following, in a wireless
network of $n$ randomly placed nodes under a Gaussian fading channel
model. The notion of balanced traffic is a natural generalization of
symmetric traffic, in which the data to be transmitted from a node $u$
to a node $v$ is equal to the amount of data to be transmitted from $v$ to
$u$. We analyze the scaling of the set of achievable balanced unicast
traffic (the \emph{balanced unicast capacity region}) and achievable
balanced multicast traffic (the \emph{balanced multicast capacity
region}). The balanced unicast capacity region provides information
about $n^2-n$ of the $n^2$ dimensions of the unicast capacity region;
the balanced multicast capacity region provides information about
$n2^n-n$ of the $n2^n$ dimensions of the multicast capacity region.

As a first set of results of this paper, we present an approximate
characterization of the balanced unicast and multicast capacity regions.
We show that both regions can be approximated by a polytope with less
than $2n$ faces, each corresponding to a distinct cut (i.e., a subset of
nodes) in the wireless network. This polyhedral characterization
provides a succinct approximate description of the balanced unicast and
multicast capacity regions even for large values of $n$. Moreover, it
shows that only $2n$ out of $2^n$ possible cuts in the wireless network
are asymptotically relevant and reveals the geometric structure of these
relevant cuts.

Second, we establish the approximate equivalence of the wireless network
and a wireline tree graph, in the sense that balanced traffic can be
transmitted reliably over the wireless network if and only if
approximately the same traffic can be routed over the tree graph. This
equivalence is the key component in the derivation of the approximation
result for the balanced unicast and multicast capacity regions and
provides insight into the structure of large wireless networks.

Third, we propose a novel three-layer communication architecture that
achieves (in the scaling sense) the entire balanced unicast and multicast
capacity regions. The top layer of this scheme treats the wireless
network as the aforementioned tree graph and routes messages between
sources and their destinations---dealing with heterogeneous traffic
demands. The middle layer of this scheme provides this tree abstraction
to the top layer by appropriately distributing and concentrating traffic
over the wireless network---choosing the level of cooperation in the
network. The bottom layer implements this distribution and concentration
of messages in the wireless network---dealing with interference and
noise. The approximate optimality of this three-layer architecture
implies that a separation based approach, in which routing is performed
independently of the physical layer, is order-optimal. In other words,
techniques such as network coding can provide at most a small (in the
scaling sense) multiplicative gain for transmission of balanced unicast
or multicast traffic in wireless networks.

\subsection{Organization}

The remainder of this paper is organized as follows. Section
\ref{sec:model} introduces the network model and notation. Section
\ref{sec:main} presents our main results. We illustrate the strength of
these results in Section \ref{sec:examples} by analyzing various example
scenarios with heterogeneous unicast and multicast traffic patterns.
Section \ref{sec:schemes} provides a high level description of the
proposed communication schemes. Sections
\ref{sec:aux}-\ref{sec:proof_multicast} contain proofs. Finally,
Sections \ref{sec:discussion} and \ref{sec:conclusions} contain
discussions and concluding remarks.

\section{Models and Notation}
\label{sec:model}

In this section, we discuss network and traffic models, and we
introduce some notational conventions.

\subsection{Network Model}

Consider the square region 
\begin{equation*}
    A(n) \defeq [0,\sqrt{n}]^2
\end{equation*}
and let $V(n)\subset A(n)$ be a set of $\abs{V(n)} = n$ nodes on $A(n)$.
Each such node represents a wireless device, and the $n$ nodes together
form a wireless network. This setting with $n$ nodes on a square of area
$n$ is referred to as an \emph{extended network}. Throughout this paper,
we consider this extended network setting. However, all results carry
over for \emph{dense networks}, where $n$ nodes are placed on a square
of unit area (see Section \ref{sec:dense} for the details). 

We use the same channel model as in~\cite{ozg}.
Namely, the received signal at node $v$ and time
$t$ is
\begin{equation*}
    y_v[t] \defeq \sum_{u\in V(n)\setminus\{v\}}h_{u,v}[t]x_u[t]+z_v[t]
\end{equation*}
for all $v\in V(n), t\in\mbb{N}$, where the $\{x_u[t]\}_{u,t}$ are the
signals sent by the nodes in $V(n)$.  We impose an average
power constraint of $1$ on the signal $\{x_u[t]\}_{t}$ for every node
$u\in V(n)$.  The additive noise terms $\{z_v[t]\}_{v,t}$ are
independent and identically distributed (i.i.d.) circularly symmetric
complex Gaussian random variables with mean $0$ and variance $1$, and
\begin{equation*}
    h_{u,v}[t] \defeq r_{u,v}^{-\alpha/2}\exp(\sqrt{-1}\theta_{u,v}[t]),
\end{equation*}
for \emph{path-loss exponent} $\alpha>2$, and where $r_{u,v}$ is the
Euclidean distance between $u$ and $v$.  As a function of $u,v\in V(n)$,
we assume that $\{\theta_{u,v}[t]\}_{u,v}$ are i.i.d.\footnote{It is
worth pointing out that recent results \cite{fra2} suggest that under
certain assumptions on scattering elements, for $\alpha\in(2,3)$ and
very large values of $n$, the i.i.d. phase assumption does not
accurately reflect the physical behavior of the wireless channel.
However, in follow-up work \cite{fra3} the authors show that under
different assumptions on the scatterers this assumption is still
justified in the $\alpha\in(2,3)$ regime even for very large values of
$n$. This indicates that the issue of channel modeling for large
networks in the low path-loss regime is somewhat delicate and requires
further investigation.} with uniform distribution on $[0,2\pi)$. As a
function of $t$, we either assume that $\{\theta_{u,v}[t]\}_{t}$ is
stationary and ergodic, which is called \emph{fast fading} in the
following, or we assume $\{\theta_{u,v}[t]\}_{t}$ is constant, which is
called \emph{slow fading} in the following. In either case, we assume
full channel state information (CSI) is available at all nodes, i.e.,
each node knows all $\{h_{u,v}[t]\}_{u,v}$ at time $t$. This full CSI
assumption is rather strong, and so it is worth commenting on. All the
converse results presented are proved under the full CSI assumption and
are hence also valid under more realistic assumptions on the
availability of CSI. Moreover, it can be shown that for achievability
only $2$-bit quantized CSI is necessary for path-loss exponent
$\alpha\in(2,3]$ and no CSI is necessary for $\alpha>3$ to achieve the
same scaling behavior.

\subsection{Traffic Model}

A \emph{unicast traffic matrix} $\lambdauc\in\Rp^{n\times n}$ associates
with each pair $u,w\in V(n)$ the rate $\lambdauc_{u,w}$ at which node $u$
wants to communicate to node $w$. We assume that messages for distinct
source-destination pairs $(u,w)$ are independent.  However, we allow the
same node $u$ to be source for multiple destinations, and the same node
$w$ to be destination for multiple sources. In other words, we consider
general unicast traffic.  The \emph{unicast capacity region}
$\Lambdauc(n)\subset\Rp^{n\times n}$ of the wireless network is the
collection of achievable unicast traffic matrices, i.e.,
$\lambdauc\in\Lambdauc(n)$ if and only if every source-destination pair
$(u,w)\in V(n)\times V(n)$ can reliably communicate independent messages
at rate $\lambdauc_{u,w}$. 

A \emph{multicast traffic matrix} $\lambdamc\in\Rp^{n\times 2^n}$
associates with each pair $u\in V(n), W\subset V(n)$ the rate
$\lambdamc_{u,W}$ at which node $u$ wants to multicast a message to the
nodes in  $W$. In other words, all nodes in $W$ want to obtain the same
message from $u$.  We assume that messages for distinct
source--multicast-group pairs $(u,W)$ are independent. However, we allow
the same node $u$ to be source for several multicast-groups, and the
same set $W$ of nodes to be multicast destination for multiple sources.
In other words, we consider general multicast traffic.  The
\emph{multicast capacity region} $\Lambdamc(n)\subset\Rp^{n\times 2^n}$
is the collection of achievable multicast traffic matrices, i.e,.
$\lambdamc\in\Lambdamc(n)$ if and only if every
source--multicast-group pair $(u,W)$ can reliably communicate
independent messages at rate $\lambdamc_{u,W}$.

The following example illustrates the concept of unicast and multicast
traffic matrices.
\begin{example}
    Assume $n=4$, and label the nodes as $\{u_i\}_{i=1}^4= V(n)$.
    Assume further node $u_1$ needs to transmit a message $m_{1,2}$ to
    node $u_2$ at rate $1$ bit per channel use, and an independent
    message $m_{1,3}$ to node $u_3$ at rate $2$ bits per channel use.
    Node $u_2$ needs to transmit a message $m_{2,3}$ to node $u_3$ at
    rate $4$ bits per channel use. All the messages $m_{1,2}, m_{1,3},
    m_{2,3}$ are independent. This traffic pattern can be described by a
    unicast traffic matrix $\lambdauc\in\Rp^{4\times 4}$ with
    $\lambdauc_{u_1,u_2}=1$, $\lambdauc_{u_1,u_3}=2$,
    $\lambdauc_{u_2,u_3}=4$, and $\lambdauc_{u,w}=0$ otherwise. Note that
    in this example node $u_1$ is source for two (independent) messages,
    and node $u_3$ is destination for two (again independent) messages.
    Node $u_4$ in this example is neither source nor destination for any
    message and can be understood as a helper node. 
    
    Assume now that node $u_1$ needs to transmit the same message
    $m_{1,\{2,3,4\}}$ to all nodes $u_2,u_3,u_4$ at a rate of $1$ bit
    per channel use, and an independent message $m_{1,\{2\}}$ to only
    node $2$ at rate $2$ bits per channel use. Node $2$ needs to
    transmit a message $m_{2,\{1,3\}}$ to both $u_1,u_3$ at rate $4$
    bits per channel use. All the messages $m_{1,\{2,3,4\}},
    m_{1,\{2\}}, m_{2,\{1,3\}}$ are independent. This traffic pattern
    can be described by a multicast traffic matrix
    $\lambdamc\in\Rp^{4\times 16}$ with
    $\lambdamc_{u_1,\{u_2,u_3,u_4\}}=1$, $\lambdamc_{u_1,\{u_2\}}=2$,
    $\lambdamc_{u_2,\{u_1,u_3\}}=4$, and $\lambdamc_{u,W}=0$ otherwise. 
    Note that in this example node $u_1$ is source for two (independent)
    multicast messages, and node $2$ and $3$ are destinations for more
    than one message. The message $m_{1,\{2,3,4\}}$ is destined for 
    all the nodes in the network and can hence be understood as a
    broadcast message. The message $m_{1,\{2\}}$ is only
    destined for one node and can hence be understood as a private
    message.
\end{example}

In the following, we will be interested in \emph{balanced} traffic
matrices that satisfy certain symmetry properties. Consider a symmetric
unicast traffic matrix $\lambdauc$ satisfying
$\lambdauc_{u,w}=\lambdauc_{w,u}$ for all node pairs $u,w\in V(n)$.
The notion of a balanced traffic matrix generalizes this idea of
symmetric traffic. 

Before we provide a precise definition of balanced traffic, we need to
introduce some notation. Partition $A(n)$ into several square-grids.
The $\ell$-th square-grid divides $A(n)$ into $4^\ell$ squares, each of
sidelength $2^{-\ell}\sqrt{n}$, denoted by
$\{A_{\ell,i}(n)\}_{i=1}^{4^{\ell}}$.  Let $V_{\ell,i}(n) \subset V(n)$
be the nodes in $A_{\ell,i}(n)$ (see Figure \ref{fig:grid}).  The square
grids in levels $\ell\in\{1,\ldots,L(n)\}$ with\footnote{All logarithms
are with respect to base $2$.}
\begin{equation*}
    L(n) \defeq \frac{1}{2}\log(n)\big(1-\log^{-1/2}(n)\big),
\end{equation*}
will be of particular importance. Note that $L(n)$ is chosen such that
\begin{equation*}
    4^{-L(n)}n = n^{\log^{-1/2}(n)},
\end{equation*}
and hence
\begin{equation*}
    \lim_{n\to\infty}\card{A_{L(n),i}(n)}
    = \lim_{n\to\infty}4^{-L(n)}n 
    = \infty.
\end{equation*}
while at the same time
\begin{equation*}
    \card{A_{L(n),i}(n)}
    = 4^{-L(n)}n 
    \leq n^{o(1)},
\end{equation*}
as $n\to\infty$. In other words, the area of the region $A_{L(n),i}(n)$
at level $\ell=L(n)$ grows to infinity as $n\to\infty$, but much slower
than $n$.
\begin{figure}[!ht]
    \begin{center}
        \input{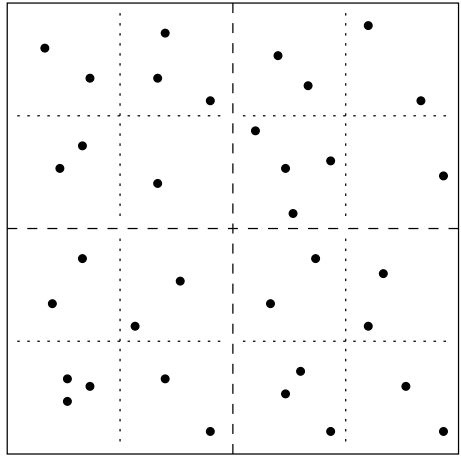}
    \end{center}

    \caption{Square-grids with $0\leq\ell\leq 2$.  The grid at level
    $\ell=0$ is the area $A(n)$ itself. The grid at level $\ell=1$ is
    indicated by the dashed lines. The grid at level $\ell=2$ by the
    dashed and the dotted lines. Assume for the sake of example that the
    subsquares are numbered from left to right and then from bottom to
    top (the precise order of numbering is immaterial).  Then
    $V_{0,1}(n)$ are all the nodes $V(n)$, $V_{1,1}(n)$ are the nine
    nodes in the lower left corner (separated by dashed lines), and
    $V_{2,1}(n)$ are the three nodes in the lower left corner (separated
    by dotted lines).}

    \label{fig:grid}
\end{figure}

A unicast traffic matrix $\lambdauc$ is \emph{$\gamma$-balanced} if
\begin{equation}
    \label{eq:unicast_balanced}
    \sum_{u\notin V_{\ell,i}(n)}\sum_{w\in V_{\ell,i}(n)}\lambdauc_{u,w}
    \leq \gamma \sum_{u\in V_{\ell,i}(n)}\sum_{w\notin V_{\ell,i}(n)}\lambdauc_{u,w},
\end{equation}
for all $\ell\in\{1,\ldots,L(n)\}$ and $i\in\{1,\ldots 4^\ell\}$. In
other words, for a balanced unicast traffic matrix the amount of traffic to the
nodes $V_{\ell,i}(n)$ is not much larger than the amount of traffic from  
them. In particular, all symmetric traffic matrices, i.e.,
satisfying $\lambdauc_{u,w}=\lambdauc_{w,u}$, are $1$-balanced.  Denote
by $\Buc(n)\subset \Rp^{n\times n}$ the collection of all
$\gamma(n)$-balanced unicast traffic matrices for some fixed
$\gamma(n)=n^{o(1)}$.  In the following, we refer to traffic matrices
$\lambdauc\in\Buc(n)$ simply as balanced traffic matrices.  The
\emph{balanced unicast capacity region}
$\Lambdabuc(n)\subset\Rp^{n\times n}$  of the wireless network is the
collection of balanced unicast traffic matrices that are achievable,
i.e., 
\begin{equation*}
    \Lambdabuc(n)\defeq \Lambdauc(n)\cap \Buc(n).
\end{equation*}
Note that \eqref{eq:unicast_balanced} imposes at most $n$ linear
inequality constraints, and hence  $\Lambdauc(n)$ and $\Lambdabuc(n)$
coincide along at least $n^2-n$ of $n^2$ total dimensions.

A multicast traffic matrix $\lambdamc$ is \emph{$\gamma$-balanced} if
\begin{equation}
    \label{eq:multicast_balanced}
    \sum_{u\notin V_{\ell,i}(n)}
    \sum_{\substack{W\subset V(n):
    \\ W\cap V_{\ell,i}(n)\neq\emptyset}}\lambdamc_{u,W}
    \leq \gamma \sum_{u\in V_{\ell,i}(n)}
    \sum_{\substack{W\subset V(n):
    \\ W\setminus V_{\ell,i}(n)\neq\emptyset}}\lambdamc_{u,W}
\end{equation}
for all $\ell\in\{1,\ldots,L(n)\}$, $i\in\{1,\ldots 4^\ell\}$. Thus, for
$\gamma$-balanced multicast traffic, the amount of traffic to the nodes
$V_{\ell,i}(n)$ is not much larger than the amount of traffic from them.
This is the natural generalization of the notion of $\gamma$-balanced
unicast traffic to the multicast case. Denote by
$\Bmc(n)\subset\Rp^{n\times 2^n}$ the collection of all
$\gamma(n)$-balanced multicast traffic matrices for some fixed
$\gamma(n)=n^{o(1)}$.  As before, we will refer to a multicast traffic
matrix $\lambdamc\in\Bmc(n)$ simply as balanced multicast traffic
matrix.  The \emph{balanced multicast capacity region}
$\Lambdabmc(n)\subset\Rp^{n\times 2^n}$ of the wireless network is the
collection of balanced multicast traffic matrices that are achievable,
i.e., 
\begin{equation*}
    \Lambdabmc(n)\defeq \Lambdamc(n)\cap \Bmc(n).
\end{equation*}
Equation \eqref{eq:multicast_balanced} imposes at most $n$ linear
inequality constraints, and hence $\Lambdamc(n)$ and $\Lambdabmc(n)$
coincide along at least $n2^n-n$ of $n2^n$ total dimensions.

\subsection{Notational Conventions}

Throughout, $\{K_i\}_i$, $K$, $\widetilde{K}$, \ldots, indicate strictly
positive finite constants independent of $n$ and $\ell$. To simplify
notation, we assume, when necessary, that large real numbers are
integers and omit $\ceil{\cdot}$ and $\floor{\cdot}$ operators. For the
same reason, we also suppress dependence on $n$ within proofs whenever
this dependence is clear from the context.

\section{Main Results}
\label{sec:main}

In this section, we present the main results of this paper. In Section
\ref{sec:unicast}, we provide an approximate (i.e., scaling)
characterization of the entire balanced unicast capacity region
$\Lambdabuc(n)$ of the wireless network, and in Section
\ref{sec:multicast}, we provide a scaling characterization of the entire
balanced multicast capacity region $\Lambdabmc(n)$.  In Section
\ref{sec:implications}, we discuss implications of these results on the
behavior of the unicast and multicast capacity regions for large values
of $n$. In Section \ref{sec:computation}, we consider computational
aspects.

\subsection{Balanced Unicast Capacity Region}
\label{sec:unicast}

Here we present a scaling characterization of the complete balanced unicast
capacity region $\Lambdabuc(n)$.  

Define
\begin{equation}
    \label{eq:approx_unicast}
    \begin{aligned}
        \hLambdauc(n)
        \defeq \Big\{\lambdauc\in\Rp^{n\times n}:
        & \sum_{u\in V_{\ell,i}(n)}\sum_{w\notin V_{\ell,i}(n)}\lambdauc_{u,w}
        \leq (4^{-\ell}n)^{2-\min\{3,\alpha\}/2} \\
        & \qquad\qquad \forall \ell\in\{1,\ldots,L(n)\}, i\in\{1,\ldots, 4^\ell\}, \\
        & \sum_{w \neq u}(\lambdauc_{u,w}+\lambdauc_{w,u}) \leq 1
        \ \forall u\in V(n)
        \Big\},
    \end{aligned}
\end{equation}
and set
\begin{equation*}
    \hLambdabuc(n) \defeq \hLambdauc(n)\cap\Buc(n).
\end{equation*}
$\hLambdabuc(n)$ is the collection of all balanced unicast traffic matrices
$\lambdauc$ such that for various cuts $S\subset V(n)$ in the network, the
total traffic demand (in either one or both directions)
\begin{align*}
    & \sum_{u\in S}\sum_{w\notin S}\lambdauc_{u,w}, \\
    & \sum_{u\in S}\sum_{w\notin S}(\lambdauc_{u,w}+\lambdauc_{w,u}),
\end{align*}
across the cut $S$ is not too big. Note that the number of cuts $S$ we
need to consider is actually quite small. In fact, there are at most $n$
cuts of the form $S = V_{\ell,i}(n)$ for $\ell\in\{1,\ldots, L(n)\}$,
and there are $n$ cuts of the form $S = \{u\}$ for $u\in V(n)$.  Hence
$\hLambdabuc(n)$ is described by at most $2n$ cuts.

The next theorem shows that $\hLambdabuc(n)$ is approximately (in the
scaling sense) equal to the balanced unicast capacity region
$\Lambdabuc(n)$ of the wireless network.
\begin{theorem}
    \label{thm:unicast}
    Under either fast or slow fading, for any $\alpha>2$, there exist
    \begin{align*}
        b_1(n) & \geq n^{-o(1)}, \\
        b_2(n) & = O(\log^6(n)),
    \end{align*}
    such that
    \begin{equation*}
        b_1(n) \hLambdabuc(n)
        \subset \Lambdabuc(n) 
        \subset b_2(n) \hLambdabuc(n),
    \end{equation*}
    with probability $1-o(1)$ as $n\to\infty$.
\end{theorem}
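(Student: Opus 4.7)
The proof splits naturally into two inclusions: a converse $\Lambdabuc(n) \subset b_2(n)\hLambdabuc(n)$ and an achievability $b_1(n)\hLambdabuc(n) \subset \Lambdabuc(n)$. The plan is to dispatch the converse with cut-set bounds applied only to the $O(n)$ cuts that enter the definition of $\hLambdabuc(n)$, and to establish the achievability through the three-layer architecture previewed in the introduction, where the heavy lifting is the equivalence of the wireless network with a capacitated wireline tree.

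For the converse, I would apply the MIMO cut-set upper bound to each singleton $\{u\}$ and to each grid cell $V_{\ell,i}(n)$ with $\ell \in \{1,\ldots,L(n)\}$. The singleton cuts yield the per-node constraint $\sum_{w\neq u}(\lambdauc_{u,w}+\lambdauc_{w,u}) \leq O(\log n)$ by combining the broadcast and multiple-access Gaussian bounds at a single antenna (under full CSI, in either fading mode). For a grid-cell cut, I would bound $\log\det(I + H_{S,S^c} H_{S,S^c}^*)$ via Hadamard's inequality and estimate the resulting sum of row norms by a geometric counting argument: because the random node placement is regular at scale $L(n)$ with high probability (there are $n^{o(1)}$ nodes in every level-$L(n)$ sub-square), the number of ordered boundary pairs at distance $\approx r$ can be replaced by its deterministic average, and the resulting sum of $r^{-\alpha}$ evaluates to $(4^{-\ell}n)^{2-\min\{3,\alpha\}/2}$ up to a $\mathrm{polylog}(n)$ factor. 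Tracking constants across the $L(n) = \tfrac{1}{2}\log n (1-\log^{-1/2} n)$ scales gives $b_2(n) = O(\log^6 n)$.

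For achievability, I would construct the three-layer scheme. The bottom layer implements distributed MIMO communication between neighboring level-$L(n)$ cells: multi-hop for $\alpha > 3$, and the hierarchical-cooperation scheme of \cite{ozg} adapted to the cell hierarchy for $\alpha \in (2,3]$. Summed over the cells along any level-$\ell$ boundary, this yields an aggregate transport rate matching the MIMO upper bound $(4^{-\ell}n)^{2-\min\{3,\alpha\}/2}$ to within an $n^{-o(1)}$ factor. The middle layer uses these primitives to simulate a capacitated wireline tree whose leaves are the level-$L(n)$ cells and whose internal edges correspond to grid boundaries, with edge capacities scaled to the cut values appearing in \eqref{eq:approx_unicast}. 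The top layer then routes any $\lambdauc \in \hLambdabuc(n)$ on this tree, each source injecting $\lambdauc_{u,w}$ units of flow into the path from its leaf to the destination's leaf.

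The main obstacle is the top-layer routing analysis: I need to show that every tree edge carries flow bounded by its capacity up to a polylogarithmic factor. This is exactly where the $\gamma(n)$-balanced hypothesis becomes indispensable. For a tree edge corresponding to the cut $V_{\ell,i}(n)$, the load in one direction is $\sum_{u \in V_{\ell,i}(n)}\sum_{w \notin V_{\ell,i}(n)} \lambdauc_{u,w}$, which is directly controlled by the first family of inequalities in \eqref{eq:approx_unicast}; the reverse direction is then controlled by \eqref{eq:unicast_balanced} with $\gamma = n^{o(1)}$. Without balance, the reverse-direction load could violate the cut capacity by an arbitrarily large factor, so the construction would fail at precisely these grid-cell edges. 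Together with the per-node constraint (which caps load on tree edges incident to leaves), this produces the claimed $b_1(n) \geq n^{-o(1)}$ after aggregating the polylogarithmic losses from each layer and the fading/CSI quantization overhead.
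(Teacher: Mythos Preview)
Your proposal is correct and follows essentially the same route as the paper: cut-set bounds on singletons and grid cells (the paper's Lemmas~\ref{thm:cutset2} and~\ref{thm:cutset}, assembled in Lemma~\ref{thm:tree_equivalence2_unicast}) for the converse, and the three-layer tree-routing architecture (Lemmas~\ref{thm:tree_equivalence1_unicast} and~\ref{thm:tree_approx}) for achievability, with the $\gamma(n)$-balanced hypothesis invoked precisely to control the reverse-direction load across each grid-cell edge. One small correction worth noting: in the paper's tree $G$ the leaves are the individual wireless nodes $V(n)$, not the level-$L(n)$ cells; the level-$L(n)$ cells sit one level above as the parents of the leaves, and it is the unit-capacity edges from each node $u$ to its parent that carry the per-node constraint you correctly invoke at the end.
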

We point out that Theorem \ref{thm:unicast} holds only with probability
$1-o(1)$ for different reasons for the fast and slow fading cases. Under
fast fading, the theorem holds only for node placements that are
``regular enough''. The node placement itself is random, and we show
that the required regularity property is satisfied with high probability
as $n\to\infty$. Under slow fading, the theorem holds under the same
regularity requirements on the node placement, but now it also only
holds with high probability for the realization of the fading
$\{\theta_{u,v}\}_{u,v}$.

Theorem \ref{thm:unicast} provides a tight scaling characterization of
the entire balanced unicast capacity region $\Lambdabuc(n)$ of the
wireless network as depicted in Figure \ref{fig:general}. The
approximation is within a factor $n^{\pm o(1)}$. This factor can be
further sharpened as is discussed in detail in Section \ref{sec:second}.
\begin{figure}[!ht]
    \begin{center}
        \input{figs/capacity_region_approx.pstex_t}
    \end{center}

    \caption{
    The set $\hLambdabuc(n)$ approximates the balanced unicast capacity region
    $\Lambdabuc(n)$ of the wireless network in the sense that
    $b_1(n)\hLambdabuc(n)$ (with $b_1(n)\geq n^{-o(1)}$) provides
    an inner bound to $\Lambdabuc(n)$ and $b_2(n)\hLambdabuc(n)$
    (with $b_2(n) = O\big(\log^6(n)\big)$) provides an outer bound to
    $\Lambdabuc(n)$. The figure shows two dimensions (namely
    $\lambdauc_{1,2}$ and $\lambdauc_{2,1}$) of the $n^2$-dimensional set
    $\Lambdabuc(n)$.
    }

    \label{fig:general}
\end{figure}

We point out that for large values of path-loss exponent ($\alpha>5$) the
restriction to balanced traffic can be removed, yielding a tight scaling
characterization of the entire $n^2$-dimensional unicast capacity
region $\Lambdauc(n)$. See Section \ref{sec:large} for the details. For
$\alpha\in(2,5]$, bounds on achievable rates for traffic that is not
balanced are discussed in Section \ref{sec:nonbalanced}.

\subsection{Balanced Multicast Capacity Region}
\label{sec:multicast}

We now present an approximate characterization of the complete balanced
multicast capacity region $\Lambdabmc(n)$.  

Define
\begin{equation}
    \label{eq:approx_multicast}
    \begin{aligned}
        \hLambdamc(n)
        \defeq \Big\{\lambdamc\in\Rp^{n\times 2^n}: 
        & \sum_{u\in V_{\ell,i}(n)}
        \sum_{\substack{W\subset V(n):\\ W\setminus V_{\ell,i}(n)\neq\emptyset}}
        \lambdamc_{u,W}
        \leq (4^{-\ell}n)^{2-\min\{3,\alpha\}/2} \\
        & \qquad\qquad \forall \ell\in\{1,\ldots,L(n)\}, i\in\{1,\ldots, 4^\ell\}, \\
        & \sum_{\substack{W\subset V(n):\\ W\setminus\{u\}\neq\emptyset}}\lambdamc_{u,W}
        + \sum_{\tilde{u}\neq u}\sum_{\substack{W\subset V(n):\\ u\in W}}
        \lambdamc_{\tilde{u},W} \leq 1
        \ \forall u\in V(n)
        \Big\},
    \end{aligned}
\end{equation}
and set
\begin{equation*}
    \hLambdabmc(n) \defeq \hLambdamc(n)\cap\Bmc(n).
\end{equation*}
The definition of
$\hLambdabmc(n)$ is similar to the definition of $\hLambdabuc(n)$ in
\eqref{eq:approx_unicast}. $\hLambdabmc(n)$ is the collection of all
balanced multicast traffic matrices $\lambdamc$ such that for various cuts
$S\subset V(n)$ in the network, the total traffic demand (in either one
or both directions)
\begin{align*}
    & \sum_{u\in S} \sum_{\substack{W\subset V(n):\\ W\setminus S\neq\emptyset}}\lambdamc_{u,W}, \\
    & \sum_{u\in S} \sum_{\substack{W\subset V(n):\\ W\setminus S\neq\emptyset}}\lambdamc_{u,W}
    + \sum_{u\notin S} \sum_{\substack{W\subset V(n):\\ W\cap S\neq\emptyset}}\lambdamc_{u,W},
\end{align*}
across the cut $S$ is not too big. Note that, unlike in the definition
of $\hLambdabuc(n)$, we count $\lambda_{u,W}$ as crossing the cut $S$ 
if $u\in S$ and $W\setminus S\neq\emptyset$,
i.e., if there is at least one node $w$ in the multicast destination
group $W$ that lies outside $S$. The number of such cuts $S$ we need to
consider is at most $2n$, as in the unicast case.

The next theorem shows that $\hLambdabmc(n)$ is approximately (in the
scaling sense) equal to the balanced multicast capacity region
$\Lambdabmc(n)$ of the wireless network.
\begin{theorem}
    \label{thm:multicast}
    Under either fast or slow fading, for any $\alpha>2$, there exist
    \begin{align*}
        b_3(n) & \geq n^{-o(1)}, \\
        b_4(n) & = O(\log^6(n)),
    \end{align*}
    such that
    \begin{equation*}
        b_3(n) \hLambdabmc(n)
        \subset \Lambdabmc(n) 
        \subset b_4(n) \hLambdabmc(n),
    \end{equation*}
    with probability $1-o(1)$ as $n\to\infty$.
\end{theorem}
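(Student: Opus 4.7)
The plan is to mirror the argument for Theorem \ref{thm:unicast} and exploit the tree-graph equivalence that underlies it, adapting both the converse and achievability to multicast traffic. The key observation is that the cuts appearing in $\hLambdabmc(n)$ are exactly the same $2n$ cuts as in $\hLambdabuc(n)$, and the MIMO cut-set capacities across them are unchanged. What changes is only how a multicast message $(u,W)$ is counted as crossing a cut $S$: we charge it whenever $u\in S$ and $W\setminus S\neq\emptyset$ (resp.\ $u\notin S$ and $W\cap S\neq\emptyset$).

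For the outer bound $\Lambdabmc(n)\subset b_4(n)\hLambdabmc(n)$, I would apply a standard cut-set bound to each of the $2n$ cuts $S\in\{V_{\ell,i}(n)\}\cup\{\{u\}:u\in V(n)\}$. Any reliable multicast code for $(u,W)$ with $u\in S,\; W\setminus S\neq\emptyset$ forces at least one bit per channel use to cross $S$, because some destination in $W$ lies on the far side of $S$. Hence the total demand charged to $S$ in the definition of $\hLambdabmc(n)$ is upper-bounded by the MIMO cut capacity between $S$ and $S^c$. The same $O(\log^6(n))$ upper bounds on these MIMO capacities proved for the unicast case (Theorem \ref{thm:unicast}) apply verbatim, yielding $b_4(n)=O(\log^6(n))$. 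The per-node constraint in \eqref{eq:approx_multicast} follows from the singleton cuts $S=\{u\}$, again with the same argument as in the unicast proof.

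For the inner bound $b_3(n)\hLambdabmc(n)\subset\Lambdabmc(n)$, I would reuse the three-layer architecture that achieves Theorem \ref{thm:unicast}, changing only the top (routing) layer. The middle and bottom layers already provide a tree abstraction of the wireless network whose edge capacities match, up to an $n^{-o(1)}$ factor, the cut values $(4^{-\ell}n)^{2-\min\{3,\alpha\}/2}$ entering $\hLambdabmc(n)$; these layers are traffic-pattern agnostic and hence carry over directly. At the top layer, instead of routing unicast flows I would route each multicast message $(u,W)$ along the subtree of the hierarchical tree spanning $\{u\}\cup W$, sending only one copy per tree edge and duplicating at branching points. The edge load induced on a tree edge corresponding to cut $V_{\ell,i}(n)$ is then exactly $\sum_{u\in V_{\ell,i}}\sum_{W:W\setminus V_{\ell,i}\neq\emptyset}\lambdamc_{u,W}$ (plus its reverse-direction analogue), which is precisely the quantity controlled by \eqref{eq:approx_multicast} and by the $\gamma(n)$-balance condition \eqref{eq:multicast_balanced}. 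Balancedness is what makes the reverse-direction load on every tree edge comparable to the forward load, so both directions of each tree edge are used within capacity.

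The main obstacle I anticipate is verifying that balanced multicast routing on the tree loads each edge within its capacity, up to the $n^{-o(1)}$ slack. In the unicast proof this step is immediate because each source-destination pair contributes to a single root-to-leaf path; for multicast, one must argue carefully that aggregating one copy of each message across the subtree of $W$ does not inflate the load across an intermediate cut $V_{\ell,i}(n)$ beyond $\sum_{u\in V_{\ell,i}}\sum_{W\setminus V_{\ell,i}\neq\emptyset}\lambdamc_{u,W}$, and that the $\gamma(n)$-balance hypothesis still yields symmetric loads in both directions of each edge up to the factor $\gamma(n)=n^{o(1)}$. Once this tree-routing lemma is in place, combining it with the unchanged middle and bottom layers and with the converse above completes the proof, giving $b_3(n)\geq n^{-o(1)}$ exactly as in Theorem \ref{thm:unicast}. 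The high-probability qualification (regularity of the node placement under fast fading, plus the fading realization under slow fading) is inherited from the unicast result with no additional work.
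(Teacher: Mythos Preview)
Your proposal is correct and follows essentially the same approach as the paper: the outer bound is obtained by cut-set arguments over the same $2n$ cuts (the paper formalizes this via a ``compatible'' unicast matrix obtained by selecting one destination $f(u,W)\in W$ per multicast group, then invoking the unicast cut bounds, but this is the same idea you describe), and the inner bound reuses the three-layer architecture with only the routing layer modified to send each message along the Steiner subtree $G(\{u\}\cup W)$, with the tree-routing feasibility lemma (your ``main obstacle'') proved exactly as you outline using the $\gamma(n)$-balance condition to control the reverse-direction load.
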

As with Theorem \ref{thm:unicast}, Theorem \ref{thm:multicast} holds
only with probability $1-o(1)$ for different reasons for the fast and
slow fading cases. 
Theorem \ref{thm:multicast} implies that the quantity $\hLambdabmc(n)$
determines the scaling of the balanced multicast capacity region
$\Lambdabmc(n)$.  The approximation is up to a factor $n^{\pm o(1)}$ as
in the unicast case, and can again be sharpened (see the discussion in
Section \ref{sec:second}). 
As in the unicast case, for $\alpha > 5$ the restriction of balanced
traffic can be dropped resulting in a scaling characterization of the
entire $n2^n$-dimensional multicast capacity region $\Lambdamc(n)$. The
details can be found in Section \ref{sec:large}. Similarly, we can
obtain bounds on achievable rates for traffic that is not balanced, as
is discussed in Section \ref{sec:nonbalanced}.

\subsection{Implications of Theorems \ref{thm:unicast} and \ref{thm:multicast}}
\label{sec:implications}

Theorems \ref{thm:unicast} and \ref{thm:multicast} can be applied in two
ways. First, the theorems can be used to analyze the asymptotic
achievability of a sequence of traffic matrices. Consider the unicast
case, and let $\{\lambdauc(n)\}_{n\geq 1}$ be a sequence of balanced unicast
traffic matrices with $\lambdauc(n)\in\Rp^{n\times n}$. Define
\begin{align*}
    \rho^{\star}_{\lambdauc}(n) & \defeq \sup\{\rho: \rho\lambdauc(n)\in\Lambdabuc(n)\}, \\
    \hat{\rho}^{\star}_{\lambdauc}(n) & \defeq \sup\{\hat{\rho}: \hat{\rho}\lambdauc(n)\in\hLambdabuc(n)\},
\end{align*}
i.e., $\rho^{\star}_{\lambdauc}(n)$ is the largest multiplier $\rho$ such that
the scaled traffic matrix $\rho\lambdauc(n)$  is contained in
$\Lambdabuc(n)$ (and similar for $\hat{\rho}^{\star}_{\lambdauc}(n)$ with
respect to $\hLambdabuc(n)$). Then Theorem \ref{thm:unicast} provides
asymptotic information about the achievability of
$\{\lambdauc(n)\}_{n\geq 1}$ in the sense that\footnote{We assume here
that the limits exist, otherwise the same statement holds for $\limsup$
and $\liminf$.}
\begin{align*}
    \lim_{n\to\infty}\frac{\log(\rho^{\star}_{\lambdauc}(n))}{\log(n)}
    = \lim_{n\to\infty}\frac{\log(\hat{\rho}^{\star}_{\lambdauc}(n))}{\log(n)}.
\end{align*}
Theorem \ref{thm:multicast} can be used similarly to analyze
sequences of balanced multicast traffic matrices. Several applications of this
approach are explored in Section \ref{sec:examples}. 

Second, Theorems \ref{thm:unicast} and \ref{thm:multicast} provide
information about the shape of the balanced unicast and multicast capacity
regions $\Lambdabuc(n)$ and $\Lambdabmc(n)$. Consider again the unicast
case. We now argue that even though the approximation $\hLambdabuc(n)$ of
$\Lambdabuc(n)$ is only up to $n^{\pm o(1)}$ scaling, its shape is largely
preserved. 

To illustrate this point, consider a rectangle
\begin{equation*}
    R(n)\defeq[0,r_1(n)]\times[0,r_2(n)],
\end{equation*}
and let
\begin{equation*}
    \widehat{R}(n)\defeq[0,\hat{r}_1(n)]\times[0,\hat{r}_2(n)],
\end{equation*}
where
\begin{equation*}
    \hat{r}_i \defeq b_i(n)r_i(n)
\end{equation*}
for some $b_i(n) = n^{\pm o(1)}$, be its approximation. The shape of $R(n)$
is then determined by the ratio between $r_1(n)$ and $r_2(n)$. 
For example, assume $r_1(n)=n^\beta r_2(n)$. Then
\begin{equation*}
    \frac{\hat{r}_1(n)}{\hat{r}_2(n)}
    = n^{\beta\pm o(1)}
    = n^{\pm o(1)}\frac{r_1(n)}{r_2(n)},
\end{equation*}
i.e.,
\begin{equation*}
    \lim_{n\to\infty}\frac{\log\big(r_1(n)/r_2(n)\big)}{\log(n)}
    = \beta
    = \lim_{n\to\infty}\frac{\log\big(\hat{r}_1(n)/\hat{r}_2(n)\big)}{\log(n)},
\end{equation*}
and hence the approximation $\widehat{R}(n)$ preserves the exponent of
the ratio of sidelengths of $R(n)$. In other words, if the two
sidelengths $r_1(n)$ and $r_2(n)$ differ on exponential scale (i.e., by
a factor $n^{\beta}$ for $\beta \neq 0$) then this shape information is
preserved by the approximation $\widehat{R}(n)$.

Let us now return to the balanced unicast capacity region
$\Lambdabuc(n)$ and its approximation $\hLambdabuc(n)$. We consider
several boundary points of $\Lambdabuc(n)$ and show that their behavior
varies at scale $n^\beta$ for various values of $\beta$.  From the
discussion in the previous paragraph, this implies that a significant
part of the shape of $\Lambdabuc(n)$ is preserved by its approximation
$\hLambdabuc(n)$.  First, let $\lambdauc\defeq\rho(n)\bm{1}$ for some
scalar $\rho(n)$ depending only on $n$, and where $\bm{1}$ is the
$n\times n$ matrix of all ones. If $\lambdauc\in\Lambdabuc(n)$ then the
largest achievable value of $\rho(n)$ is $\rho^{\star}(n) \leq
n^{-\min\{3,\alpha\}/2+o(1)}$ (by applying Theorem \ref{thm:unicast}).
Second, let $\lambdauc$ such that $\lambdauc_{u^{\star},w^{\star}}=
\lambdauc_{w^{\star},u^{\star}} = \rho(n)$ for only one
source-destination pair $(u^{\star},w^{\star})$ with $u^{\star}\neq
w^{\star}$ and $\lambdauc_{u,w}=0$ otherwise. Then $\rho^{\star}(n)$,
the largest achievable value of $\rho(n)$, satisfies
$\rho^{\star}(n)\geq n^{-o(1)}$. Hence the boundary points of
$\Lambdabuc(n)$ vary at least from $n^{-\min\{3,\alpha\}/2+o(1)}$ to
$n^{-o(1)}$, and this variation on exponential scale is preserved by
$\hLambdabuc(n)$.

Again, a similar analysis is possible also for the multicast capacity
region, showing that the approximate balanced multicast capacity region
$\hLambdabmc(n)$ preserves the shape of the balanced multicast capacity
region $\Lambdabmc(n)$ on exponential scale.

\subsection{Computational Aspects}
\label{sec:computation}

Since we are interested in large wireless networks, computational
aspects are of importance. In this section, we show that the approximate
characterizations $\hLambdabuc(n)$ and $\hLambdabmc(n)$ in Theorems
\ref{thm:unicast} and \ref{thm:multicast} provide a computationally
efficient approximate description of the balanced unicast and multicast
capacity regions $\Lambdabuc(n)$ and $\Lambdabmc(n)$, respectively.

Consider first the unicast case. Note that $\Lambdabuc(n)$ is a
$n^2$-dimensional set, and hence its shape could be rather complicated. In
particular, in the special cases where the capacity region is known, its
description is often in terms of cut-set bounds. Since there are $2^n$
possible subsets of $n$ nodes, there are $2^n$ possible cut-set bounds
to be considered. In other words, the description complexity of
$\Lambdabuc(n)$ is likely to be growing exponentially in $n$. On the
other hand, as was pointed out in Section \ref{sec:unicast}, the
description of $\hLambdabuc(n)$ is in terms of only $2n$ cuts. This
implies that $\hLambdabuc(n)$ can be computed efficiently (i.e., in
polynomial time in $n$). Hence even though the description complexity of
$\Lambdabuc(n)$ is likely to be of order $\Theta(2^n)$, the description
complexity of its approximation $\hLambdabuc(n)$ is only of order
$\Theta(n)$---an exponential reduction. In particular, this implies
that membership $\lambdauc\in\hLambdabuc(n)$ (and hence by Theorem
\ref{thm:unicast} also the approximate achievability of the balanced
unicast traffic matrix $\lambdauc$) can be computed in polynomial time
in the network size $n$. More precisely, evaluating each of the
$\Theta(n)$ cuts takes at most $\Theta(n^2)$ operations, yielding a
$\Theta(n^3)$-time algorithm for approximate testing of membership in
$\Lambdabuc(n)$.

Consider now the multicast case. $\Lambdabmc(n)$ is a $n
2^n$-dimensional set, i.e., the number of dimensions is exponentially
large in $n$. Nevertheless, its approximation $\hLambdabmc(n)$ can (as
in the unicast case) be computed by evaluating at most $2n$ cuts. This
yields a very compact approximate representation of the balanced
multicast capacity region $\Lambdabmc(n)$ (i.e., we represent a region
of exponential size in $n$ as an intersection of only linearly many
halfspaces---one halfspace corresponding to each cut).  Moreover, it
implies that membership $\lambdamc\in\hLambdabmc(n)$ can be computed
efficiently. More precisely, evaluating each of the $\Theta(n)$ cuts
takes at most $|\{(u,W) : \lambdamc_{u,W} > 0\}|$ operations. Thus
membership $\lambdamc\in\hLambdabmc(n)$ (and hence by Theorem
\ref{thm:multicast} also the approximate achievability of the balanced
multicast traffic matrix $\lambdamc$) can be tested in at most
$\Theta(n)$ times more operations than required to just read the problem
parameters. In other words, we have a linear time (in the length of the
input) algorithm for testing membership of a multicast traffic matrix
$\lambdamc$ in $\hLambdabmc(n)$, and hence for approximate testing of
membership in $\Lambdabmc(n)$. However, this algorithm is not
necessarily polynomial time in $n$, since reading just the input
$\lambdamc\in\Rp^{n\times 2^n}$ itself might take exponential time in
$n$.

\section{Example Scenarios}
\label{sec:examples}

We next illustrate the above results by determining achievable rates in
a few specific wireless network scenarios with non-uniform traffic
patterns. 

\begin{example}
    \emph{Multiple classes of source-destination pairs}
    \label{eg:separation}
    
    There are $K$ classes of source-destination pairs for some fixed
    $K$. Each source node in class $i$ generates traffic at the same
    rate $\rho_i(n)$ for a destination node that is chosen randomly
    within distance $\Theta(n^{\beta_i/2})$, for some fixed $\beta_i \in
    [0,1]$. Each node randomly picks the class it belongs to.  The
    resulting traffic matrix is balanced (with $\gamma(n)=n^{o(1)}$)
    with high probability, and applying Theorem~\ref{thm:unicast} shows
    that $\rho^{\star}_i(n)$, the largest achievable value of $\rho_i(n)$,
    satisfies
    \begin{equation*}
        \rho_i^{\star}(n) = n^{\beta_i (1-\ald/2) \pm o(1)},
    \end{equation*}
    with probability $1-o(1)$ for all $i$, and where 
    \begin{equation}
        \label{eq:alphabar}
        \ald \defeq \min\{3, \alpha\}.
    \end{equation}
    Hence, for a fixed number of classes $K$, source nodes in each class
    can obtain rates as a function of only the source-destination
    separation in that class. 
    
    Set $\tilde{n}_i\defeq n^{\beta_i}$, and note that $\tilde{n}_i$ is
    on the order of the expected number of nodes that are closer to a
    source than its destination. Then 
    \begin{equation*}
        \rho_i^{\star}(n) = n^{\pm o(1)}\tilde{n}_i^{1-\ald/2}.
    \end{equation*}
    Now $\tilde{n}_i^{1-\ald/2}$ is precisely the per-node rate that is
    achievable for an extended network with $\tilde{n}_i$ nodes under
    random source-destination pairing \cite{ozg}. In other words,
    the local traffic pattern here allows us to obtain a rate that is as
    good as the one achievable under random source-destination pairing
    for a much smaller network.
\end{example}

\begin{example}
    \emph{Traffic variation with source-destination separation}
    
    Assume each node is source for exactly one destination, chosen
    uniformly at random from among all the other nodes (as in the
    traditional setting). However, instead of all sources generating
    traffic at the same rate, source node $u$ generates traffic at a rate
    that is a function of its separation from destination $w$, i.e., the
    traffic matrix is given by $\lambdauc_{u,w} = \psi(r_{u,w})$ for
    some function $\psi$. In particular, let us consider 
    \begin{equation*}
        \psi(r) \defeq
        \rho(n)\times
        \begin{cases}
            r^\beta &  \text{if } r \geq 1, \\
            1 & \text{else},
        \end{cases}
    \end{equation*}
    for some fixed $\beta\in\R$ and some $\rho(n)$ depending only on
    $n$. The traditional setting corresponds to $\beta=0$,
    in which case all $n$ source-destination pairs communicate at
    uniform rate. 
    
    While such traffic is not balanced for small values of $\beta$, the
    results in Section \ref{sec:nonbalanced}, extending
    Theorem~\ref{thm:unicast} to traffic that is not balanced, can be used to
    establish the scaling of $\rho^{\star}(n)$, the largest achievable
    value of $\rho(n)$, as
    \begin{equation*}
        \rho^{\star}(n) = 
        \begin{cases}
            n^{1-(\ald + \beta)/2 \pm o(1)} & \text{if $\beta \geq 2-\ald$}, \\
            n^{\pm o(1)} & \text{else},
        \end{cases}
    \end{equation*}
    with probability $1-o(1)$. For $\beta = 0$, and noting that
    $2-\ald\leq 0$, this recovers the results from \cite{ozg} for
    random source-destination pairing with uniform rate. 
\end{example}

\begin{example}
    \emph{Sources with multiple destinations}
    
    All the example scenarios so far are concerned with traffic in which
    each node is source exactly once. Here we consider more general
    traffic patterns. There are $K$ classes of source nodes, for some
    fixed $K$. Each source node in class $i$ has $\Theta(n^{\beta_i})$
    destination nodes for some fixed $\beta_i \in [0,1]$ and generates
    independent traffic at the same rate $\rho_i(n)$ for each of them
    (i.e., we still consider unicast traffic). Each of these destination
    nodes is chosen uniformly at random among the $n-1$ other nodes.
    Every node randomly picks the class it belongs to.  Noting that the
    resulting traffic matrix is balanced with high
    probability, Theorem~\ref{thm:unicast} provides the following
    scaling of the rates achievable by different classes:
    \begin{equation*}
        \rho_i^{\star}(n) = n^{1-\beta_i - \ald/2 \pm o(1)},
    \end{equation*}
    with probability $1-o(1)$ for all $i$. In other words, for each
    source node time sharing between all $K$ classes and then (within
    each class) between all its $\Theta(n^{\beta_i})$ destination nodes
    is order-optimal in this scenario. However, different sources are
    operating simultaneously.
\end{example}

\begin{example}
    \label{eg:multicast2}
    \emph{Broadcast}

    Consider a scenario with every node $u$ in the network broadcasting
    an independent message to all other nodes at rate $\rho(n)
    \lambda_u$.  In other words, we have a multicast traffic matrix of
    the form
    \begin{equation*}
        \lambdamc_{u,W} =
        \begin{cases}
            \rho(n)\lambda_u & \text{if $W=V(n)$}, \\
            0 & \text{else},
        \end{cases}
    \end{equation*}
    for some $\rho(n)>0$. Applying the generalization in Section
    \ref{sec:nonbalanced} of Theorem \ref{thm:multicast} yields that
    $\rho^{\star}(n)$, the largest achievable $\rho(n)$, satisfies
    \begin{equation*}
        \rho^{\star}(n) = n^{\pm o(1)}\frac{1}{\sum_{u\in V(n)}\lambda_u}
    \end{equation*}
    as $n\to\infty$.
\end{example}

\section{Communication Schemes}
\label{sec:schemes}

In this section, we provide a high-level description of the
communication schemes used to prove achievability (i.e., the inner
bound) in Theorems \ref{thm:unicast} and \ref{thm:multicast}. In Section
\ref{sec:schemes_unicast}, we present a communication scheme for general
unicast traffic, in Section \ref{sec:schemes_multicast} we show how this
scheme can be adapted for general multicast traffic. Both schemes use as
a building block a communication scheme introduced in prior work for a
particular class of traffic, called \emph{uniform permutation traffic}.
In such uniform permutation traffic, each node in the network is source
and destination exactly once, and all these source-destination pairs
communicate at equal rate. For $\alpha\in (2,3]$, the order-optimal
scheme for such uniform permutation traffic (called \emph{hierarchical
relaying scheme} in the following) enables global cooperation in the
network. For $\alpha > 3$, the order-optimal scheme is multi-hop
routing.  We recall these two schemes for uniform permutation traffic in
Section \ref{sec:schemes_hr}.

\subsection{Communication Scheme for Unicast Traffic}
\label{sec:schemes_unicast}

In this section, we present a scheme to transmit general unicast
traffic. This scheme has a tree structure that makes it convenient to
work with. This tree structure is crucial in proving the compact
approximation of the balanced unicast capacity region $\Lambdabuc(n)$ in
Theorem \ref{thm:unicast}.

The communication scheme consists of three layers: A top or routing
layer, a middle or cooperation layer, and a bottom or physical layer.
The routing layer of this scheme treats the wireless network as a tree
graph $G$ and routes messages between sources and their
destinations---dealing with heterogeneous traffic demands. The
cooperation layer of this scheme provides this tree abstraction $G$ to
the top layer by appropriately distributing and concentrating traffic
over the wireless network---choosing the level of cooperation in the
network. The physical layer implements this distribution and
concentration of messages in the wireless network---dealing with
interference and noise. 

Seen from the routing layer, the network consists of a noiseless
capacitated graph $G$. This graph is a tree, whose leaf nodes are
the nodes $V(n)$ in the wireless network. The internal nodes of $G$
represent larger clusters of nodes (i.e., subsets of $V(n)$) in the
wireless network. More precisely, each internal node in $G$ represents a
set $V_{\ell,i}(n)$ for $\ell\in\{1,\ldots,L(n)\}$ and $i\in\{1,\ldots,
4^\ell\}$. Consider two sets $V_{\ell,i}(n), V_{\ell+1,j}(n)$ and let
$\nu, \mu$ be the corresponding internal nodes in $G$. Then $\nu$ and
$\mu$ are connected by an edge in $G$ if $V_{\ell+1,j}(n)\subset
V_{\ell,i}(n)$. Similarly, for $V_{L(n),i}(n)$ and corresponding
internal node $\nu$ in $G$, a leaf node $u$ in $G$ is connected by an
edge to $\nu$ if $u\in V_{L(n),i}(n)$ (recall that the leaf nodes of $G$
are the nodes $V(n)$ in the wireless network). This construction is
shown in Figure \ref{fig:grid_graph}.
\begin{figure}[!ht]
    \begin{center}
        \scalebox{0.888}{\input{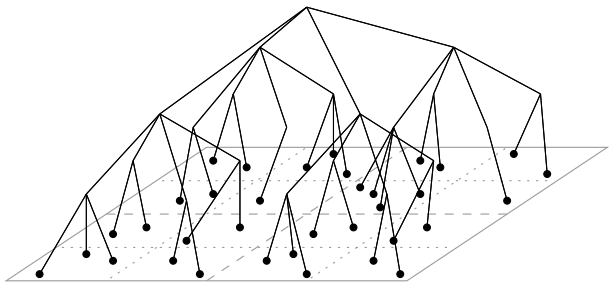}}
    \end{center}

    \caption{Construction of the tree graph $G$. We consider the same
    nodes as in Figure \ref{fig:grid} with $L(n) = 2$. The leaves of $G$
    are the nodes $V(n)$ of the wireless network. They are always at
    level $\ell=L(n)+1$ (i.e., $3$ in this example). At level $0\leq\ell\leq
    L(n)$ in $G$, there are $4^\ell$ nodes. The tree structure is the
    one induced by the grid decomposition $\{V_{\ell,i}(n)\}_{\ell,i}$ as
    shown in Figure \ref{fig:grid}. Level $0$ contains the root node of $G$.
    }

    \label{fig:grid_graph}
\end{figure}
In the routing layer, messages are sent from each source to its 
destination by routing them over $G$. To send information along an
edge of $G$, the routing layer calls upon the cooperation layer. 

The cooperation layer implements the tree abstraction $G$. This is done
by ensuring that whenever a message is located at a node in $G$, it is
evenly distributed over the corresponding cluster in the wireless
network, i.e., every node in the cluster has access to a distinct part
of equal length of the message. To send information from a child node to
its parent in $G$ (i.e., towards the root node of $G$), the message at
the cluster in $V(n)$ represented by the child node is distributed
evenly among all nodes in the bigger cluster in $V(n)$ represented by
the parent node.  More precisely, let $\nu$ be a child node of $\mu$ in
$G$, and let $V_{\ell+1,i}(n), V_{\ell,j}(n)$ be the corresponding subsets
of $V(n)$. Consider the cooperation layer being called by the routing
layer to send a message from $\nu$ to its parent $\mu$ over $G$. In the
wireless network, we assume each node in $V_{\ell+1,i}(n)$ has access to
a distinct $1/\card{V_{\ell+1,i}(n)}$ fraction of the message to be
sent. Each node in $V_{\ell+1,i}(n)$ splits its message part into four
distinct parts of equal length. It keeps one part for itself and sends
the other three parts to three nodes in $V_{\ell,j}(n)\setminus
V_{\ell+1,i}(n)$. After each node in $V_{\ell+1,i}(n)$ has sent its
message parts, each node in $V_{\ell,j}(n)$ now as access to a distinct
$1/\card{V_{\ell,j}(n)}$ fraction of the message. To send information
from a parent node to a child node in $G$ (i.e., away from the root node
of $G$), the message at the cluster in $V(n)$ represented by the parent
node is concentrated on the cluster in $V(n)$ represented by the child
node. More precisely, consider the same nodes $\nu$ and $\mu$ in $G$
corresponding to $V_{\ell+1,i}(n)$ and $V_{\ell,j}(n)$ in $V(n)$.
Consider the cooperation layer being called by the routing layer to send
a message from $\mu$ to its child $\nu$. In the wireless network, we
assume each node in $V_{\ell,j}(n)$ has access to a distinct
$1/\card{V_{\ell,j}(n)}$ fraction of the message to be sent.  Each node
in $V_{\ell,j}(n)$ sends its message part to another node in
$V_{\ell+1,i}(n)$. After each node in $V_{\ell,j}(n)$ has sent its
message part, each node in $V_{\ell+1,i}(n)$ now as access to a distinct
$1/\card{V_{\ell+1,i}(n)}$ fraction of the message.  To implement this
distribution and concentration of messages, the cooperation layer calls
upon the physical layer.

The physical layer performs the distribution and concentration of
messages. Note that the traffic induced by the cooperation layer in the
physical layer is very regular, and closely resembles a uniform
permutation traffic (in which each node in the wireless network is
source and destination once and all these source-destination pairs want
to communicate at equal rate). Hence we can use either cooperative
communication (for $\alpha\in(2,3]$) or multi-hop communication (for
$\alpha>3$) for the transmission of this traffic. See Section
\ref{sec:schemes_hr} for a detailed description of these two schemes. It
is this operation in the physical layer that determines the edge
capacities of the graph $G$ as seen from the routing layer.

The operation of this three-layer architecture is illustrated in the
following example.
\begin{example}
    Consider a single source-destination pair $(u,w)$. The corresponding
    operation of the three-layer architecture is depicted in Figure
    \ref{fig:layers}. 
    \begin{figure}[!ht]
        \begin{center}
            \scalebox{0.888}{\input{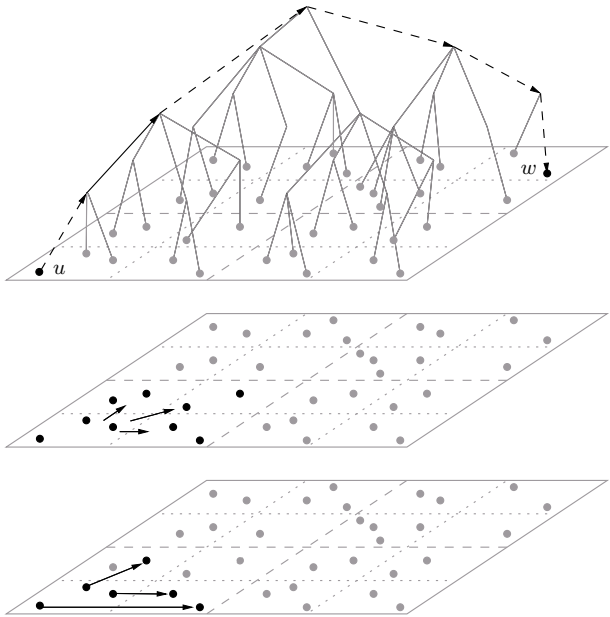}}
        \end{center}

        \caption{Example operation of the three-layer architecture under
        unicast traffic. The three layers depicted are (from top to bottom
        in the figure) the routing layer, the cooperation layer, and the
        physical layer.}

        \label{fig:layers}
    \end{figure}

    In the routing layer, the message is routed over the tree graph $G$
    between $u$ and $w$ (indicated in black in the figure). The middle
    plane in the figure shows the induced behavior from using the second
    edge along this path (indicated in solid black in the figure) in the
    cooperation layer.  The bottom plane in the figure shows (part of)
    the corresponding actions induced in the physical layer. Let us now
    consider the specific operations of the three layers for the single
    message between $u$ and $w$. Since $G$ is a tree, there is a unique
    path between $u$ and $w$, and the routing layer sends the message
    over the edges along this path. Consider now the first such edge.
    Using this edge in the routing layer induces the following actions
    in the cooperation layer. The node $u$, having access to the entire
    message, splits that message into $3$ distinct parts of equal
    length. It keeps one part, and sends the other two parts to the two
    other nodes in $V_{2,1}(n)$ (i.e., lower left square at level
    $\ell=2$ in the hierarchy). In other words, after the message has
    traversed the edge between $u$ and its parent node in the routing
    layer, all nodes in $V_{2,1}(n)$ in the cooperation layer have
    access to a distinct $1/3$ fraction of the original message. The
    edges in the routing layer leading up the tree (i.e., towards the
    root node) are implemented in the cooperation layer in a similar
    fashion by further distributing the message over the wireless
    network. By the time the message reaches the root node of $G$ in the
    routing layer, the cooperation layer has distributed the message
    over the entire network and every node in $V(n)$ has access to a
    distinct $1/n$ fraction of the original message. Communication down
    the tree in the routing layer is implemented in the cooperation
    layer by concentrating messages over smaller regions in the wireless
    network.  To physically perform this distribution and concentration
    of messages, the cooperation layer calls upon the physical layer,
    which uses either hierarchical relaying or multi-hop communication.
\end{example}

\subsection{Communication Scheme for Multicast Traffic}
\label{sec:schemes_multicast}

Here we show that the same communication scheme presented in the last
section for general unicast traffic can also be used to transmit general
multicast traffic. Again it is the tree structure of the scheme that is
critically exploited in the proof of Theorem \ref{thm:multicast}
providing an approximation for the balanced multicast capacity region
$\Lambdabmc(n)$.

We will use the same three-layer architecture as for unicast traffic
presented in Section \ref{sec:schemes_unicast}. To accommodate multicast
traffic, we only modify the operation of the top or routing layer; the
lower layers operate as before. 

We now outline how the routing layer needs to be adapted for the
multicast case.  Consider a multicast message that needs to be
transmitted from a source node $u\in V(n)$ to its set of intended
destinations $W\subset V(n)$. In the routing layer, we want to route
this message from $u$ to $W$ over $G$. Since $G$ is a tree,
the routing part is simple. In fact, between $u$ and every $w\in W$
there exists a unique path in $G$. Consider the union of all those
paths. It is easy to see that this union is a subtree of $G$. Indeed, it
is the smallest subtree of $G$ that covers $\{u\}\cup W$. Traffic is
optimally routed over $G$ from $u$ to $W$ by sending it along the edges
of this subtree. 

The next example illustrates the operation of the routing layer under
multicast traffic.
\begin{example}
    Consider one source node $u$ and the corresponding multicast group
    $W\defeq\{w_1,w_2,w_3\}$ as shown in Figure \ref{fig:grid_graph_example}.
    \begin{figure}[!ht]
        \begin{center}
            \scalebox{0.888}{\input{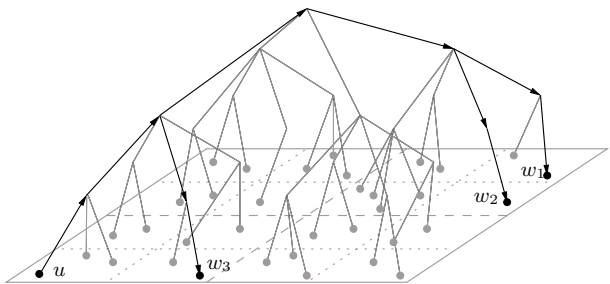}}
        \end{center}

        \caption{Example operation of the routing layer in the three-layer
        architecture under multicast traffic.}

        \label{fig:grid_graph_example}
    \end{figure}

    In the routing layer, we find the smallest subgraph $G(\{u\}\cup W)$
    covering $\{u\}\cup W$ (indicated by black lines in Figure
    \ref{fig:grid_graph_example}).  Messages are sent from the source to
    its destinations by routing them along this subgraph.  In other
    words, $G(\{u\}\cup W)$ is the multicast tree along which the
    message is sent from $u$ to $W$. The cooperation layer and physical
    layer operate in the same way as for unicast traffic (see Figure
    \ref{fig:layers} for an example).  
\end{example}

\subsection{Communication Schemes for Uniform Permutation Traffic}
\label{sec:schemes_hr}

Here we recall communication schemes for uniform permutation traffic on
$A(n)$, i.e., each node is source and destination exactly once and all
these $n$ pairs communicate at uniform rate. As pointed out in Sections
\ref{sec:schemes_unicast} and \ref{sec:schemes_multicast}, these
communication schemes are used as building blocks in the communication
architecture for general unicast and multicast traffic.

The structure of the optimal communication scheme depends drastically on
the path-loss exponent $\alpha$. For $\alpha\in(2,3]$ (small path-loss
exponent), cooperative communication on a global scale is necessary to
achieve optimal performance. For $\alpha >3$ (large path-loss exponent),
local communication between neighboring nodes is sufficient, and traffic
is routed in a multi-hop fashion from the source to the destination. We
will refer to the order-optimal scheme for $\alpha\in(2,3]$ as
\emph{hierarchical relaying scheme}, and to the order optimal scheme for
$\alpha > 3$ as \emph{multi-hop scheme}. For a uniform permutation traffic on
$V(n)$, hierarchical relaying achieves a per-node rate of
$n^{1-\alpha/2-o(1)}$; multi-hop communication achieves a per-node rate
of $n^{-1/2-o(1)}$. By choosing the appropriate scheme (hierarchical
relaying for $\alpha\in(2,3]$, multi-hop for $\alpha>3$), we can thus
achieve a per-node rate of $n^{1-\min\{3,\alpha\}/2-o(1)}$.  We provide
a short description of the hierarchical relaying scheme in the
following. The details can be found in \cite{nie}.

Consider $n$ nodes placed independently and uniformly at random on $A(n)$.
Divide $A(n)$ into
\begin{equation*}
    n^{\tfrac{1}{1+\log^{1/3}(n)}}
\end{equation*}
squarelets of equal size. Call a squarelet \emph{dense}, if it contains
a number of nodes proportional to its area. For each source-destination
pair, choose such a dense squarelet as a \emph{relay}, over which it
will transmit information (see Figure \ref{fig:relay}).

\begin{figure}[!ht]
    \begin{center}
        \input{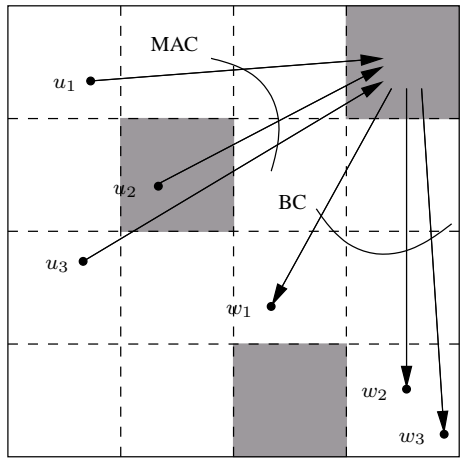}
    \end{center}

    \caption{Sketch of one level  of the hierarchical relaying scheme. Here
    $\{(u_i,w_i)\}_{i=1}^3$ are three source-destination pairs.  Groups
    of source-destination pairs relay their traffic over dense
    squarelets (shaded), which contain a number of nodes proportional to their
    area.  We time share between the different relay
    squarelets. Within each relay squarelet the scheme is used
    recursively to enable joint decoding and encoding at the relay.
    }

    \label{fig:relay}
\end{figure}

Consider now one such relay squarelet and the nodes that are
transmitting information over it.  If we assume for the moment that the
nodes within the relay squarelets could cooperate, then between the
source nodes and the relay squarelet we would have a multiple access
channel (MAC), where each source node has one transmit antenna, and
the relay squarelet (acting as one node) has many receive antennas.
Between the relay squarelet and the destination nodes, we would have a
broadcast channel (BC), where each destination node has one receive
antenna, and the relay squarelet (acting again as one node) has many
transmit antennas.  The cooperation gain from using this kind of scheme
arises from the use of multiple antennas for this MAC and BC.

To actually enable this kind of cooperation at the relay squarelet,
local communication within the relay squarelets is necessary. It can be
shown that this local communication problem is actually the same as the
original problem, but at a smaller scale. Indeed, we are now considering
a square of size
\begin{equation*}
    n^{1-\tfrac{1}{1+\log^{1/3}(n)}}
\end{equation*}
with equal number of nodes (at least order wise). Hence we can use the
same scheme recursively to solve this subproblem. We terminate the
recursion after $\log^{1/3}(n)$ iterations, at which point we use simple
time-division multiple access (TDMA) to bootstrap the scheme.

Observe that at the final level of the scheme, we have divided $A(n)$
into
\begin{equation*}
    \Big(n^{\tfrac{1}{1+\log^{1/3}(n)}}\Big)^{ \log^{1/3}(n)}
    = n^{\tfrac{1}{1+\log^{-1/3}(n)}}
\end{equation*}
squarelets. A sufficient condition for the scheme to succeed is that all
these squarelets are dense (i.e., contain a number of nodes proportional
to their area). However much weaker conditions are sufficient as well,
see \cite{nie}.

For any permutation traffic, the per-node rate achievable with this
scheme is at least $n^{1-\alpha/2-o(1)}$
for any $\alpha>2$ and under fast fading. Under slow fading the same
per-node rate is achievable for all permutation traffic with probability
at least
\begin{equation*}
    1-\exp\Big(-2^{\Omega(\log^{2/3}(n))}\Big).
\end{equation*}
Moreover, when $\alpha\in(2,3]$ and for uniform permutation traffic with
a constant fraction of source-destination pairs at distance
$\Theta(\sqrt{n})$ (as is the case with high probability if the
permutation traffic is chosen at random), this is asymptotically the
best uniformly achievable per-node rate.

\section{Auxiliary Lemmas}
\label{sec:aux}

In this section, we provide auxiliary results, which will be used
several times in the following. These results are grouped into three
parts. In Section \ref{sec:aux_regular}, we describe regularity
properties exhibited with high probability by the random node placement.
In Section \ref{sec:aux_converse}, we provide auxiliary upper bounds on
the performance of any scheme in terms of cut-set bounds. Finally, in
Section \ref{sec:aux_achievability}, we describe auxiliary results on
the performance of hierarchical relaying and multi-hop communication as
described in Section \ref{sec:schemes_hr}.

\subsection{Regularity Lemmas}
\label{sec:aux_regular}

Here we prove several regularity properties that are satisfied with high
probability by a random node placement. Formally, define $\mc{V}(n)$ to
be the collection of all node placements $V(n)$ that satisfy the
following conditions:
\begin{align*}
    r_{u,v} & > n^{-1}
    & \text{for all $u,v\in V(n)$,} \\
    \big\lvert V_{\ell,i}(n) \big\rvert & \leq \log(n) 
    & \text{for $\ell= \frac{1}{2}\log(n)$,} \\
    \big\lvert V_{\ell,i}(n) \big\rvert & \geq 1 
    & \text{for $\ell=\frac{1}{2}\log\Big(\frac{n}{2\log(n)}\Big)$,} \\
    \big\lvert V_{\ell,i}(n) \big\rvert & \in [4^{-\ell-1}n,4^{-\ell+1}n] 
    & \text{for all $\ell \in\Big\{1,\ldots,L'(n)\Big\}$,}
\end{align*}
where 
\begin{equation*}
        L'(n)\defeq \frac{1}{2}\log(n)\big(1-\tfrac{1}{2}\log^{-5/6}(n)\big),
\end{equation*}
and in each case $i\in\{1,\ldots,4^\ell\}$.
The first condition is that the minimum distance between node pairs is
not too small. The second condition is that all squares of area $1$
contain at most $\log(n)$ nodes. The third condition is that all squares
of area $2\log(n)$ contain at least one node. The fourth condition is
that all squares up to level
$\frac{1}{2}\log(n)\big(1-\tfrac{1}{2}\log^{-5/6}(n)\big)$ contain a number of nodes
proportional to their area. Note that, since
\begin{align*}
    L(n) & = \frac{1}{2}\log(n)\big(1-\log^{-1/2}(n)\big) \\
    & = \frac{1}{2}\log(n)\big(1-\tfrac{1}{2}\log^{-5/6}(n)\big),
\end{align*}
this holds in particular for nodes up to level $L(n)$.  The goal of this
section is to prove that
\begin{equation*}
    \Pp(V(n)\in\mc{V}(n)) = 1-o(1),
\end{equation*}
as $n\to\infty$. 

The first lemma shows that the minimum distance in a
random node placement is at least $n^{-1}$ with high probability.
\begin{lemma}
    \label{thm:dmin}
    \begin{equation*}
        \Pp\Big(\min_{u\in V(n), v\in V(n)\setminus\{u\} } r_{u,v} > n^{-1}\Big) = 1-o(1),
    \end{equation*}
    as $n\to\infty$.
\end{lemma}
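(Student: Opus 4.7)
The plan is a straightforward union bound over the $\binom{n}{2}$ unordered pairs of nodes. Since each node is placed uniformly and independently in $A(n)=[0,\sqrt n]^2$, the conditional distribution of $v$ given $u$ is still uniform on $A(n)$, so the probability that $v$ lands within distance $n^{-1}$ of $u$ is at most the area of a disk of radius $n^{-1}$ divided by the area of $A(n)$, namely at most $\pi n^{-2}/n = \pi n^{-3}$.

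Concretely, I would first write
\begin{equation*}
    \Pp(r_{u,v} \leq n^{-1})
    = \int_{A(n)} \frac{1}{n}\,\Pp\big(v \in B(u, n^{-1}) \,\big|\, u\big)\,du
    \leq \frac{\pi n^{-2}}{n},
\end{equation*}
where $B(u,n^{-1})$ is the Euclidean ball of radius $n^{-1}$ around $u$, and the bound uses $|B(u,n^{-1})\cap A(n)|\leq \pi n^{-2}$. Then I would apply the union bound over all ordered pairs $u\neq v$ in $V(n)$:
\begin{equation*}
    \Pp\Big(\min_{u\in V(n),\, v\in V(n)\setminus\{u\}} r_{u,v} \leq n^{-1}\Big)
    \leq n(n-1) \cdot \pi n^{-3}
    \leq \pi n^{-1}
    = o(1),
\end{equation*}
as $n\to\infty$. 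Taking complements gives the desired bound.

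There is essentially no obstacle here; the only mild care needed is to note that the disk around $u$ of radius $n^{-1}$ may extend outside $A(n)$, but since we only need an upper bound on the intersection area, the full-disk area $\pi n^{-2}$ suffices. No concentration inequality is required, just a crude first-moment/union bound argument. The resulting decay is actually polynomial ($O(n^{-1})$), which is far stronger than the $o(1)$ the lemma requires, so there is ample slack.
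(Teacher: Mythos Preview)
Your proof is correct and follows essentially the same approach as the paper: a union bound over all node pairs together with the area estimate $\Pp(r_{u,v}\leq r)\leq \pi r^2/n$, yielding a bound of order $n r^2\pi = \pi/n$ at $r=n^{-1}$. The paper's version routes the argument through conditional independence of the events $\{r_{u,v}\leq r\}_{v\neq u}$ given $u$, but then simplifies via $1-(1-x)^n\leq nx$, which collapses to exactly your direct union bound.
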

\begin{proof}
    For $u,v\in V$, let
    \begin{equation*}
        B_{u,v} \defeq \{r_{u,v} \leq r\}
    \end{equation*}
    for some $r$ (depending only on $n$). Fix a node $u\in V$, then for
    $v\neq u$
    \begin{equation*}
        \Pp(B_{u,v}| u )
        \leq \frac{r^2\pi}{n}
    \end{equation*}
    (the inequality being due to boundary effects). Moreover, the events
    $\{B_{u,v}\}_{v\in V\setminus\{u\}}$ are independent conditioned on
    $u$, and thus
    \begin{align*}
        \Pp\Big(\cap_{v\in V\setminus\{u\}}B_{u,v}^c\big\vert u\Big)
        & = \prod_{v\in V\setminus\{u\}}\Pp(B_{u,v}^c|u) \\
        & \geq \Big(1-\frac{r^2\pi}{n}\Big)^n.
    \end{align*}
    From this,
    \begin{align*}
        \Pp\Big(\min_{u\in V,v\in V\setminus\{u\}} r_{u,v} \leq r \Big)  
        & = \Pp\Big(\cup_{u\in V,v\in V\setminus\{u\}} B_{u,v} \Big) \\
        & \leq \sum_{u\in V} \Pp\Big(\cup_{v\in V\setminus\{u\}} B_{u,v} \Big) \\
        & = \sum_{u\in V} \bigg(1-\Pp\Big(\cap_{v\in V\setminus\{u\}} B_{u,v}^c\Big) \bigg) \\
        & = \sum_{u\in V} \bigg(1-\E\Big(\Pp\Big(\cap_{v\in V\setminus\{u\}} B_{u,v}^c\big\vert u \Big) \Big) \bigg) \\
        & \leq \sum_{u\in V} \Big(1-\Big(1-\frac{r^2\pi}{n}\Big)^n \Big) \\
        & = n\Big(1-\Big(1-\frac{r^2\pi}{n}\Big)^n \Big).
    \end{align*}
    Assuming $r < \sqrt{n/\pi}$, we have
    \begin{equation*}
        n\Big(1-\Big(1-\frac{r^2\pi}{n}\Big)^n \Big)
        \leq nr^2\pi,
    \end{equation*}
    and hence
    \begin{equation*}
        \Pp\Big(\min_{u\in V,v\in V\setminus\{u\}} r_{u,v} \leq r \Big) 
        \leq nr^2\pi,
    \end{equation*}
    which converges to zero for $r= n^{-1}$.
\end{proof}

The next lemma asserts that if $\widetilde{L}(n)$ is not too large
then all squares $\{V_{\ell,i}(n)\}_{\ell,i}$ for
$\ell\in\{1,\ldots,\widetilde{L}(n)\}$ and $i\in\{1,\ldots, 4^{\ell}\}$
in the grid decomposition of $V(n)$ contain a number of nodes that is
proportional to their area.

\begin{lemma}
    \label{thm:regular}
    If $\widetilde{L}(n)$ satisfies
    \begin{equation*}
        \lim_{n\to\infty}\frac{\widetilde{L}(n)}{4^{-\widetilde{L}(n)}n}=0
    \end{equation*}
    then
    \begin{equation*}
        \Pp\bigg(
        \bigcap_{\ell=1}^{\widetilde{L}(n)} \bigcap_{i=1}^{4^{\ell}}
        \big\{\card{V_{\ell,i}(n)}\in [4^{-\ell-1}n,4^{-\ell+1}n]\big\}
        \bigg)
        = 1-o(1)
    \end{equation*}
    as $n\to\infty$. In particular, this holds for
    \begin{equation*}
        \widetilde{L}(n)=\frac{1}{2}\log(n)\big(1-\tfrac{1}{2}\log^{-5/6}(n)\big),
    \end{equation*}
    and for $\widetilde{L}(n)=L(n)$.
\end{lemma}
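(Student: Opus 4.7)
The plan is to reduce everything to concentration of binomial random variables followed by a double union bound. For fixed $\ell$ and $i$, since the $n$ nodes are placed i.i.d.\ uniformly on $A(n)$ and $A_{\ell,i}(n)$ has area fraction $4^{-\ell}$, the cardinality $\card{V_{\ell,i}(n)}$ is distributed as $\mathrm{Binomial}(n,4^{-\ell})$ with mean $\mu_\ell \defeq 4^{-\ell}n$. The target event $\card{V_{\ell,i}(n)} \in [\mu_\ell/4,\, 4\mu_\ell]$ is a deviation by a constant multiplicative factor from the mean, which is exactly the regime where a Chernoff bound is sharpest.

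Concretely, I would apply the standard multiplicative Chernoff bounds to get two constants $c_1,c_2>0$ with
\begin{equation*}
    \Pp\bigl(\card{V_{\ell,i}(n)} < \mu_\ell/4\bigr) \leq \exp(-c_1\mu_\ell),
    \qquad
    \Pp\bigl(\card{V_{\ell,i}(n)} > 4\mu_\ell\bigr) \leq \exp(-c_2\mu_\ell),
\end{equation*}
so a union bound over the $4^\ell$ subsquares at level $\ell$ and then over $\ell\in\{1,\ldots,\widetilde{L}(n)\}$ gives
\begin{equation*}
    \Pp\bigl(\text{bad event}\bigr) \leq \sum_{\ell=1}^{\widetilde{L}(n)} 2\cdot 4^\ell \exp(-c\,4^{-\ell}n)
\end{equation*}
with $c\defeq\min\{c_1,c_2\}$. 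The summand, viewed as a function of $\ell$, is increasing in $\ell$ and its ratio between consecutive terms decays doubly-exponentially in $4^{-\ell}n$, so the sum is dominated (up to a factor of, say, $2$) by its last term $2\cdot 4^{\widetilde{L}(n)}\exp(-c\,4^{-\widetilde{L}(n)}n)$. Writing $m\defeq 4^{-\widetilde{L}(n)}n$ and noting that $4^{\widetilde{L}(n)}=n/m$, the bound becomes $O\bigl((n/m)\exp(-cm)\bigr)$, and using $\log(n/m)=\widetilde{L}(n)\log 4$ together with the hypothesis $\widetilde{L}(n)/m\to 0$ yields $\log(n/m)-cm\to-\infty$, so the probability tends to zero.

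The main technical point—and the place where one must be careful—is justifying that the geometric-times-exponential sum really is controlled by its largest term; this needs the observation that $c\,4^{-\ell}n$ grows without bound as $\ell$ decreases, so the ratio of consecutive summands goes to zero, which in turn requires that $4^{-\widetilde{L}(n)}n\to\infty$ (a consequence of the stated hypothesis).

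Finally, for the two specific choices of $\widetilde{L}(n)$ mentioned in the lemma, I would just verify the hypothesis by direct substitution. With $\widetilde{L}(n)=\tfrac{1}{2}\log(n)\bigl(1-\tfrac{1}{2}\log^{-5/6}(n)\bigr)$ one has $4^{-\widetilde{L}(n)}n = n^{\frac{1}{2}\log^{-5/6}(n)} = 2^{\frac{1}{2}\log^{1/6}(n)}$, which dominates $\widetilde{L}(n)=O(\log n)$ super-polynomially in $\log n$; the case $\widetilde{L}(n)=L(n)$ is analogous with $4^{-L(n)}n = 2^{\log^{1/2}(n)}$. Both ratios go to zero, so the general statement applies and completes the lemma.
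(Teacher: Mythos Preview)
Your proof is correct and follows the same core strategy as the paper: Chernoff bound on $\card{V_{\ell,i}}\sim\mathrm{Binomial}(n,4^{-\ell})$, then a union bound, then verification of the two specific choices of $\widetilde{L}(n)$. The one difference is that you union-bound over all levels $\ell\in\{1,\ldots,\widetilde{L}(n)\}$ and then argue the resulting sum is dominated by its last term, whereas the paper avoids this sum entirely by observing that the events are nested: because each square at level $\ell-1$ is the disjoint union of exactly four squares at level $\ell$, the inclusion $\card{V_{\widetilde{L}(n),i}}\in[4^{-\widetilde{L}(n)-1}n,4^{-\widetilde{L}(n)+1}n]$ for all $i$ propagates automatically to all coarser levels, so only the deepest level needs to be controlled. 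This yields the same dominant term $4^{\widetilde{L}(n)}\exp(-K\,4^{-\widetilde{L}(n)}n)$ that you extract, but without the extra step of bounding the sum. Your route is perfectly valid; the paper's is just slightly cleaner here because the multiplicative tolerance factor of $4$ was chosen precisely to make the nesting work.
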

\begin{proof}
    Let $B_u$ be the event that node $u$ lies in $A_{\ell,i}$ for fixed
    $\ell,i$. Note that
    \begin{equation*}
        \sum_{u\in V} \ind_{B_u} = \card{V_{\ell,i}}
    \end{equation*}
    by definition, and that 
    \begin{equation*}
        \Pp(B_u) = 4^{-\ell}.
    \end{equation*}
    Hence, using the Chernoff bound,
    \begin{align*}
        \Pp\big( \card{V_{\ell,i}} \not\in [4^{-\ell-1}n, 4^{-\ell+1}n] \big) 
        & = \Pp\Big( \sum_{u\in V} \ind_{B_u} \not\in [4^{-\ell-1}n,4^{-\ell+1}n] \Big) \\
        & \leq \exp(-K 4^{-\ell}n),
    \end{align*}
    for some positive constant $K$, and we obtain, for $\ell=\widetilde{L}(n)$,
    \begin{align}
        \label{eq:chernoff}
        \Pp\bigg(
        \bigcap_{i=1}^{4^{\widetilde{L}(n)}} &
        \big\{\lvert V_{\widetilde{L}(n),i}\rvert\in [4^{-\widetilde{L}(n)-1}n,4^{-\widetilde{L}(n)+1}n]\big\}
        \bigg) \nonumber\\
        & \geq 1 - \sum_{i=1}^{4^{\widetilde{L}(n)}}
        \Pp\big(\vert V_{\widetilde{L}(n),i}\rvert 
        \not\in [4^{-\widetilde{L}(n)-1}n,4^{-\widetilde{L}(n)+1}n]\big) \nonumber \\
        & \geq 1-4^{\widetilde{L}(n)}\exp(-K4^{-\widetilde{L}(n)}n) \nonumber \\
        & \geq 1-\exp(\widetilde{K}\widetilde{L}(n)-K4^{-\widetilde{L}(n)}n),
    \end{align}
    for some positive constant $\widetilde{K}$. By assumption
    \begin{equation*}
        \lim_{n\to\infty}\frac{\widetilde{L}(n)}{4^{-\widetilde{L}(n)}n}=0,
    \end{equation*}
    and hence
    \begin{equation*}
        \Pp\bigg(
        \bigcap_{i=1}^{4^{\widetilde{L}(n)}}
        \big\{\lvert V_{\widetilde{L}(n),i}\rvert\in [4^{-\widetilde{L}(n)-1}n,4^{-\widetilde{L}(n)+1}n]\big\}
        \bigg)
        \geq 1-o(1),
    \end{equation*}
    as $n\to\infty$.  Since the $\{A_{\ell,i}\}_{\ell,i}$ are nested as
    a function of $\ell$, we have
    \begin{equation*}
        \bigcap_{\ell=1}^{\widetilde{L}(n)} \bigcap_{i=1}^{4^{\ell}}
        \big\{\card{V_{\ell,i}}\in [4^{-\ell-1}n,4^{-\ell+1}n]\big\} 
        = \bigcap_{i=1}^{4^{\widetilde{L}(n)}}
        \big\{\lvert V_{\widetilde{L}(n),i} \rvert \in [4^{-\widetilde{L}(n)-1}n,4^{-\widetilde{L}(n)+1}n]\big\},
    \end{equation*}
    which, combined with \eqref{eq:chernoff}, proves the first part of
    the lemma.

    For the second part, note that for 
    \begin{equation*}
        \widetilde{L}(n) = \frac{1}{2}\log(n)\big(1-\tfrac{1}{2}\log^{-5/6}(n)\big),
    \end{equation*}
    we have
    \begin{align*}
        \frac{\widetilde{L}(n)}{4^{-\widetilde{L}(n)}n}
        & = \frac{\frac{1}{2}\log(n)\big(1-\tfrac{1}{2}\log^{-5/6}(n)\big)}
        {2^{\tfrac{1}{2}\log^{1/6}(n)}} \\
        & \leq \frac{\log(n)}{2^{\tfrac{1}{2}\log^{1/6}(n)}} \\
        & = 2^{\log\log(n)-\tfrac{1}{2}\log^{1/6}(n)} \to 0,
    \end{align*}
    and hence the lemma is valid in this case. The same holds for
    $\widetilde{L}(n)=L(n)$ since
    \begin{equation*}
        L(n) \leq \frac{1}{2}\log(n)\big(1-\tfrac{1}{2}\log^{-5/6}(n)\big).
    \end{equation*}
\end{proof}

We are now ready to prove that a random node placement $V(n)$ is in
$\mc{V}(n)$ with high probability as $n\to\infty$ (i.e., is fairly
``regular'' with high probability).

\begin{lemma}
    \label{thm:mcv}
    \begin{equation*}
        \Pp(V(n)\in\mc{V}(n)) = 1-o(1),
    \end{equation*}
    as $n\to\infty$.
\end{lemma}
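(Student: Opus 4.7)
The plan is to show each of the four defining conditions of $\mc{V}(n)$ holds with probability $1-o(1)$ and then combine them by a union bound. Conditions one and four are already handled: the minimum-distance bound $r_{u,v}>n^{-1}$ is exactly the content of Lemma \ref{thm:dmin}, and the regularity of squares up to level $L'(n)=\frac{1}{2}\log(n)(1-\frac{1}{2}\log^{-5/6}(n))$ is the content of Lemma \ref{thm:regular} applied with $\widetilde{L}(n)=L'(n)$, which was verified there to satisfy the hypothesis $\widetilde{L}(n)/(4^{-\widetilde{L}(n)}n)\to 0$. So only the two remaining bullets—the upper bound $|V_{\ell,i}(n)|\le\log(n)$ at $\ell=\frac{1}{2}\log(n)$ and the lower bound $|V_{\ell,i}(n)|\ge 1$ at $\ell=\frac{1}{2}\log(n/(2\log(n)))$—need separate arguments, both of which are straightforward Chernoff/union-bound computations on the binomial counts $|V_{\ell,i}(n)|\sim\mathrm{Bin}(n,4^{-\ell})$.

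For the upper bound at $\ell=\frac{1}{2}\log(n)$, each square has area $4^{-\ell}n=1$, so the expected count is $1$, and there are $4^\ell=n$ squares. The standard Chernoff tail $\Pp(\mathrm{Bin}(n,1/n)\ge\log n)\le e^{-K\log(n)\log\log(n)}$ for some $K>0$, then a union bound over the $n$ squares, yields probability at least $1-n\cdot e^{-K\log(n)\log\log(n)}=1-o(1)$. For the lower bound at $\ell=\frac{1}{2}\log(n/(2\log(n)))$, each square has area $2\log(n)$ and there are $4^\ell=n/(2\log(n))$ such squares; the probability that a given one is empty is $(1-2\log(n)/n)^n\le e^{-2\log(n)}=n^{-2}$, and the union bound gives empty probability at most $n/(2\log(n))\cdot n^{-2}=o(1)$. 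Both steps mirror the argument in the proof of Lemma \ref{thm:regular} and present no real difficulty.

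Finally, I would combine the four events by a union bound: with $E_1,\ldots,E_4$ denoting the four defining events, $\Pp(V(n)\notin\mc{V}(n))\le\sum_{i=1}^{4}\Pp(E_i^c)=o(1)$, which gives the claim. The only mildly delicate step is checking that the particular $\widetilde{L}(n)$ used for condition four is exactly the one whose hypothesis was verified in Lemma \ref{thm:regular}, which it is by construction of $L'(n)$; beyond that, the proof is essentially a bookkeeping exercise and I anticipate no genuine obstacle.
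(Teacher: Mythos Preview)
Your proposal is correct and follows essentially the same structure as the paper's proof: invoke Lemma \ref{thm:dmin} for the minimum-distance condition, Lemma \ref{thm:regular} for the level-$L'(n)$ regularity, and handle the two intermediate occupancy bounds separately before combining via a union bound. The only difference is that the paper dispatches the second and third conditions by citing \cite[Lemma 5.1]{ozg}, whereas you supply the Chernoff/union-bound computations directly; your self-contained versions are correct and arguably preferable.
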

\begin{proof}
    The first condition,
    \begin{align*}
        r_{u,v} & > n^{-1}
        & \text{for all $u,v\in V$,}
    \end{align*}
    holds with probability $1-o(1)$ by Lemma \ref{thm:dmin}. The second
    and third conditions,
    \begin{align*}
        \big\lvert V_{\ell,i} \big\rvert & \leq \log(n) 
        & \text{for $\ell= \frac{1}{2}\log(n)$,} \\
        \big\lvert V_{\ell,i} \big\rvert & \geq 1 
        & \text{for $\ell=\frac{1}{2}\log\Big(\frac{n}{2\log(n)}\Big)$,}
    \end{align*}
    are shown in \cite[Lemma 5.1]{ozg} to hold with probability
    $1-o(1)$. The fourth condition,
    \begin{align*}
        \big\lvert V_{\ell,i} \big\rvert & \in [4^{-\ell-1}n,4^{-\ell+1}n] 
        & \text{for all $\ell \in\Big\{1,\ldots, L'(n)\Big\}$,}
    \end{align*}
    holds with probability $1-o(1)$ by Lemma \ref{thm:regular}.
    Together, this proves the result.
\end{proof}

\subsection{Converse Lemmas}
\label{sec:aux_converse}

Here we prove several auxiliary converse results. The first lemma
bounds the maximal achievable sum rate for every individual node (i.e.,
the total traffic for which a fixed node is either source or
destination). 

\begin{lemma}
    \label{thm:cutset2}
    Under either fast or slow fading, for any $\alpha>2$, there exists
    $b(n)=O(\log(n))$ such that for all $V(n)\in\mc{V}(n)$,
    $\lambdauc\in\Lambdauc(n)$, $u\in V(n)$,
    \begin{align}
        \label{eq:thm3}
        \sum_{w\in V(n)\setminus\{u\}}\lambdauc_{u,w} \leq b(n), \\
        \label{eq:thm4}
        \sum_{w\in V(n)\setminus\{u\}}\lambdauc_{w,u} \leq b(n).
    \end{align}
\end{lemma}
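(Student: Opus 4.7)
The plan is to apply a cut-set bound with the singleton cut $S = \{u\}$ for the outgoing sum, and the dual cut $S = V(n) \setminus \{u\}$ for the incoming sum, and to exploit the regularity properties of $V(n) \in \mathcal{V}(n)$ to show that the resulting MIMO capacities are $O(\log n)$. Since the magnitudes $|h_{u,w}|^2 = r_{u,w}^{-\alpha}$ are deterministic (only the phases $\theta_{u,w}$ are random), the argument will be essentially identical under fast and slow fading.

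For \eqref{eq:thm3}, the rate across $\{u\}$ is upper bounded by the capacity of the SIMO channel from $u$ to the cooperative super-receiver $V(n) \setminus \{u\}$, namely
\begin{equation*}
    \sum_{w \neq u} \lambdauc_{u,w}
    \;\leq\; \log\Big(1 + \sum_{w \neq u} r_{u,w}^{-\alpha}\Big).
\end{equation*}
For \eqref{eq:thm4}, the cut $V(n) \setminus \{u\}$ yields a MISO channel into $u$ with per-node power constraint one; its capacity is bounded via Cauchy--Schwarz by $\log\bigl(1 + (n-1)\sum_{w \neq u} r_{w,u}^{-\alpha}\bigr)$. In both cases the task reduces to controlling $\Sigma \defeq \sum_{w \neq u} r_{u,w}^{-\alpha}$.

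To bound $\Sigma$, I would split the other nodes into a near field $\{w : r_{u,w} < 1\}$ and a far field $\{w : r_{u,w} \geq 1\}$. The minimum-distance condition $r_{u,w} > n^{-1}$ gives $r_{u,w}^{-\alpha} \leq n^{\alpha}$, while the condition $\card{V_{\ell,i}(n)} \leq \log(n)$ at $\ell = \tfrac{1}{2}\log n$ implies that the unit disk around $u$ (covered by $O(1)$ unit squares) contains at most $O(\log n)$ nodes, so the near-field contribution is at most $O(\log(n)\, n^{\alpha})$. For the far field, partition into annuli $\{w : k \leq r_{u,w} < k+1\}$ for $k = 1, \ldots, O(\sqrt{n})$; each annulus has area $O(k)$, is covered by $O(k)$ unit squares, and so contains at most $O(k \log n)$ nodes, contributing at most $O(k \log n) \cdot k^{-\alpha}$. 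Summing and using $\alpha > 2$ to make $\sum_{k \geq 1} k^{1-\alpha}$ converge yields a far-field contribution of $O(\log n)$. Hence $\Sigma = O(\log(n)\, n^{\alpha})$, so $\log(1 + (n-1)\Sigma) = O(\log n)$, giving the desired $b(n) = O(\log n)$.

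The main technical care required is at the interface between the cut-set bound and the fading assumption: under slow fading, the bound must hold pathwise in the realization of $\{\theta_{u,v}\}_{u,v}$, but since the SIMO/MISO capacity bounds above depend only on $|h_{u,w}|^2 = r_{u,w}^{-\alpha}$ (the phases can be absorbed by the assumed full CSI at both ends), this causes no difficulty. The rest is a bookkeeping exercise using the regularity properties of $\mathcal{V}(n)$ to justify the annular counting, which is routine given the assumption $V(n) \in \mathcal{V}(n)$.
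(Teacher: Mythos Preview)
Your proposal is correct and follows the same cut-set/SIMO--MISO approach as the paper. The only difference is that your annular counting to bound $\Sigma = \sum_{w\neq u} r_{u,w}^{-\alpha}$ is more work than needed: the paper simply uses the minimum-distance condition $r_{u,w}>n^{-1}$ to bound each term by $n^{\alpha}$, giving $\Sigma\leq (n-1)n^{\alpha}$ directly, which already yields $\log(1+(n-1)\Sigma)\leq (2+\alpha)\log(n)$.
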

\begin{proof}
    The argument follows the one in \cite[Theorem 3.1]{ozg}.  Denote by
    $C(S_1,S_2)$ the multiple-input multiple-output (MIMO) capacity
    between nodes in $S_1$ and nodes in $S_2$, for $S_1, S_2\subset V$.
    Consider first \eqref{eq:thm3}. By the cut-set bound \cite[Theorem
    14.10.1]{cov}, 
    \begin{equation*}
        \sum_{w\neq u}\lambdauc_{u,w} \leq C(\{u\},\{u\}^c).
    \end{equation*}
    $C(\{u\},\{u\}^c)$ is the capacity between $u$ and
    the nodes in $\{u\}^c$, i.e., 
    \begin{align*}
        C(\{u\},\{u\}^c) 
        & = \log\Big(1+ { \textstyle \sum_{v\neq u} } \abs{h_{u,v}}^2\Big) \\
        & \leq \log(1+(n-1)n^\alpha) \\
        & \leq K\log(n),
    \end{align*}
    with 
    \begin{equation*}
        K \defeq 2+\alpha, 
    \end{equation*}
    and where for the first inequality we have used that since
    $V\in\mc{V}$, we have $r_{u,v} \geq n^{-1}$ for all $u,v\in V$.

    Similarly, for \eqref{eq:thm4},
    \begin{equation*}
        \sum_{w\neq u}\lambdauc_{w,u} \leq C(\{u\}^c,\{u\}),
    \end{equation*}
    and
    \begin{align*}
        C(\{u\}^c,\{u\}) 
        & \leq \log\Big(1+(n-1){\textstyle \sum_{v\neq u} } \abs{h_{v,u}}^2\Big) \\
        & \leq \log(1+(n-1)^2n^\alpha) \\
        & \leq K\log(n).
    \end{align*}
\end{proof}

The next lemma bounds the maximal achievable sum rate across the
boundary out of the subsquares $V_{\ell,i}(n)$ for $\ell\in\{1,\ldots,
L(n)\}$, and $i\in\{1,\ldots,4^{\ell}\}$.

\begin{lemma}
    \label{thm:cutset}
    Under either fast or slow fading, for any $\alpha>2$, there exists
    $b(n)=O\big(\log^6(n)\big)$ such that for all
    $V(n)\in\mc{V}(n)$, $\lambdauc\in\Lambdauc(n)$, $\ell\in\{1,\ldots,
    L(n)\}$, and $i\in\{1,\ldots,4^{\ell}\}$, we have
    \begin{align*}
        \sum_{u\in V_{\ell,i}(n)}\sum_{w\notin V_{\ell,i}(n)}\lambdauc_{u,w} 
        \leq b(n) (4^{-\ell}n)^{2-\min\{3,\alpha\}/2}.
    \end{align*}
\end{lemma}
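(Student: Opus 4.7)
The plan is to apply the MIMO cut-set bound to the cut $(V_{\ell,i}(n), V_{\ell,i}^c(n))$ and then control the resulting MIMO capacity via two complementary matrix inequalities, chosen according to the path-loss regime.

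First, by the cut-set bound (exactly as in the proof of Lemma~\ref{thm:cutset2}), the sum of rates across the cut is upper bounded by the MIMO capacity $C(V_{\ell,i}(n), V_{\ell,i}^c(n))$, which under both fast and slow fading reduces (with high probability over the fading realizations in the slow-fading case) to bounding $\log\det(I+HH^*)$, where $H$ is the $\card{V_{\ell,i}^c(n)}\times\card{V_{\ell,i}(n)}$ channel matrix with entries $h_{u,v}$. Second, I would apply two complementary inequalities to this determinant. Hadamard's inequality applied to $I+HH^*$ gives the row-by-row bound
\[
\log\det(I+HH^*) \leq \sum_{v\notin V_{\ell,i}} \log\Bigl(1+\sum_{u\in V_{\ell,i}} r_{u,v}^{-\alpha}\Bigr),
\]
while concavity of $\log\det$ combined with a rank bound gives
\[
\log\det(I+HH^*) \leq \card{V_{\ell,i}}\,\log\Bigl(1+\|H\|_F^2/\card{V_{\ell,i}}\Bigr).
\]
The geometric sums $\sum_u r_{u,v}^{-\alpha}$ and $\|H\|_F^2=\sum_{u,v} r_{u,v}^{-\alpha}$ are then estimated by a dyadic-annulus decomposition: using $V(n)\in\mc{V}(n)$, an annulus at distance $\sim 2^k$ from a given node contains $O(4^k\log n)$ nodes, each contributing $\sim 2^{-k\alpha}$ of power, so the geometric series in $k$ converges (thanks to $\alpha>2$) and delivers the desired scaling with only polylogarithmic overhead.

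For $\alpha>3$, the received power at any receiver $v\notin V_{\ell,i}$ decays fast enough in the distance to $V_{\ell,i}$ that only receivers within a polylogarithmic distance of the boundary contribute significantly; there are $O(2^{-\ell}\sqrt{n}\log n)$ such receivers by the regularity property, and Hadamard's inequality then delivers the bound $(4^{-\ell}n)^{1/2}$ times polylog factors. For $\alpha\in(2,3]$, Hadamard's inequality is strictly loose, because the cut admits genuine cooperative beamforming gain that Hadamard cannot capture; here the rank/concavity bound with $\|H\|_F^2=O\bigl((4^{-\ell}n)^{2-\alpha/2}\,\mathrm{polylog}(n)\bigr)$ and $\card{V_{\ell,i}}=O(4^{-\ell}n)$ yields $(4^{-\ell}n)^{2-\alpha/2}$ times polylog. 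Taking the better of the two bounds in each regime produces the unified target $b(n)(4^{-\ell}n)^{2-\min\{3,\alpha\}/2}$.

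The main obstacle is precisely the low-path-loss regime $\alpha\in(2,3]$: only the cooperative (rank/concavity) bound captures the correct scaling there, and the Frobenius-norm estimate needs to be carried out against the full two-dimensional geometry of the cut (dyadic annuli inside $V_{\ell,i}$ paired against dyadic annuli outside it) rather than by a pointwise worst-case bound, which would be off by a polynomial factor. A secondary obstacle is bookkeeping: the final polylogarithmic factor $b(n)=O(\log^6(n))$ arises from the combination of (i) the $\log n$ slack in the node-count regularity from $V(n)\in\mc{V}(n)$, (ii) the $\log n$ factor from summing the dyadic series across $O(\log n)$ scales, (iii) the $\log$ inside the determinant inequality, and (iv) the averaging over channel phases in the fast-fading case (with a union bound over fading realizations in the slow-fading case), which collectively multiply out to a sixth power of $\log n$.
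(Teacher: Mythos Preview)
Your overall strategy (cut-set bound, then matrix inequalities to control the MIMO capacity) is the right one, but the proof as written has a genuine gap in the low path-loss regime. The issue is your Frobenius-norm estimate $\|H\|_F^2=O\bigl((4^{-\ell}n)^{2-\alpha/2}\,\mathrm{polylog}(n)\bigr)$: this is \emph{false} for the full channel matrix. The only lower bound on inter-node distances guaranteed by $V(n)\in\mc{V}(n)$ is $r_{u,v}\ge n^{-1}$, and nothing prevents a pair $(u,v)$ with $u\in V_{\ell,i}$, $v\in V_{\ell,i}^c$ straddling the cut boundary at distance of that order. A single such pair already contributes $n^{\alpha}$ to $\|H\|_F^2$, so your rank/concavity bound collapses to
\[
\card{V_{\ell,i}}\log\bigl(1+\|H\|_F^2/\card{V_{\ell,i}}\bigr)
=O\bigl(4^{-\ell}n\cdot\log n\bigr),
\]
which for $\alpha\in(2,3]$ is polynomially larger than the target $(4^{-\ell}n)^{2-\alpha/2}$ (e.g., off by a factor $(4^{-\ell}n)^{1/2}$ at $\alpha=3$). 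Your dyadic-annulus computation implicitly starts at unit scale and so misses this entirely.

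The paper repairs exactly this point by a geometric rather than a regime split: it first peels off the receivers within distance~$1$ of the boundary, call them $\partial(V_{\ell,i}^c)$, and bounds $C(V_{\ell,i},\partial(V_{\ell,i}^c))$ by Hadamard (there are $O(\log(n)\sqrt{4^{-\ell}n})$ such receivers, each contributing $O(\log n)$, for a total $O(\log^2(n)\sqrt{4^{-\ell}n})$). Only then, with all remaining distances $\geq 1$, does it invoke a power-sum MIMO bound of the form $C\leq K\log^3(n)\sum_{u,v}r_{u,v}^{-\alpha}$ (this is the cited result from \cite{nie,ozg}, somewhat sharper than your rank/concavity inequality), and evaluates the double sum via a ring decomposition around $A_{\ell,i}$. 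This single geometric split handles all $\alpha>2$ at once; the paper never branches on whether $\alpha\le 3$ or $\alpha>3$. If you insert the same boundary-layer split before applying your two inequalities, your argument can be made to go through, but as it stands the $\alpha\in(2,3]$ case does not close.
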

\begin{proof}
    As before, denote by $C(S_1,S_2)$ the MIMO capacity between nodes in
    $S_1$ and nodes in $S_2$. By the cut-set bound \cite[Theorem
    14.10.1]{cov},
    \begin{equation}
        \label{eq:cutset0a}
        \sum_{u\in V_{\ell,i}}\sum_{w\notin V_{\ell,i}}\lambdauc_{u,w} 
        \leq C(V_{\ell,i},V_{\ell,i}^c).
    \end{equation}
    Let
    \begin{equation*}
        \bm{H}_{S_1,S_2} \defeq [h_{u,v}]_{u\in S_1,v\in S_2}
    \end{equation*}
    be the matrix of channel gains between the nodes in $S_1$ and $S_2$.
    Under fast fading
    \begin{equation*}
        C(S_1,S_2)
        \defeq  \max_{\substack{\bm{Q}(\bm{H})\geq 0: \\ \E(q_{u,u})\leq P\ \forall u\in S_1}}
        \E\bigg( \log \det\big(\bm{I}+\bm{H}_{S_1,S_2}^\dagger\bm{Q}(\bm{H}) \bm{H}_{S_1,S_2}\big)\bigg),
    \end{equation*}
    and under slow fading
    \begin{equation*}
        C(S_1,S_2) 
        \defeq \max_{\substack{\bm{Q}\geq 0: \\ q_{u,u}\leq P\ \forall u\in S_1}}
        \log \det\big(\bm{I}+\bm{H}_{S_1,S_2}^\dagger\bm{Q} \bm{H}_{S_1,S_2}\big).
    \end{equation*}
    Denote by $\partial (V_{\ell,i}^c)$ the nodes in $V_{\ell,i}^c$ that are
    within distance one of the boundary between $A_{\ell,i}^c$ and
    $A_{\ell,i}$. Using the generalized Hadamard inequality yields
    that under either fast or slow fading
    \begin{equation}
        \label{eq:cutset0c}
        C(V_{\ell,i},V_{\ell,i}^c)
        \leq C(V_{\ell,i},\partial (V_{\ell,i}^c)) + 
        C(V_{\ell,i},V_{\ell,i}^c \setminus \partial (V_{\ell,i}^c)).
    \end{equation}

    We start by analyzing the first term in the sum in
    \eqref{eq:cutset0c}. Applying Hadamard's inequality again yields
    \begin{equation*}
        C(V_{\ell,i},\partial (V_{\ell,i}^c))
        \leq \sum_{v\in\partial (V_{\ell,i}^c)} C(V_{\ell,i},\{v\}).
    \end{equation*}
    Since $V\in\mc{V}$, we have 
    \begin{equation*}
        \card{\partial (V_{\ell,i}^c)} \leq 5 \log(n)(4^{-\ell}n)^{1/2}.
    \end{equation*}
    By the same analysis as in Lemma \ref{thm:cutset2}, we
    obtain
    \begin{equation*}
        C(V_{\ell,i},\{v\}) 
        \leq C(\{v\}^c,\{v\}) 
        \leq \frac{K}{5}\log(n)
    \end{equation*}
    for some constant $K$ (independent of $v$). Therefore
    \begin{align}
        \label{eq:cutset0d}
        C(V_{\ell,i},\partial (V_{\ell,i}^c)) 
        & \leq 5 \log(n)(4^{-\ell}n)^{1/2}\frac{K}{5}\log(n) \nonumber\\
        & \leq  K\log^2(n)(4^{-\ell}n)^{1/2}.
    \end{align}

    We now analyze the second term in the sum in
    \eqref{eq:cutset0c}.  
    The arguments of \cite[Lemma 12]{nie} (building on \cite[Theorem
    5.2]{ozg}) show that under either fast or slow fading there exists
    $\widetilde{K}>0$ such that for any $V\in\mc{V},\ell\in
    \{0,\ldots,L(n)\}$,
    \begin{equation}
        \label{eq:cutset0z}
        C(V_{\ell,i}, V_{\ell,i}^c\setminus \partial (V_{\ell,i}^c))
        \leq \widetilde{K} \log^3(n) \sum_{u\in V_{\ell,i}} 
        \sum_{v\in V_{\ell,i}^c \setminus \partial (V_{\ell,i}^c)} r_{u,v}^{-\alpha}.
    \end{equation}
    Moreover, using the same arguments as in \cite[Theorem
    5.2]{ozg} shows that there exists a constant $K'>0$ such that for adjacent
    squares (i.e., sharing a side) $A_{\ell,i},A_{\ell,j}$,
    \begin{equation}
        \label{eq:ptot}
        \sum_{u\in V_{\ell,i}}
        \sum_{v\in V_{\ell,j} \setminus \partial (V_{\ell,i}^c)} r_{u,v}^{-\alpha} \\
        \leq K' \log^3(n) (4^{-\ell}n)^{2-\min\{3,\alpha\}/2}.
    \end{equation}
    Consider now two diagonal squares (i.e., sharing a corner point)
    $A_{\ell,i},A_{\ell,j}$. Using a similar argument and suitably
    redefining $K'$ shows that \eqref{eq:ptot} holds for diagonal
    squares as well. 

    Using this, we now compute the summation in \eqref{eq:cutset0z}.
    Consider ``rings'' of squares around $A_{\ell,i}$. The first such
    ``ring'' contains the (at most) $8$ squares neighboring
    $A_{\ell,i}$. The next ``ring'' contains at most $16$ squares. In
    general, ``ring'' $k$ contains at most $8k$ squares. Let
    \begin{equation*}
        \{A_{\ell,j}\}_{j\in I_k}
    \end{equation*}
    be the squares in ``ring'' $k$. Then
    \begin{equation}
        \label{eq:ring1}
        \sum_{u\in V_{\ell,i}}
        \sum_{v\in V_{\ell,i}^c \setminus \partial (V_{\ell,i}^c) } r_{u,v}^{-\alpha} \\
        = \sum_{k\geq 1}\sum_{j\in I_k}\sum_{u\in V_{\ell,i}}
        \sum_{v\in V_{\ell,j} \setminus \partial (V_{\ell,i}^c) } r_{u,v}^{-\alpha}.
    \end{equation}
    By \eqref{eq:ptot},
    \begin{equation}
        \label{eq:ring2} 
        \sum_{j\in I_1}\sum_{u\in V_{\ell,i}}
        \sum_{v\in V_{\ell,j}\setminus \partial (V_{\ell,i}^c) } r_{u,v}^{-\alpha} \\
        \leq 8 K' \log^3(n) (4^{-\ell}n)^{2-\min\{3,\alpha\}/2}.
    \end{equation}
    Now note that for $k>1$ and $j\in I_k$, nodes 
    $u\in V_{\ell,i}$ and $v\in V_{\ell,j}$ are at least at distance
    $r_{u,v}\geq (k-1)(2^{-\ell}\sqrt{n})$. Moreover, since $V\in\mc{V}$,  
    each $\{V_{\ell,j}\}_{\ell,j}$ has cardinality at most
    $4^{-\ell+1}n$. Thus
    \begin{align}
        \label{eq:ring3}
        \sum_{k>1}\sum_{j\in I_k} \sum_{u\in V_{\ell,i}}
        \sum_{v\in V_{\ell,j} \setminus \partial (V_{\ell,i}^c)} r_{u,v}^{-\alpha} 
        & \leq \sum_{k>1} 8k \big(4^{-\ell+1}n\big)^2 
        \big((k-1)(2^{-\ell}\sqrt{n})\big)^{-\alpha} \nonumber\\
        & = 128 \big(4^{-\ell}n\big)^{2-\alpha/2}
        \sum_{k>1} k  (k-1)^{-\alpha} \nonumber\\
        & = K'' \big(4^{-\ell}n\big)^{2-\alpha/2},
    \end{align}
    for some $K''> 0$, and where we have used that $\alpha > 2$.
    Substituting \eqref{eq:ring2} and \eqref{eq:ring3} into
    \eqref{eq:ring1} yields
    \begin{equation*}
        \sum_{u\in V_{\ell,i}}
        \sum_{v\in V_{\ell,i}^c \setminus \partial (V_{\ell,i}^c)} r_{u,v}^{-\alpha} \\
        \leq 8 K' \log^3(n) (4^{-\ell}n)^{2-\min\{3,\alpha\}/2}
        +K'' \big(4^{-\ell}n\big)^{2-\alpha/2},
    \end{equation*}
    and hence by \eqref{eq:cutset0z}
    \begin{equation}
        \label{eq:sum}
        C(V_{\ell,i}, V_{\ell,i}^c\setminus \partial (V_{\ell,i}^c)) 
        \leq \widetilde{K} \log^3(n) 
        \Big(8 K' \log^3(n) (4^{-\ell}n)^{2-\min\{3,\alpha\}/2} 
        +K'' \big(4^{-\ell}n\big)^{2-\alpha/2}\Big).
    \end{equation}
    
    Combining \eqref{eq:cutset0a}, \eqref{eq:cutset0c},
    \eqref{eq:cutset0d}, and \eqref{eq:sum} shows that
    \begin{equation*}
        \sum_{u\in V_{\ell,i}}\sum_{v\notin V_{\ell,i}}\lambdauc_{u,v} 
        \leq b(n) (4^{-\ell}n)^{2-\min\{3,\alpha\}/2}.
    \end{equation*}
    for every $\ell\in\{1,\ldots,L(n)\}, i\in\{1,\ldots,4^\ell\}$, and
    under either fast or slow fading.
\end{proof}

The following lemma bounds the maximal achievable sum rate across the
boundary into the subsquares $V_{\ell,i}(n)$ for $\ell\in\{1,\ldots,
L(n)\}$, and $i\in\{1,\ldots,4^{\ell}\}$. Note that this lemma is only
valid for $\alpha>5$.

\begin{lemma}
    \label{thm:cutset3}
    Under either fast or slow fading, for any $\alpha>5$, there exists
    $b(n)=O\big(\log^3(n)\big)$ such that for all
    $V(n)\in\mc{V}(n)$, $\lambdauc\in\Lambdauc(n)$, $\ell\in\{1,\ldots,
    L(n)\}$, and $i\in\{1,\ldots,4^{\ell}\}$, we have
    \begin{equation*}
        \sum_{u\notin V_{\ell,i}(n)}\sum_{w\in V_{\ell,i}(n)}\lambdauc_{u,w} 
        \leq b(n) (4^{-\ell}n)^{1/2}.
    \end{equation*}
\end{lemma}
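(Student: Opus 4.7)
The proof mirrors the structure of Lemma \ref{thm:cutset}, but with the roles of transmitters and receivers swapped (we now bound \emph{incoming} rather than \emph{outgoing} traffic across the cut). Starting from the cut-set bound
\begin{equation*}
    \sum_{u\notin V_{\ell,i}}\sum_{w\in V_{\ell,i}}\lambdauc_{u,w}
    \leq C(V_{\ell,i}^c,V_{\ell,i}),
\end{equation*}
I would apply the generalized Hadamard inequality on the receiver side, splitting $V_{\ell,i}$ into the nodes $\partial V_{\ell,i}$ lying within distance one of the boundary of $A_{\ell,i}$ and the remaining interior nodes:
\begin{equation*}
    C(V_{\ell,i}^c,V_{\ell,i})
    \leq C(V_{\ell,i}^c,\partial V_{\ell,i})
    +C(V_{\ell,i}^c,V_{\ell,i}\setminus\partial V_{\ell,i}).
\end{equation*}

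For the boundary term, the regularity $V(n)\in\mc{V}(n)$ gives $\card{\partial V_{\ell,i}}\leq K_1\log(n)(4^{-\ell}n)^{1/2}$, and another application of Hadamard reduces matters to single-receiver cuts. Each such cut is controlled by the same argument as in Lemma \ref{thm:cutset2}, yielding $C(V_{\ell,i}^c,\{w\})\leq K_2\log(n)$, so the boundary contribution is $O(\log^2(n))(4^{-\ell}n)^{1/2}$, comfortably inside the target bound.

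For the interior term the key observation is that every pair $(u,w)$ with $u\in V_{\ell,i}^c$ and $w\in V_{\ell,i}\setminus\partial V_{\ell,i}$ satisfies $r_{u,w}\geq 1$, so no close-neighbor singularities appear. I would then invoke the symmetric version of the MIMO-capacity estimate of \cite[Lemma 12]{nie} (used in \eqref{eq:cutset0z}) to obtain
\begin{equation*}
    C(V_{\ell,i}^c,V_{\ell,i}\setminus\partial V_{\ell,i})
    \leq \widetilde{K}\log^{3}(n)
    \sum_{u\in V_{\ell,i}^c}\sum_{w\in V_{\ell,i}\setminus\partial V_{\ell,i}} r_{u,w}^{-\alpha}.
\end{equation*}
The double sum is evaluated by the ring decomposition used to derive \eqref{eq:ring3}. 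Grouping interior receivers by their distance $d_w\geq 1$ to the boundary and using the same annulus/density count that underlies the regularity lemmas, one gets $\sum_{u\in V_{\ell,i}^c} r_{u,w}^{-\alpha}\leq K_3 \log(n) d_w^{2-\alpha}$, and then $\sum_{w}d_w^{2-\alpha}\leq K_4\log(n)(4^{-\ell}n)^{1/2}$ provided $\alpha>3$ (the sum $\sum_d d^{2-\alpha}$ converges). This handles the ``first ring''. For non-adjacent rings, the bound of \eqref{eq:ring3} gives a contribution of $O((4^{-\ell}n)^{2-\alpha/2})$, and the hypothesis $\alpha>5$ is precisely what makes this $o((4^{-\ell}n)^{1/2})$ and hence negligible.

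The main obstacle is pinning down the exponent of $\log(n)$. In Lemma \ref{thm:cutset} the adjacent-square estimate \eqref{eq:ptot} cost an extra $\log^{3}(n)$ density factor because transmitters and receivers could be arbitrarily close, producing the final $\log^{6}(n)$ bound. Here, restricting to \emph{interior} receivers enforces $r_{u,w}\geq 1$ for every contributing pair, so the density factor from \eqref{eq:ptot} disappears and only the single $\log^{3}(n)$ from the MIMO bound of \cite[Lemma 12]{nie} survives, matching the claimed $b(n)=O(\log^{3}(n))$. A subsidiary point to verify is that \cite[Lemma 12]{nie} applies after swapping source and destination sets, but since the underlying estimate depends only on the symmetric quantity $\sum r_{u,v}^{-\alpha}$, this transfer is routine.
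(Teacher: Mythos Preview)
Your overall structure---cut-set bound, Hadamard split of $V_{\ell,i}$ into boundary and interior receivers, and the $O(\log^2(n))(4^{-\ell}n)^{1/2}$ estimate for the boundary term---matches the paper's exactly. The divergence is in how the interior term $C(V_{\ell,i}^c,V_{\ell,i}\setminus\partial V_{\ell,i})$ is handled.

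The paper does \emph{not} invoke \cite[Lemma 12]{nie} here. Instead it uses an adaptation of \cite[Theorem 2.1]{jov}, which gives the structurally different bound
\begin{equation*}
    C(V_{\ell,i}^c,V_{\ell,i}\setminus\partial V_{\ell,i})
    \leq \sum_{v\in V_{\ell,i}\setminus\partial V_{\ell,i}}
    \Big(\sum_{u\in V_{\ell,i}^c} r_{u,v}^{-\alpha/2}\Big)^{2}.
\end{equation*}
The inner sum is $\leq K\log(n)\,d_v^{2-\alpha/2}$ (requiring $\alpha>4$ for convergence of the radial sum), so after squaring one gets $K^2\log^2(n)\,d_v^{4-\alpha}$; summing over $v$ contributes one more $\log(n)$ and requires $\sum_{d\geq 1} d^{4-\alpha}<\infty$, i.e., $\alpha>5$. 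This is where the hypothesis $\alpha>5$ actually enters, and the total is exactly the claimed $O(\log^3(n))(4^{-\ell}n)^{1/2}$.

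Your route via \cite[Lemma 12]{nie}, even granting that the reversed-direction version goes through, carries a $\log^3(n)$ prefactor \emph{before} the double sum is estimated; once you add the two density factors $K_3\log(n)$ and $K_4\log(n)$ that you yourself introduce, the interior term is $O(\log^5(n))(4^{-\ell}n)^{1/2}$, not $O(\log^3(n))$. Moreover, your identification of where $\alpha>5$ is used is off: the far-ring contribution $(4^{-\ell}n)^{2-\alpha/2}$ is already $o\big((4^{-\ell}n)^{1/2}\big)$ once $\alpha>3$, and the convergence of $\sum_d d^{2-\alpha}$ also only needs $\alpha>3$. So your argument, as written, would establish a variant of the lemma (weaker log exponent $\log^5(n)$, but valid for all $\alpha>3$) rather than the statement as claimed.
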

\begin{proof}
    By the cut-set bound \cite[Theorem 14.10.1]{cov},
    \begin{equation}
        \label{eq:cutset3_0}
        \sum_{u\notin V_{\ell,i}}\sum_{w\in V_{\ell,i}}\lambdauc_{u,w} 
        \leq C(V_{\ell,i}^c,V_{\ell,i}).
    \end{equation}
    Denote by $\partial V_{\ell,i}$ the nodes in $V_{\ell,i}$ that are
    within distance one of the boundary between $A_{\ell,i}^c$ and
    $A_{\ell,i}$. Applying
    the generalized Hadamard inequality as in Lemma
    \ref{thm:cutset}, we have under either fast or slow fading
    \begin{equation}
        \label{eq:cutset3_1}
        \begin{aligned}
            C(V_{\ell,i}^c,V_{\ell,i}) 
            & \leq C(V_{\ell,i}^c,\partial V_{\ell,i})+C(V_{\ell,i}^c, V_{\ell,i}\setminus\partial V_{\ell,i}) \\
            & \leq K\log^2(n)(4^{-\ell}n)^{1/2} +C(V_{\ell,i}^c, V_{\ell,i}\setminus\partial V_{\ell,i}),
        \end{aligned}
    \end{equation}
    for some positive constant $K$.

    For the second term in \eqref{eq:cutset3_1}, we have by slightly
    adapting the upper bound from Theorem 2.1 in \cite{jov}:
    \begin{equation*}
        C(V_{\ell,i}^c, V_{\ell,i}\setminus\partial V_{\ell,i})
        \leq \sum_{v\in V_{\ell,i}\setminus\partial V_{\ell,i}}
        \Big(\sum_{u \in V_{\ell,i}^c}r_{u,v}^{-\alpha/2}\Big)^2.
    \end{equation*}
    Now, consider $v\in V_{\ell,i}\setminus\partial V_{\ell,i}$ and let
    $d_v$ be the distance of $v$ from the closest node in
    $V_{\ell,i}^c$. Using $V\in\mc{V}$ and $\alpha>5$,
    \begin{equation*}
        \sum_{u\in V_{\ell,i}^c} r_{u,v}^{-\alpha/2}
        \leq \widetilde{K}\log(n) d_v^{2-\alpha/2},
    \end{equation*}
    for some positive constant $\widetilde{K}$, and hence 
    \begin{align*}
        C(V_{\ell,i}^c, V_{\ell,i}\setminus\partial V_{\ell,i})
        & \leq \sum_{v\in V_{\ell,i}\setminus\partial V_{\ell,i}}
        \widetilde{K}^2\log^2(n) d_v^{4-\alpha} \\
        & \leq K'\log^3(n)(4^{-\ell}n)^{1/2},
    \end{align*}
    for some positive constant $K'$.  Combined with \eqref{eq:cutset3_1} and
    \eqref{eq:cutset3_0}, this proves Lemma \ref{thm:cutset3}.
\end{proof}

\subsection{Achievability Lemmas}
\label{sec:aux_achievability}

In this section, we prove auxiliary achievability results. Recall that
a permutation traffic is a traffic pattern in which each node is source
and destination exactly once. Call the corresponding source-destination
pairing $\Pi\subset V(n)\times V(n)$ a \emph{permutation pairing}.  The
lemma below analyzes the performance achievable with either hierarchical
relaying (for $\alpha\in(2,3]$) or multi-hop communication (for $\alpha > 3$)
applied simultaneously to transmit permutation traffic in several
disjoint regions in the network. See Section \ref{sec:schemes_hr} for a
description of these communication schemes.

\begin{lemma}
    \label{thm:hr}
    Under fast fading, for any $\alpha>2$, there exists $b(n) \geq
    n^{-o(1)}$ such that for all $V(n)\in\mc{V}(n)$,
    $\ell\in\{0,\ldots,L(n)\}$, $i\in\{1,\ldots,4^\ell\}$, and
    permutation source-destination pairing $\Pi_i$ on $V_{\ell,i}(n)$,
    there exists $\lambdauc\in\Lambdauc(n)$ such that
    \begin{equation*}
        \min_{i\in\{1,\ldots,4^{\ell}\}}\min_{(u,w)\in \Pi_i} 
        \lambdauc_{u,w} 
        \geq b(n)(4^{-\ell}n)^{1-\min\{3,\alpha\}/2}.
    \end{equation*}
    The same statement holds with probability $1-o(1)$ as $n\to\infty$
    in the slow fading case.
\end{lemma}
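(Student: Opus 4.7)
The plan is to run the single-square schemes of \cite{nie} (hierarchical relaying for $\alpha\in(2,3]$, multi-hop for $\alpha>3$) in parallel on each of the $4^\ell$ subsquares $\{V_{\ell,i}(n)\}_i$, and to show that the extra interference incurred by parallel operation costs at most an $n^{o(1)}$ factor, which is absorbed into $b(n)$. Because $V(n)\in\mc{V}(n)$, each $V_{\ell,i}(n)$ has cardinality $\Theta(4^{-\ell}n)$ and inherits the minimum-distance and sub-subsquare density regularity properties required by the analysis in \cite{nie}; thus, if a given $V_{\ell,i}(n)$ were operated in isolation, the relevant scheme would deliver every pair in the permutation $\Pi_i$ a rate at least $(4^{-\ell}n)^{1-\min\{3,\alpha\}/2-o(1)}$.

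To go from isolation to parallel operation, I would first apply a fixed $3\times 3$ spatial coloring of the $4^\ell$ subsquares and time-share across the $9$ color classes, losing only the constant factor $1/9$. Within one color class, any two simultaneously active subsquares are separated by at least $2^{-\ell}\sqrt{n}$. I would then bound the total interference seen at an arbitrary receiver $v\in V_{\ell,i}(n)$ from all transmitters in all other active subsquares, by grouping those subsquares into concentric rings around $V_{\ell,i}(n)$: ring $k$ contains $O(k)$ subsquares at distance at least $(k-1)2^{-\ell}\sqrt{n}$, each populated by at most $4^{-\ell+1}n$ unit-power transmitters (here I use $V(n)\in\mc{V}(n)$). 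This yields an aggregate interference power bounded by a constant times
\begin{equation*}
    \sum_{k\geq 1} k\cdot(4^{-\ell}n)\cdot \bigl((k-1)\vee 1\bigr)^{-\alpha}\bigl(2^{-\ell}\sqrt{n}\bigr)^{-\alpha},
\end{equation*}
which is finite because $\alpha>2$; this is exactly the same geometric summation already used in the proof of Lemma~\ref{thm:cutset}.

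With this bound in hand, the second step is to re-run the achievability analysis of \cite{nie} inside each active subsquare, but with the $\mathcal{CN}(0,1)$ noise replaced by an enlarged noise of variance $1+O((4^{-\ell}n)^{1-\min\{3,\alpha\}/2})$; the outer-codebook rates degrade only by a factor of $n^{o(1)}$, which can be folded into $b(n)$. Under fast fading the interference power is concentrated by the ergodic average, so the bound above holds deterministically given $V(n)\in\mc{V}(n)$. Under slow fading, the single-subsquare analysis of \cite{nie} delivers the per-subsquare rate with probability $1-\exp(-2^{\Omega(\log^{2/3}(n))})$; a union bound over at most $n$ subsquares and all levels $\ell\in\{0,\ldots,L(n)\}$ still leaves the claim true with probability $1-o(1)$.

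The main obstacle I anticipate is step two: justifying that signals from distinct active subsquares may be treated as additive independent noise rather than as a coherent interferer that could resonate with the intended signal, especially because hierarchical relaying itself is a multi-scale MIMO-type scheme whose decoders are sensitive to the second-order statistics of the noise. Independence of messages across subsquares and the i.i.d.\ uniform phases $\{\theta_{u,v}[t]\}$ make the interference circularly symmetric and uncorrelated across receive antennas, so a standard worst-case additive-noise argument suffices; nonetheless, checking that this substitution does not break the recursive step of the hierarchical scheme (where the inner stages already tolerate only a polylogarithmic noise inflation) is where the care lies, and this is the technical heart of the argument.
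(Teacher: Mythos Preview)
Your proposal is correct and follows essentially the same route as the paper: run hierarchical relaying (for $\alpha\in(2,3]$) or multi-hop (for $\alpha>3$) locally in each $V_{\ell,i}$, time-share over a finite spatial coloring (the paper uses a $2\times 2$ pattern rather than your $3\times 3$, a cosmetic difference), argue that the residual interference from simultaneously active squares costs only a constant factor via the same ring summation, and handle slow fading by a union bound over the $O(n)$ subsquares. The one step the paper carries out more explicitly than you do is the verification that for every $\ell\le L(n)$ the squarelets required by hierarchical relaying inside $A_{\ell,i}$ are no finer than level $L'(n)=\tfrac{1}{2}\log(n)\bigl(1-\tfrac{1}{2}\log^{-5/6}(n)\bigr)$, so that the fourth regularity condition in $\mc{V}(n)$ guarantees the needed sub-subsquare densities; you assert this inheritance without the calculation.
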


Consider the source-destination pairing $\Pi\defeq\cup_i \Pi_i$ with
$\{\Pi_i\}_i$ as in Lemma \ref{thm:hr}. This is a permutation pairing, since
each $\Pi_i$ is a permutation pairing on $V_{\ell,i}(n)$ and since the
$\{V_{\ell,i}(n)\}_i$ are disjoint. Lemma \ref{thm:hr} states that every 
source-destination pair in $\Pi$ can communicate at a per-node rate of
at least $n^{-o(1)}(4^{-\ell}n)^{1-\min\{3,\alpha\}/2}$. Note that, due
to the locality of the traffic pattern, this can be considerably better
than the $n^{1-\min\{3,\alpha\}/2-o(1)}$ per-node rate achieved by
standard hierarchical relaying or multi-hop communication.

\begin{proof}
    We shall use either hierarchical relaying (for $\alpha\in(2,3]$) or
    multi-hop (for $\alpha>3$) to communicate within each square
    $V_{\ell,i}$. We operate every fourth of the $V_{\ell,i}$
    simultaneously, and show that the added interference due to this
    spatial re-use results only in a constant factor loss in rate.

    Consider first $\alpha\in(2,3]$ and fast fading. The squares
    $A_{\ell,i}$ at level $\ell$ have an area of 
    \begin{equation*}
        n_\ell \defeq 4^{-\ell}n. 
    \end{equation*}
    In order to be able to use hierarchical relaying within each of the
    $\{A_{\ell,i}\}_i$, it is sufficient to show that we can partition
    each $A_{\ell,i}$ into
    \begin{equation*}
        n_\ell^{\frac{1}{1+\log^{-1/3}(n_\ell)}}
    \end{equation*}
    squarelets, each of which contains a number of nodes proportional to
    the area (see Section \ref{sec:schemes_hr}). In other words, we
    partition $A$ into squarelets of size
    \begin{align*}
        n_\ell^{1-\frac{1}{1+\log^{-1/3}(n_\ell)}}
        & \geq n_{L(n)}^{\frac{\log^{-1/3}(n)}{1+\log^{-1/3}(n)}} \\
        & \geq n_{L(n)}^{\frac{1}{2}\log^{-1/3}(n)} \\
        & = 2^{\frac{1}{2}\log^{1/6}(n)} \\
        & \geq n4^{-\frac{1}{2}\log(n)\big(1-\tfrac{1}{2}\log^{-5/6}(n)\big)},
    \end{align*}
    where we have assumed, without loss of generality, that $n\geq 2$.
    Since $V\in\mc{V}$, all these squarelets contain a number of nodes
    proportional to their area, and hence this shows that all
    \begin{equation*}
        \{A_{i,\ell}\}_{\ell\in\{0,\ldots, L(n)\},i\in\{1,\ldots,4^\ell\}}
    \end{equation*}
    are simultaneously regular enough for hierarchical relaying to be
    successful under fast fading. This achieves a per-node rate of 
    \begin{equation}
        \label{eq:hr_step1}
        \lambdauc_{u,w} \geq n^{-o(1)}(4^{-\ell}n)^{1-\alpha/2}
    \end{equation}
    for any $(u,w)\in \Pi_i$ (see Section \ref{sec:schemes_hr}, or
    \cite[Theorem 1]{nie}).

    We now show that \eqref{eq:hr_step1} holds with high probability
    also under slow fading. For $V\in\mc{V}$ hierarchical relaying is
    successful under slow fading for all permutation traffic on $V$ with
    probability at least
    \begin{equation*}
        1-\exp\Big(-2^{K\log^{2/3}(n)}\Big)
    \end{equation*}
    for some constant $K$ (see again Section \ref{sec:schemes_hr}).
    Hence, hierarchical relaying is successful for all permutation
    traffic on $V_{\ell,i}$ with probability at least
    \begin{align*}
        1-\exp\Big(-2^{K\log^{2/3}(4^{-\ell}n)}\Big) 
        & \geq 1-\exp\Big(-2^{K\log^{2/3}(4^{-L(n)}n)}\Big) \\
        & = 1-\exp\Big(-2^{K\log^{1/3}(n)}\Big).
    \end{align*}
    And hence hierarchical relaying is successful under slow fading for
    all $\ell\in\{1,\ldots,L(n)\}$ and all permutation traffic on every
    $\{V_{\ell,i}\}_{i=1}^{4^\ell}$ with probability at least
    \begin{align*}
        1-L(n)4^{L(n)}\exp\Big(-2^{K\log^{1/3}(n)}\Big) 
        & \geq 1-n^2\exp\Big(-2^{K\log^{1/3}(n)}\Big) \\
        & \geq 1-o(1)
    \end{align*}
    as $n\to\infty$.

    We now argue that the additional interference from spatial re-use
    results only in a constant loss in rate. This follows from the same
    arguments as in the proof of \cite[Theorem 1]{nie} (with the
    appropriate modifications for slow fading as described there).
    Intuitively, this is the case since the interference from a square
    at distance $r$ is attenuated by a factor $r^{-\alpha}$, which,
    since $\alpha > 2$, is summable.  Hence the combined interference
    has power on the order of the receiver noise, resulting in only a
    constant factor loss in rate.

    For $\alpha > 3$, the argument is similar---instead of
    hierarchical relaying we now use multi-hop communication. For
    $V\in\mc{V}$ and under either fast or slow fading, this achieves a
    per-node rate of 
    \begin{equation}
        \label{eq:hr_step2}
        \lambdauc_{u,w} \geq n^{-o(1)}(4^{-\ell}n)^{-1/2}
    \end{equation}
    for any $(u,w)\in \Pi_i$.  Combining \eqref{eq:hr_step1} and
    \eqref{eq:hr_step2} yields the lemma.
\end{proof}

\section{Proof of Theorem \ref{thm:unicast}}
\label{sec:proof_unicast}

The proof of Theorem \ref{thm:unicast} relies on the construction of a
capacitated (noiseless, wireline) graph $G$ and linking its performance
under routing to the performance of the wireless network. This graph $G
= (V_G,E_G)$ is constructed as follows. $G$ is a full tree (i.e., all
its leaf nodes are on the same level). $G$ has $n$ leaves, each of them
representing an element of $V(n)$. To simplify notation, we assume that
$V(n)\subset V_G$, so that the leaves of $G$ are exactly the elements of
$V(n)\subset V_G$. Whenever the distinction is relevant, we use $u,v$
for nodes in $V(n)\subset V_G$ and $\mu,\nu$ for nodes in $V_G\setminus
V(n)$ in the following. The internal nodes of $G$ correspond to
$V_{\ell,i}(n)$ for all $\ell \in\{0,\ldots,L(n)\}$, $i \in\{1,\ldots,
4^{\ell}\}$, with hierarchy induced by the one on $A(n)$. In particular,
let $\mu$ and $\nu$ be internal nodes in $V_G$ and let $V_{\ell,i}(n)$
and $V_{\ell+1,j}(n)$ be the corresponding subsets of $V(n)$. Then $\nu$
is a child node of $\mu$ if $V_{\ell+1,j}(n)\subset V_{\ell,i}(n)$.

In the following, we will assume $V\in\mc{V}$, which holds with
probability $1-o(1)$ as $n\to\infty$ by Lemma \ref{thm:mcv}.  With this
assumption, nodes in $V_G$ at level $\ell<L(n)$ have $4$ children each,
nodes in $V_G$ at level $\ell=L(n)$ have between $4^{-L(n)-1}n$ and
$4^{-L(n)+1}n$ children, and nodes in $V_G$ at level $\ell=L(n)+1$ are
the leaves of the tree (see Figure \ref{fig:graph} below and Figure
\ref{fig:grid_graph} in Section \ref{sec:schemes_unicast}). 
\begin{figure}[!ht]
    \begin{center}
        \scalebox{0.8}{
        \input{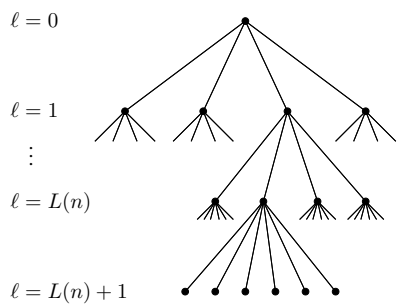}
        }
    \end{center}

    \caption{Communication graph $G$ constructed in the proof of Theorem
    \ref{thm:unicast}.  Nodes on levels $\ell\in\{0,\ldots,L(n)-1\}$
    have four children each, nodes on level $\ell=L(n)$ have
    $\Theta\big(n^{\log^{-1/2}(n)}\big)$ children each.  The total number of
    leaf nodes is $n$, one representing each node in the wireless
    network $V(n)$. An internal node in $G$ at level
    $\ell\in\{0,\ldots, L(n)\}$ represents the collection of nodes in
    $V_{\ell,i}(n)$ for some $i$.
    }

    \label{fig:graph}
\end{figure}

For $\mu\in V_G$, denote by $\mc{L}(\mu)$ the leaf nodes of the subtree
of $G$ rooted at $\mu$. Note that, by construction of the graph $G$,
$\mc{L}(\mu)= V_{\ell,i}(n)$ for some $\ell$ and $i$.  To understand the
relation between $V_G$ and $V(n)$, we define the \emph{representative}
$\mc{R}:V_G\to 2^{V(n)}$ of $\mu$ as follows.  For a leaf node $u\in
V(n)\subset V_G$ of $G$, let
\begin{equation*}
    \mc{R}(u) \defeq \{u\}.
\end{equation*}
For $\mu\in V_G$ at level $L(n)$, choose
$\mc{R}(\mu) \subset \mc{L}(\mu)\subset V(n)$
such that
\begin{equation*}
    \card{\mc{R}(\mu)}
    = 4^{-L(n)-1}n.
\end{equation*}
This is possible since $V(n)\in\mc{V}(n)$ by assumption. Finally, for
$\mu\in V_G$ at level $\ell < L(n)$, and with children
$\{\nu_i\}_{j=1}^4$, let
\begin{equation*}
    \mc{R}(\mu) \defeq \bigcup_{j=1}^4 \mc{R}(\nu_j).
\end{equation*}

We now define an edge capacity $c_{\mu,\nu}$ for each edge
$(\mu,\nu)\in E_G$.  If $\mu$ is a leaf of $G$ and $\nu$ its
parent, set 
\begin{equation}
    \label{eq:leaf}
    c_{\mu,\nu}
    = c_{\nu,\mu}
    \defeq 1.
\end{equation}
If $\mu$ is an internal  node at level $\ell$ in $G$ and $\nu$ its
parent, then set
\begin{equation}
    \label{eq:internal}
    c_{\mu,\nu}
    = c_{\nu,\mu}
    \defeq (4^{-\ell}n)^{2-\min\{3,\alpha\}/2}.
\end{equation}

Having chosen edge capacities on $G$, we can now define the set
$\Lambdauc_G(n)\subset \Rp^{n\times n}$ of feasible unicast traffic
matrices between leaf nodes of $G$. In other words,
$\lambdauc\in\Lambdauc_G(n)$ if messages at the leaf nodes of $G$ can be
routed to their destinations (which are also leaf nodes) over $G$ at
rates $\lambdauc$ while respecting the capacity constraints on the edges
of $G$. Define
\begin{equation*}
    \Lambdabuc_G(n) \defeq \Lambdauc_G(n)\cap\Buc(n).
\end{equation*}

We first prove the achievability part of Theorem \ref{thm:unicast}. The
next lemma shows that if traffic can be routed over the tree $G$ then
approximately the same traffic can be transmitted reliably over the
wireless network.
\begin{lemma}
    \label{thm:tree_equivalence1_unicast}
    Under fast fading, for any $\alpha>2$, there exists $b(n) \geq
    n^{-o(1)}$ such that for any $V(n)\in\mc{V}(n)$,
    \begin{align*}
        b(n)\Lambdauc_G(n) \subset\Lambdauc(n).
    \end{align*}
    The same statement holds for slow fading with probability $1-o(1)$
    as $n\to\infty$.
\end{lemma}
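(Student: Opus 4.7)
The plan is to implement the three-layer architecture from Section \ref{sec:schemes_unicast}. Given $\lambda^{UC} \in \Lambda^{UC}_G(n)$, fix a routing of $\lambda^{UC}$ over the tree $G$ that respects the edge capacities $\{c_{\mu,\nu}\}$; such a routing exists by definition of $\Lambda^{UC}_G(n)$. The routing layer then reduces the problem to emulating each individual edge of $G$, so it suffices to show that the cooperation and physical layers can serve every edge $(\mu,\nu)\in E_G$ at rate at least $n^{-o(1)}c_{\mu,\nu}$.

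Next, I would translate each edge usage into wireless transmissions. Consider an internal edge $(\mu,\nu)\in E_G$ with $\mu$ at level $\ell\in\{1,\ldots,L(n)\}$ (corresponding to cluster $V_{\ell,i}(n)$) and parent $\nu$ at level $\ell-1$ (cluster $V_{\ell-1,j}(n)$). Driving this edge at rate $R$ triggers the cooperation-layer distribution/concentration step, which by construction induces a constant number of permutation source--destination pairings within $V_{\ell-1,j}(n)$, each to be served at per-pair rate $\Theta\big(R/\card{V_{\ell,i}(n)}\big)$. For a leaf-level edge $(u,\mu)$ (with $\mu$ at level $L(n)$), the induced cooperation traffic within $V_{L(n),i}(n)$ is instead an all-to-all pattern that decomposes into $\card{V_{L(n),i}(n)} - 1 = n^{o(1)}$ permutations, each at rate $\Theta\big(R/\card{V_{L(n),i}(n)}\big)$.

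I would then invoke Lemma \ref{thm:hr}: at each level $\ell'$, the $4^{\ell'}$ disjoint clusters $\{V_{\ell',k}(n)\}_k$ can simultaneously serve one permutation each at per-node rate at least $n^{-o(1)}(4^{-\ell'}n)^{1-\ald/2}$. Applied at level $\ell-1$, this gives, for an internal edge at level $\ell$, aggregate rate
\begin{equation*}
    R
    \;\geq\; n^{-o(1)} \card{V_{\ell,i}(n)} \,(4^{-\ell+1}n)^{1-\ald/2}
    \;\geq\; n^{-o(1)}(4^{-\ell}n)^{2-\ald/2}
    \;=\; n^{-o(1)}\, c_{\mu,\nu},
\end{equation*}
using $\card{V_{\ell,i}(n)} = \Theta(4^{-\ell}n)$ for $V(n)\in\mc{V}(n)$. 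For a leaf-level edge, time-sharing among its $n^{o(1)}$ induced permutations still produces $R \geq n^{-o(1)} = n^{-o(1)} c_{\mu,\nu}$, since $(4^{-L(n)}n)^{1-\ald/2} = n^{-o(1)}$. Finally, I would time-share the physical layer across the $L(n)+1 = O(\log n) = n^{o(1)}$ levels of $G$, absorbing one more $n^{-o(1)}$ factor into the overall slowdown $b(n) \geq n^{-o(1)}$. This yields the desired inclusion $b(n)\Lambdauc_G(n)\subset\Lambdauc(n)$.

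The main obstacle is the bookkeeping of the cooperation-to-permutation decomposition at every level---especially the leaf level, where a whole message must be split across $\card{V_{L(n),i}(n)}$ nodes---and ensuring that the losses from spatial re-use across clusters, time-sharing across levels, and time-sharing among the constantly-many induced permutations per edge aggregate to no more than an $n^{o(1)}$ multiplicative factor. Once Lemma \ref{thm:hr} is in place, the fast-fading statement is deterministic on the event $V(n)\in\mc{V}(n)$; the slow-fading variant follows because the high-probability guarantee in Lemma \ref{thm:hr} can be union-bounded over the at most $n^{O(1)}$ permutation assignments the scheme requires.
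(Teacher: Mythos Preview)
Your proposal is correct and follows essentially the same route as the paper: implement the three-layer scheme, reduce to emulating each tree edge, decompose the cooperation-layer traffic at an internal edge into a constant number of permutations on the parent cluster, invoke Lemma~\ref{thm:hr}, and absorb the $O(\log n)$ levels and the $n^{o(1)}$ leaf-level permutations into the $n^{-o(1)}$ slowdown. One technical point the paper handles that you should be aware of when you carry out the bookkeeping: because $\card{V_{\ell-1,j}(n)}$ need not be an exact multiple of $\card{V_{\ell,i}(n)}$, the paper does not distribute over the full clusters $V_{\ell,i}(n)$ but over fixed \emph{representative} subsets $\mc{R}(\mu)\subset V_{\ell,i}(n)$ of size exactly $4^{-\ell-1}n$, built bottom-up so that $\card{\mc{R}(\nu)}=4\card{\mc{R}(\mu)}$; this is what makes the ``split into four, send three'' step literally a union of three permutations. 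Also, for slow fading you do not need to union-bound over the permutations the scheme uses: Lemma~\ref{thm:hr} already delivers the rate simultaneously for \emph{all} permutation pairings on every $V_{\ell,i}(n)$ on a single $1-o(1)$ event.
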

\begin{proof}
    Assume $\lambdauc\in\Lambdauc_G$, i.e., traffic can be routed
    between the leaf nodes of $G$ at a rate $\lambdauc$, we need to show
    that $n^{-o(1)}\lambdauc\in\Lambdauc$ (i.e., almost the same flow
    can be reliably transmitted over the wireless network). We use the
    three-layer communication architecture introduced in Section
    \ref{sec:schemes_unicast} to establish this result.

    Recall the three layers of this architecture: the routing,
    cooperation, and physical layers. The layers of this communication
    scheme operate as follows. In the routing layer, we treat the
    wireless network as the graph $G$ and route the messages between
    nodes over the edges of $G$. The cooperation layer provides this
    tree abstraction to the routing layer by distributing and
    concentrating messages over subsets of the wireless networks. The
    physical layer implements this distribution and concentration of
    messages by dealing with interference and noise.

    Consider first the routing layer, and assume that the tree
    abstraction $G$ can be implemented in the wireless network with only
    a $n^{-o(1)}$ factor loss. Since $\lambdauc\in\Lambdauc_G$ by
    assumption, we then know that the routing layer will be able to
    reliably transmit messages at rates $n^{-o(1)}\lambdauc$ over the
    wireless network. We now show that the tree abstraction can indeed
    be implemented with a factor $n^{-o(1)}$ loss in the wireless
    network.

    This tree abstraction is provided to the routing layer by the
    cooperation layer. We will show that the operation of the
    cooperation layer satisfies the following invariance property: If a
    message is located at a node $\mu\in G$ in the routing layer, then
    the same message is evenly distributed over all nodes in
    $\mc{R}(\mu)$ in the wireless network. In other words, all nodes
    $u\in\mc{R}(\mu)\subset V$ contain a distinct part of length
    $1/\card{\mc{R}(\mu)}$ of the message. 

    Consider first a leaf node $u\in V\subset V_G$ in $G$, and assume
    the routing layer calls upon the cooperation layer to send a message
    to its parent $\nu\in V_G$ in $G$. Note first that $u$ is also an
    element of $V$, and it has access to the entire message to be sent
    over $G$. Since for leaf nodes $\mc{R}(u)=\{u\}$, this shows that
    the invariance property is satisfied at $u$. The message is split at
    $u$ into $\card{\mc{R}(\nu)}$ parts of equal length, and one part is
    sent to each node in $\mc{R}(\nu)$ over the wireless network. In
    other words, we distribute the message over the wireless network by
    a factor of $\card{\mc{R}(\nu)}$. Hence the invariance property is
    also satisfied at $\nu$. 

    Consider now an internal node $\mu\in V_G$, and assume the routing
    layer calls upon the cooperation layer to send a message to its
    parent node $\nu\in V_G$.  Note that since all traffic in $G$
    originates at the leaf nodes of $G$ (which are the actual nodes in
    the wireless network), a message at $\mu$ had to traverse all levels
    below $\mu$ in the tree $G$. We assume that the invariance property
    holds up to level $\mu$ in the tree, and show that it is then also
    satisfied at level $\nu$. By the induction hypothesis, each node
    $u\in\mc{R}(\mu)$ has access to a distinct part of length
    $1/\card{\mc{R}(\mu)}$. Each such node $u$ splits its message part
    into four distinct parts of equal length. Node $u$ keeps one part
    for itself, and sends the other three parts to nodes in
    $\mc{R}(\nu)$.  Since $\card{\mc{R}(\nu)}=4\card{\mc{R}(\mu)}$, this
    can be performed such that each node in $\mc{R}(\nu)$ obtains
    exactly one message part.  In other words, we distribute the message
    by a factor four over the wireless network, and the invariance
    property is satisfied at $\nu\in V_G$. 

    Operation along edges down the tree (i.e., towards the leaf nodes)
    is similar, but instead of distributing messages, we now concentrate
    them over the wireless network. To route a message from a node
    $\mu\in V_G$ with internal children $\{\nu_j\}_{j=1}^4$ to one of
    them (say $\nu_1$) in the routing layer, the cooperation layer sends
    the message parts from each $\{\mc{R}(\nu_j)\}_{j=2}^4$ to a
    corresponding node in $\mc{R}(\nu_1)$ and combines them there.  In
    other words, we concentrate the message by a factor four over the
    wireless network. 

    To route a message to a leaf node $u\in V\subset
    V_G$ from its parent $\nu$ in $G$ in the routing layer, the
    cooperation layer sends the corresponding message parts at each node
    $\mc{R}(\nu)$ to $u$ over the wireless network. Thus, again we
    concentrate the message over the network, but this time by a factor
    of $\card{\mc{R}(\nu)}$. Both these operations along edges down the
    tree preserve the invariance property. This shows that the
    invariance property is preserved by all operations 
    induced by the routing layer in the cooperation layer.

    Finally, to actually implement this distribution and concentration
    of messages, the cooperation layer calls upon the physical layer.
    Note that at the routing layer, all edges of the tree can be routed
    over simultaneously. Therefore, the cooperation layer can
    potentially call the physical layer to perform distribution and
    concentration of messages over all sets $\{\mc{R}(\mu)\}_{\mu\in
    V_G}$ simultaneously. The function of the physical layer is to
    schedule all these operations and to deal with the resulting
    interference as well as with channel noise. 

    This scheduling is done as follows. First, the physical layer time
    shares between communication up the tree and communication down the
    tree (i.e., between distribution and concentration of messages).
    This results in a loss of a factor $1/2$ in rate. The physical layer
    further time shares between all the $L(n)+1$ internal levels of the
    tree, resulting in a further $\frac{1}{L(n)+1}$ factor loss in rate.
    Hence, the total rate loss by this time sharing is
    \begin{equation}
        \label{eq:sharing}
        \frac{1}{2(L(n)+1)}.
    \end{equation}

    Consider now the operations within some level
    $\ell\in{1,\ldots,L(n)}$ in the tree (i.e., for edge $(\mu,\nu)$ on
    this level, neither $\mu$ nor $\nu$ is a leaf node). We show that
    the rate at which the physical layer implements the edge $(\mu,\nu)$
    is equal to $n^{-o(1)}c_{\mu,\nu}$, i.e., only a small factor less
    than the capacity of the edge $(\mu,\nu)$ in the tree $G$. Note
    first that the distribution or concentration of traffic induced by
    the cooperation layer to implement one edge $e$ at level $\ell$
    (i.e., between node levels $\ell$ and $\ell-1$) is
    restricted to $V_{\ell-1,i}$ for some $i=i(e)$. We can thus
    partition the edges at level $\ell$ into $\{E_G^{j}\}_{j=1}^{4}$
    such that the four sets
    \begin{equation*}
        \bigcup_{e\in E_G^j}V_{\ell-1,i(e)}
    \end{equation*}
    of nodes are disjoint. Time sharing between these four sets yields
    an additional loss of a factor $1/4$ in rate. Fix one such value of
    $j$, and consider the operations induced by the cooperation layer in
    the set corresponding to $j$. We consider communication up the tree
    (i.e., distribution of messages), the analysis for communication
    down the tree is similar. For a particular edge $(\mu,\nu)\in E_G^j$
    with $\nu$ the parent of $\mu$, each node $u\in\mc{R}(\mu)$ has
    split its message part into four parts, three of which need to be
    sent to the nodes in $\mc{R}(\nu)\setminus\mc{R}(\mu)$. Moreover,
    this assignment of destination nodes in
    $\mc{R}(\nu)\setminus\mc{R}(\mu)$ to $u$ is performed such that no
    node in $\mc{R}(\nu)\setminus\mc{R}(\mu)$ is destination more than
    once. In other word, each node in $\mc{R}(\mu)$ is source exactly
    three times and each node in $\mc{R}(\nu)\setminus\mc{R}(\mu)$ is
    destination exactly once. This can be written as three
    source-destination pairings $\{\Pi_{i(\mu,\nu)}^k\}_{k=1}^{3}$, on
    $V_{\ell-1, i(\mu,\nu)}$.  Moreover, each such $\Pi_{i(\mu,\nu)}^k$
    can be understood as a subset of a permutation source-destination
    pairing. We time share between the three values
    of $k$ (yielding an additional loss of a factor $1/3$ in rate).
    Now, for each value of $k$, Lemma \ref{thm:hr} shows that by using
    either hierarchical relaying (for $\alpha\in(2,3]$) or multi-hop
    communication for ($\alpha>3$), we can communicate according to
    $\{\Pi_{i(e)}^k\}_{e\in E_G^j}$ at a per-node rate of 
    \begin{equation*}
        n^{-o(1)}(4^{-\ell-1}n)^{1-\min\{3,\alpha\}/2}
    \end{equation*}
    under fast fading, and with probability\footnote{Note that Lemma
    \ref{thm:hr} actually shows that all permutation traffic for every
    value of $\ell$ can be transmitted with high probability under slow
    fading. In other words, with high probability all levels of $G$ can
    be implemented successfully under slow fading.} $1-o(1)$ also under
    slow fading. Since $\mc{R}(\mu)$ contains $4^{-\ell-1}n$ nodes, and
    accounting for the loss \eqref{eq:sharing} for time sharing between
    the levels in $G$ and the additional loss of factors $1/4$ and $1/3$
    for time sharing between $j$ and $k$, the physical layer implements
    an edge capacity for $e$ at level $\ell$ of
    \begin{equation*}
        \frac{1}{2(L(n)+1)}\cdot\frac{1}{4}\cdot\frac{1}{3}\cdot
        4^{-\ell-1}n\cdot n^{-o(1)}(4^{-\ell-1}n)^{1-\min\{3,\alpha\}/2}
        = n^{-o(1)}(4^{-\ell}n)^{2-\min\{3,\alpha\}/2} \\
        = n^{-o(1)}c_e.
    \end{equation*}

    Consider now the operations within level $\ell=L(n)+1$ in the tree
    (i.e., for edge $(u,\nu)$ on this level, $u$ is a leaf node). We
    show that the rate at which the physical layer implements the edge
    $(u,\nu)$ is equal to $n^{-o(1)}c_{u,\nu}$. We again consider only
    communication up the tree (i.e., distribution of messages in the
    cooperation layer), communication down the tree is performed in a
    similar manner. The traffic induced by the cooperation layer at
    level $L(n)+1$ is within the sets $V_{L(n),i}$ for
    $i=\{1,\ldots,4^{L(n)}\}$. 
    Consider now communication within one $V_{L(n),i}$, and
    assume without loss of generality that in the routing layer every
    node $u\in V_{L(n),i}$ needs to send traffic along the
    edge $(u,\nu)$. In the physical layer, we need to distribute a
    $1/\card{\mc{R}(\nu)}$ fraction of this traffic from each node $u
    \in V_{L(n),i}$ to every node in $\mc{R}(\nu)\subset
    V_{L(n),i}$. This can be expressed as
    $\card{V_{L(n),i}}$ source-destination pairings, and we
    time share between them. Accounting for the fact that only
    $1/\card{\mc{R}(\nu)}$ of traffic needs to be sent according to each
    pairing and since $V\in\mc{V}$, this results in a time sharing loss
    of at most a factor
    \begin{equation*}
        \frac{\card{\mc{R}(\nu)}}{\card{V_{L(n),i}}}
        \leq \frac{1}{16}.
    \end{equation*}
    Now, using Lemma \ref{thm:hr}, all these source-destination
    pairings in all subsquares $\{V_{L(n),i}\}$ can be
    implemented simultaneously at a per node rate of 
    \begin{equation*}
        n^{-o(1)}(4^{-L(n)}n)^{1-\min\{3,\alpha\}/2} 
        \geq n^{-o(1)}(n^{\log^{-1/2}(n)})^{-1/2} 
        \geq n^{-o(1)}.
    \end{equation*}
    Accounting for the loss \eqref{eq:sharing} for time
    sharing between the levels in $G$, the additional factor $1/16$
    loss for time sharing within each $V_{L(n),i}$, the physical
    layer implements an edge capacity for $e$ at level $\ell=L(n)+1$ of
    \begin{equation*}
        \frac{1}{2(L(n)+1)}\cdot\frac{1}{16}\cdot n^{-o(1)}
        = n^{-o(1)}
        = n^{-o(1)}c_e,
    \end{equation*}
    under either fast or slow fading.
 
    Together, this shows that the physical and cooperation layers
    provide the tree abstraction $G$ to the routing layer with edge
    capacities of only a factor $n^{-o(1)}$ loss. Hence, if messages
    can be routed at rates $\lambdauc$ between the leaf nodes of $G$,
    then messages can be reliably transmitted over the wireless network
    at rates $n^{-o(1)}\lambdauc$. Hence
    \begin{equation*}
        \lambdauc \in \Lambdauc_G \ \Rightarrow \ 
        n^{-o(1)}\lambdauc\in\Lambdauc.
    \end{equation*}
    Noting that the $n^{-o(1)}$ factor
    is uniform in $\lambdauc$, this shows that
    \begin{equation*}
        n^{-o(1)}\Lambdauc_G \subset\Lambdauc.
    \end{equation*}
\end{proof}

We have seen that the unicast capacity region $\Lambdauc_G(n)$ of the
graph $G$ under routing is (appropriately scaled) an inner bound to the
unicast capacity region $\Lambdauc(n)$ of the wireless network. Taking
the intersection with the set of balanced traffic matrices $\Buc(n)$
yields that the same holds for $\Lambdabuc_G(n)$ and $\Lambdabuc(n)$.
The next lemma shows that $(\gamma(n)+1)\Lambdabuc_G(n)$ (with
$\gamma(n)=n^{o(1)}$ as in the definition of $\Buc(n)$ in
\eqref{eq:unicast_balanced}) is an outer bound to the approximate
unicast capacity region $\hLambdabuc(n)$ of the wireless network as
defined in \eqref{eq:approx_unicast} Combining Lemmas \ref{thm:mcv},
\ref{thm:tree_equivalence1_unicast}, and \ref{thm:tree_approx} below,
yields that with high probability
\begin{equation*}
    n^{-o(1)}\hLambdabuc(n) 
    \subset n^{-o(1)}\Lambdabuc_G(n)
    \subset \Lambdabuc(n),
\end{equation*}
proving the achievability part of Theorem \ref{thm:unicast}.

\begin{lemma}
    \label{thm:tree_approx}
    For any $\alpha>2$ and any $V(n)\in\mc{V}(n)$,
    \begin{equation*}
        \hLambdabuc(n) \subset (\gamma(n)+1)\Lambdabuc_G(n),
    \end{equation*}
    where $\gamma(n)=n^{o(1)}$ is the factor in the definition of $\Buc(n)$ in
    \eqref{eq:unicast_balanced}.
\end{lemma}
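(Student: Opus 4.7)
The plan is to show that any $\lambdauc\in\hLambdabuc(n)$, after rescaling by $1/(\gamma(n)+1)$, can be routed on the tree $G$ while respecting every edge capacity and remaining balanced. Since $G$ is a tree, routing is forced: for each pair $(u,w)\in V(n)\times V(n)$, the message of rate $\lambdauc_{u,w}$ must traverse the unique path in $G$ from leaf $u$ to leaf $w$. Thus the only thing to check is that for every edge $(\mu,\nu)\in E_G$, the total rate of messages whose path uses $(\mu,\nu)$ does not exceed $(\gamma(n)+1)c_{\mu,\nu}$. Balance of the rescaled matrix is immediate, since $\Buc(n)$ is a cone (the defining inequality \eqref{eq:unicast_balanced} is homogeneous in $\lambdauc$), so $\lambdauc\in\Buc(n)$ implies $(\gamma(n)+1)^{-1}\lambdauc\in\Buc(n)$.

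First I would handle edges at an internal level $\ell\in\{1,\ldots,L(n)\}$. Let $(\mu,\nu)$ be such an edge with $\nu$ the parent of $\mu$, and let $V_{\ell,i}(n)=\mc{L}(\mu)$ be the corresponding subset of leaves. A message of rate $\lambdauc_{u,w}$ uses $(\mu,\nu)$ iff exactly one of $u,w$ lies in $V_{\ell,i}(n)$, so the total rate crossing this edge is
\begin{equation*}
\sum_{u\in V_{\ell,i}(n)}\sum_{w\notin V_{\ell,i}(n)}\lambdauc_{u,w}
+\sum_{u\notin V_{\ell,i}(n)}\sum_{w\in V_{\ell,i}(n)}\lambdauc_{u,w}.
\end{equation*}
By the $\gamma(n)$-balance condition \eqref{eq:unicast_balanced}, the second sum is at most $\gamma(n)$ times the first, so the total is at most $(\gamma(n)+1)\sum_{u\in V_{\ell,i}(n)}\sum_{w\notin V_{\ell,i}(n)}\lambdauc_{u,w}$. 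By the cut bound in \eqref{eq:approx_unicast}, this is at most $(\gamma(n)+1)(4^{-\ell}n)^{2-\min\{3,\alpha\}/2}=(\gamma(n)+1)c_{\mu,\nu}$, which is exactly what is needed.

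Next I would handle leaf edges at level $\ell=L(n)+1$. Let $u\in V(n)$ be a leaf and $\nu$ its parent in $G$. A message of rate $\lambdauc_{v,w}$ uses this edge iff $v=u$ or $w=u$ (but not both), so the total rate crossing it equals $\sum_{w\neq u}(\lambdauc_{u,w}+\lambdauc_{w,u})\leq 1=c_{u,\nu}$ by the per-node constraint in \eqref{eq:approx_unicast}; this is even stronger than the $(\gamma(n)+1)c_{u,\nu}$ bound required. Combining the two cases, $(\gamma(n)+1)^{-1}\lambdauc$ is routable on $G$ and balanced, i.e., lies in $\Lambdabuc_G(n)$.

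There is no real obstacle here; the whole argument is bookkeeping. The only subtlety worth flagging is why the inflating factor $\gamma(n)+1$ is needed at all: the region $\hLambdabuc(n)$ only constrains traffic out of each cut $V_{\ell,i}(n)$, whereas the edge $(\mu,\nu)$ in $G$ carries traffic in both directions, and one uses the balance condition precisely to control the reverse direction by the forward one. For the leaf edges the per-node constraint already accounts for both directions, so no such inflation is necessary there.
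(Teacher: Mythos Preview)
Your proof is correct and follows essentially the same approach as the paper: identify the unique-path load on each edge of the tree, bound the two-directional load on internal edges by $(\gamma(n)+1)$ times the one-directional cut constraint from \eqref{eq:approx_unicast} via the balance condition, and use the per-node constraint directly for leaf edges. Your explicit remark that $\Buc(n)$ is a cone (so balance is preserved under rescaling) is a nice touch that the paper leaves implicit.
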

\begin{proof}
    We first relate the total traffic across an edge $e$ in the graph
    $G$ to the total traffic across a cut $V_{\ell,i}$ for some $\ell$
    and $i$.

    Consider an edge $e=(\mu,\nu)\in E_G$, and assume first that $e$ connects
    nodes at level $\ell$ and $\ell-1$ in the tree with $\ell\geq L(n)$.
    We slight abuse of notation, set
    \begin{equation*}
        c_e \defeq c_{\mu,\nu}.
    \end{equation*}
    Note first that by \eqref{eq:internal} we have
    \begin{equation}
        \label{eq:tree_approx1}
        c_e = (4^{-\ell}n)^{2-\min\{3,\alpha\}/2}.
    \end{equation}
    Moreover, since $G$ is a tree, removing the edge $e$ from $E_G$
    separates the tree into two connected components, say
    $S_1,S_2\subset V_G$.  Consider now the leaf nodes in $S_1$. By the
    construction of the tree structure of $G$, these leaf nodes are
    either equal to $V_{\ell,i}$ or $V_{\ell,i}^c$ for some
    $i\in\{1,\ldots,4^\ell\}$.  Assume without loss of generality that
    they are equal to $V_{\ell,i}$. Then $V_{\ell,i}^c$ are the leaf
    nodes in $S_2$. Now since traffic is only assumed to be between leaf
    nodes of $G$, the total traffic demand between $S_1$ and $S_2$ is
    equal to
    \begin{equation}
        \label{eq:tree_approx2}
        \sum_{u\in V_{\ell,i}}\sum_{w\notin V_{\ell,i}^c}
        (\lambdauc_{u,w}+\lambdauc_{w,u}).
    \end{equation}
    By the tree structure of $G$, all this traffic has to be
    routed over edge $e$. 

    Consider now an edge $e$ connecting a node at level $L(n)+1$ and $L(n)$,
    i.e., a leaf node $u$ to its parent $\nu$. Then, by \eqref{eq:leaf},
    \begin{equation}
        \label{eq:tree_approx3}
        c_e = 1. 
    \end{equation}
    The total traffic crossing the edge $e$ is equal to
    \begin{equation}
        \label{eq:tree_approx4}
        \sum_{w\neq u} (\lambdauc_{u,w}+\lambdauc_{w,u}).
    \end{equation}

    We now show that
    \begin{equation}
        \label{eq:tree_approx7}
        \hLambdabuc \subset (\gamma(n)+1)\Lambdabuc_G.
    \end{equation}
    Assume $\lambdauc\in\hLambdabuc$, then
    \begin{equation*}
        \sum_{u\in V_{\ell,i}}\sum_{w\notin V_{\ell,i}}
        \lambdauc_{u,w} \leq (4^{-\ell}n)^{2-\min\{3,\alpha\}}
    \end{equation*}
    for all $\ell\in\{1,\ldots,L(n)\},i\in\{1,\ldots,4^\ell\}$, and
    \begin{equation*}
        \sum_{w\neq u} (\lambdauc_{u,w}+\lambdauc_{w,u}) \leq 1
    \end{equation*}
    for all $u\in V$. Since $\lambdauc$ is balanced, this implies that
    \begin{equation*}
        \frac{1}{\gamma(n)+1}
        \sum_{u\in V_{\ell,i}}\sum_{w\notin V_{\ell,i}} (\lambdauc_{u,w}+\lambdauc_{w,u}) 
        \leq (4^{-\ell}n)^{2-\min\{3,\alpha\}} 
    \end{equation*}
    for $\ell\leq L(n)$.  By \eqref{eq:tree_approx1},
    \eqref{eq:tree_approx2}, \eqref{eq:tree_approx3},
    \eqref{eq:tree_approx4}, we obtain that the traffic demand across
    each edge $e$ of the graph $G$ is less than $\gamma(n)+1$ times its
    capacity $c_e$. Therefore, using that $G$ is a tree,
    $\frac{1}{\gamma(n)+1}\lambdauc$ can be routed over $G$, i.e.,
    $\lambdauc\in(\gamma(n)+1)\Lambdabuc_G$. This proves
    \eqref{eq:tree_approx7}.
\end{proof}

We now turn to the converse part of Theorem \ref{thm:unicast}. The next
lemma shows that $\hLambdauc(n)$ (appropriately scaled) is an outer
bound to the unicast capacity region $\Lambdauc(n)$ of the wireless
network. Taking the intersection with the collection of balanced traffic
matrices $\Buc(n)$ and combining with Lemma \ref{thm:mcv}, this shows
that with high probability
\begin{equation*}
    \Lambdabuc(n) \subset O(\log^6(n))\hLambdabuc(n),
\end{equation*}
proving the converse part of Theorem \ref{thm:unicast}.

\begin{lemma}
    \label{thm:tree_equivalence2_unicast}
    Under either fast or slow fading, for any $\alpha>2$, there exists
    $b(n) = O(\log^6(n))$ such that for any $V(n)\in\mc{V}(n)$,
    \begin{align*}
        \Lambdauc(n) \subset b(n)\hLambdauc(n).
    \end{align*}
\end{lemma}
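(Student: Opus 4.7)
The plan is to derive the outer bound as an immediate combination of the two cut-set estimates already established in Section \ref{sec:aux_converse}, namely Lemma \ref{thm:cutset2} (the per-node sum-rate bound) and Lemma \ref{thm:cutset} (the sum-rate bound across the square cuts $V_{\ell,i}(n)$). The key observation is that $\hLambdauc(n)$ is defined by exactly two families of linear inequalities, and these two lemmas control precisely those two families.

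Concretely, fix $V(n) \in \mc{V}(n)$ and take any $\lambdauc \in \Lambdauc(n)$. First, for every $\ell \in \{1,\ldots,L(n)\}$ and $i \in \{1,\ldots,4^\ell\}$, Lemma \ref{thm:cutset} gives a constant $b_1(n) = O(\log^6(n))$ (independent of $\ell$, $i$, and $\lambdauc$) such that
\begin{equation*}
    \sum_{u \in V_{\ell,i}(n)}\sum_{w \notin V_{\ell,i}(n)} \lambdauc_{u,w}
    \leq b_1(n) (4^{-\ell}n)^{2-\min\{3,\alpha\}/2},
\end{equation*}
which is the first family of inequalities defining $\hLambdauc(n)$, scaled by $b_1(n)$. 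Second, for every $u \in V(n)$, Lemma \ref{thm:cutset2} yields a constant $b_2(n) = O(\log(n))$ with
\begin{equation*}
    \sum_{w \neq u}\lambdauc_{u,w} \leq b_2(n),
    \qquad
    \sum_{w \neq u}\lambdauc_{w,u} \leq b_2(n),
\end{equation*}
so that $\sum_{w \neq u}(\lambdauc_{u,w} + \lambdauc_{w,u}) \leq 2 b_2(n)$, which is the second family of inequalities defining $\hLambdauc(n)$, scaled by $2 b_2(n)$. Taking $b(n) \defeq \max\{b_1(n), 2 b_2(n)\} = O(\log^6(n))$ shows that $b(n)^{-1}\lambdauc$ satisfies every constraint defining $\hLambdauc(n)$, i.e., $\lambdauc \in b(n)\hLambdauc(n)$, which is uniform in $\lambdauc \in \Lambdauc(n)$.

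There is essentially no obstacle left: the MIMO cut-set arguments, Hadamard decompositions, and summability-over-rings computations that constitute the real work have already been carried out in Lemmas \ref{thm:cutset2} and \ref{thm:cutset}, and the regularity of $V(n) \in \mc{V}(n)$ (needed for bounding boundary nodes and for the minimum-separation $r_{u,v} \geq n^{-1}$) has been absorbed into the statements of those lemmas. The only care needed is to observe that the two bounds hold simultaneously under either fast or slow fading, with the same regularity hypothesis $V(n) \in \mc{V}(n)$, so that a single scaling factor $b(n)$ can be chosen. The proof is therefore one short paragraph consisting of the two invocations above followed by the maximum.
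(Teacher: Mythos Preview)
Your proposal is correct and follows essentially the same approach as the paper: invoke Lemma \ref{thm:cutset} for the square-cut inequalities and Lemma \ref{thm:cutset2} for the per-node inequalities, then take the maximum of the two resulting factors to obtain a single $b(n)=O(\log^6(n))$.
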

\begin{proof}
    Assume $\lambdauc\in\Lambdauc$. By Lemma~\ref{thm:cutset},
    we have for any $\ell\in\{1,\ldots, L(n)\}$ and $i\in\{1,\ldots,
    4^{\ell}\}$,
    \begin{align}
        \label{eq:step1a}
        \sum_{u\in V_{\ell,i}}\sum_{w\notin V_{\ell,i}}\lambdauc_{u,w}
        \leq K \log^6(n) (4^{-\ell}n)^{2-\min\{3,\alpha\}/2}
    \end{align}
    for some constant $K$ not depending on $\lambdauc$.

    Consider now $u\in V$. Lemma \ref{thm:cutset2} shows that
    \begin{align*}
        \sum_{w\neq u} \lambdauc_{u,w}
        & \leq \widetilde{K}\log(n), \\
        \sum_{w\neq u} \lambdauc_{w,u}
        & \leq \widetilde{K}\log(n),
    \end{align*}
    with constant $\widetilde{K}$ not depending on $\lambdauc$,
    and therefore,
    \begin{equation}
        \label{eq:step1c}
        \sum_{w\neq u} (\lambdauc_{u,w}+\lambdauc_{w,u})
        \leq 2\widetilde{K}\log(n).
    \end{equation}

    Combining \eqref{eq:step1a} and \eqref{eq:step1c} proves that
    there exists $b(n) = O(\log^6(n))$ such that
    $\lambdauc\in\Lambdauc$ implies $\lambdauc\in b(n)\hLambdauc$,
    proving the lemma.
\end{proof}

\section{Proof of Theorem \ref{thm:multicast}}
\label{sec:proof_multicast}

Consider again the tree graph $G = (V_G,E_G)$ with leaf nodes $V(n)\subset
V_G$ constructed in Section \ref{sec:proof_unicast}. As before, we
consider traffic between leaf nodes of $G$. In particular, any multicast
traffic matrix $\lambdamc\in\Rp^{n\times 2^n}$ for the wireless network
is also a multicast traffic matrix for the graph $G$. Denote by
$\Lambdamc_G(n)\subset \Rp^{n\times 2^n}$ the set of feasible (under
routing) multicast traffic matrices between leaf nodes of $G$, and set
\begin{equation*}
    \Lambdabmc_G(n) \defeq \Lambdamc(n)\cap\Bmc(n).
\end{equation*}

The next lemma shows that if multicast traffic can be routed over
$G$ then approximately the same multicast traffic can be transmitted
reliably over the wireless network. Taking the intersection with
$\Bmc(n)$ implies that the same result holds also for balanced traffic.
\begin{lemma}
    \label{thm:tree_equivalence_multicast}
    Under fast fading, for any $\alpha>2$, there exists $b(n) \geq
    n^{-o(1)}$ such that for all $V(n)\in\mc{V}(n)$,
    \begin{align*}
        b(n)\Lambdamc_G(n) \subset\Lambdamc(n).
    \end{align*}
    The same statement holds under slow fading with probability
    $1-o(1)$ as $n\to\infty$.
\end{lemma}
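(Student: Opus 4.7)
The plan is to mirror the proof of Lemma \ref{thm:tree_equivalence1_unicast} (the unicast case) and observe that only the top or routing layer of the three-layer architecture needs to be modified for multicast, while the cooperation and physical layers operate exactly as before. So I would argue that Lemma \ref{thm:tree_equivalence_multicast} essentially reduces to the unicast result combined with the fact that routing on a tree handles multicast just as easily as unicast.

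Concretely, suppose $\lambdamc \in \Lambdamc_G(n)$, i.e., the multicast traffic can be routed between leaf nodes of $G$ at rates $\lambdamc$ while respecting the edge capacities $\{c_e\}$ defined in \eqref{eq:leaf} and \eqref{eq:internal}. First, I would invoke the routing layer as described in Section \ref{sec:schemes_multicast}: for each source--multicast-group pair $(u,W)$, messages are routed along the smallest subtree of $G$ covering $\{u\}\cup W$, which is well-defined because $G$ is a tree. The effect of this routing on each individual edge $e \in E_G$ is simply to induce some aggregate flow of rate at most $c_e$ across $e$, since $\lambdamc \in \Lambdamc_G(n)$ by assumption. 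Thus, from the point of view of the cooperation layer, all that matters is that for each edge $e=(\mu,\nu)$ there is some message content to be moved from the cluster $\mc{R}(\mu)$ to $\mc{R}(\nu)$ (or vice versa) at a rate at most $c_e$.

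From this point on, the argument is identical to the one in the proof of Lemma \ref{thm:tree_equivalence1_unicast}. Namely, I would maintain the same invariance property: whenever a message is located at $\mu\in V_G$ in the routing layer, it is distributed evenly over $\mc{R}(\mu)$ in the wireless network. Distribution and concentration of messages along each edge $e$ are implemented in the cooperation layer by splitting/aggregating message parts across the relevant clusters, which induces permutation-like traffic at the physical layer. The physical layer time-shares across the levels of $G$, across the four spatial reuse classes $\{E_G^j\}_{j=1}^4$ within a level, and across the three subpairings $\{\Pi^k_{i(\mu,\nu)}\}_{k=1}^3$, and then invokes either hierarchical relaying (for $\alpha\in(2,3]$) or multi-hop (for $\alpha>3$) via Lemma \ref{thm:hr}. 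The resulting per-edge rate is $n^{-o(1)} c_e$ for every edge $e$, exactly as in the unicast computation, under fast fading, and, by the same high-probability argument over all levels $\ell\in\{1,\ldots,L(n)\}$ and all subsquares, with probability $1-o(1)$ under slow fading.

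Since each edge $e$ of $G$ is implemented by the physical/cooperation layers at rate $n^{-o(1)} c_e$ uniformly, and since the routing layer only uses edges at rate bounded by $c_e$, it follows that the entire multicast traffic matrix $\lambdamc$ can be reliably transmitted over the wireless network at rates $n^{-o(1)}\lambdamc$. Taking $b(n) = n^{-o(1)}$ uniform in $\lambdamc$, this gives $b(n)\Lambdamc_G(n)\subset \Lambdamc(n)$, as desired. The only conceptual point that requires any care, and the one I expect a careful reader to want spelled out, is the reduction of arbitrary multicast demands at the routing layer to the per-edge flow rates $\{c_e\}$ on the tree; once that reduction is made, the lower layers do not distinguish unicast from multicast at all, and the entire machinery of Lemma \ref{thm:tree_equivalence1_unicast} and Lemma \ref{thm:hr} applies verbatim.
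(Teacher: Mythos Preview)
Your proposal is correct and takes essentially the same approach as the paper; indeed, the paper's own proof is just the one-line remark ``The proof follows using the same construction as in Lemma~\ref{thm:tree_equivalence1_unicast}.'' Your expansion makes explicit exactly the point that matters, namely that only the routing layer changes (routing along the subtree $G(\{u\}\cup W)$ as in Section~\ref{sec:schemes_multicast}) while the cooperation and physical layers see only per-edge loads bounded by $c_e$ and therefore operate verbatim as in the unicast case.
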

\begin{proof}
    The proof follows using the same construction as in Lemma
    \ref{thm:tree_equivalence1_unicast}.
\end{proof}

We now show that, since $G$ is a tree graph, $\hLambdabmc(n)$ is an inner
bound (up to a factor $\gamma(n)+1$) to the the multicast capacity region
$\Lambdabmc_G(n)$.  The fact that $G$ is a tree is critical for this
result to hold.
\begin{lemma}
    \label{thm:tree_multicast}
    For any $\alpha>2$, 
    \begin{equation*}
        \hLambdabmc(n) \subset (\gamma(n)+1)\Lambdabmc_G(n).
    \end{equation*}
    where $\gamma(n)=n^{o(1)}$ is the factor in the definition of $\Bmc(n)$ in
    \eqref{eq:multicast_balanced}.
\end{lemma}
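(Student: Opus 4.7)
The plan is to adapt the cut-capacity argument of Lemma \ref{thm:tree_approx} to the multicast setting, with the crucial observation that a single multicast message $(u,W)$ may contribute to many edge flows in $G$ simultaneously, but because $G$ is a tree the multicast routing is unique: the message is sent along the smallest subtree of $G$ spanning $\{u\}\cup W$. Hence verifying feasibility of $\tfrac{1}{\gamma(n)+1}\lambdamc$ on $G$ reduces to checking, edge by edge, that the total rate of multicast messages whose spanning subtree crosses that edge does not exceed the edge capacity.

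First I would fix an edge $e=(\mu,\nu)\in E_G$ at some internal level $\ell\in\{1,\ldots,L(n)\}$ and identify the two leaf sets $V_{\ell,i}(n)$ and its complement $V_{\ell,i}^c(n)$ produced by removing $e$. A multicast flow $\lambdamc_{u,W}$ traverses $e$ if and only if $u$ and $W$ lie on opposite sides of the cut, i.e., either $u\in V_{\ell,i}(n)$ with $W\setminus V_{\ell,i}(n)\neq\emptyset$, or $u\notin V_{\ell,i}(n)$ with $W\cap V_{\ell,i}(n)\neq\emptyset$. The total rate across $e$ is therefore
\begin{equation*}
    \sum_{u\in V_{\ell,i}(n)}\sum_{\substack{W:\, W\setminus V_{\ell,i}(n)\neq\emptyset}} \lambdamc_{u,W}
    + \sum_{u\notin V_{\ell,i}(n)}\sum_{\substack{W:\, W\cap V_{\ell,i}(n)\neq\emptyset}} \lambdamc_{u,W}.
\end{equation*}
The first sum is bounded by $(4^{-\ell}n)^{2-\min\{3,\alpha\}/2}$ thanks to $\lambdamc\in\hLambdamc(n)$, and the second sum is bounded by $\gamma(n)$ times the first thanks to the balanced condition \eqref{eq:multicast_balanced}. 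Combining, the traffic across $e$ is at most $(\gamma(n)+1)(4^{-\ell}n)^{2-\min\{3,\alpha\}/2}=(\gamma(n)+1)c_e$.

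Next I would handle leaf edges $e=(u,\nu)$ at level $L(n)+1$. Removing such an $e$ separates $\{u\}$ from the rest of the leaves, so the flow across $e$ equals
\begin{equation*}
    \sum_{\substack{W:\, W\setminus\{u\}\neq\emptyset}} \lambdamc_{u,W}
    + \sum_{\tilde u\neq u} \sum_{\substack{W:\, u\in W}} \lambdamc_{\tilde u,W},
\end{equation*}
which is at most $1=c_e$ directly from the per-node constraint in the definition of $\hLambdamc(n)$; here no balance factor is needed since both directions are already counted in that constraint.

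Finally I would conclude by routing: since $G$ is a tree, for every source-multicast pair $(u,W)$ the unique way to route from $u$ to $W$ is along the minimum subtree $G(\{u\}\cup W)$, and the total demand placed on any edge by $\tfrac{1}{\gamma(n)+1}\lambdamc$ is exactly $\tfrac{1}{\gamma(n)+1}$ times the cut traffic computed above, hence at most $c_e$. Thus $\tfrac{1}{\gamma(n)+1}\lambdamc\in\Lambdamc_G(n)$, which combined with $\lambdamc\in\Bmc(n)$ yields $\lambdamc\in(\gamma(n)+1)\Lambdabmc_G(n)$ and proves the lemma. The only subtlety, which I expect to be the main point to articulate carefully rather than a real obstacle, is that a multicast message contributes its full rate (not a split fraction) to every edge of its spanning subtree, so the per-edge cut-counting must double-count the message across all cuts it straddles; this is precisely what the summations above do, and the tree structure guarantees that per-edge capacity checks are both necessary and sufficient for routability.
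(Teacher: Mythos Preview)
Your proof is correct and follows essentially the same approach as the paper: both arguments exploit that $G$ is a tree so multicast routing is unique along the spanning subtree $G(\{u\}\cup W)$, then identify the traffic across each edge with the cut traffic at the corresponding $V_{\ell,i}$, bound internal-level edges using the $\hLambdamc$ constraint plus the $\gamma(n)$-balanced condition, and bound leaf edges directly via the per-node constraint. The only cosmetic difference is that the paper argues by contrapositive (assuming $\lambdamc\in\Bmc\setminus\Lambdabmc_G$ and producing a violated constraint of $\tfrac{1}{\gamma(n)+1}\hLambdabmc$), while you argue directly.
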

\begin{proof}
    Assume $\lambdamc\in\Bmc\setminus\Lambdabmc_G$.
    Since $G$ is a tree, there is only one way to route multicast
    traffic from $u$ to $W$, namely along the subtree $G(\{u\}\cup W)$
    induced by $\{u\}\cup W$ (i.e., the smallest subtree of $G$ that
    covers $\{u\}\cup W$). Hence for any edge $e\in E_G$, the traffic
    $d_{\lambdamc}(e)$ that needs to be routed over $e$ is equal to
    \begin{equation*}
        d_{\lambdamc}(e) 
        = \sum_{\substack{u\in V, W\subset V:\\ e\in E_{G(\{u\}\cup W)}}}
        \lambdamc_{u,W}.
    \end{equation*}

    Now, since $\lambdamc\in\Bmc\setminus\Lambdabmc_G$, there exists
    $e\in E_G$ such that
    \begin{equation}
        \label{eq:tree_multicast1}
        d_{\lambdamc}(e)> c_e.
    \end{equation}
    Let $\ell$ be the level of this edge $e$ in $G$. We have
    \begin{equation}
        \label{eq:tree_multicast3}
        c_e = 
        \begin{cases}
            \big(4^{-\ell}n\big)^{2-\min\{3,\alpha\}/2} & \text{if $\ell \leq L(n)$}, \\
            1 & \text{else}.
        \end{cases}
    \end{equation}

    Assume first that
    $\ell\leq L(n)$ and let $i$ be such that the removal of the edge $e$ in $G$
    disconnects the leave nodes in $V_{\ell,i}$ from the ones in
    $V_{\ell,i}^c$. Then we have
    \begin{equation}
        \label{eq:tree_multicast2}
        d_{\lambdamc}(e)  
        = \sum_{u\in V_{\ell,i}}
        \sum_{\substack{W\subset V:\\ W\setminus V_{\ell,i}\neq\emptyset}}\lambdamc_{u,W}
        +\sum_{u\notin V_{\ell,i}(n)}
        \sum_{\substack{W\subset V:\\ W\cap V_{\ell,i}\neq\emptyset}}\lambdamc_{u,W}.
    \end{equation}

    Assume then that $\ell = L(n)+1$, and assume $e$ separates the leaf
    node $u$ from $\{u\}^c$ in $G$. Then
    \begin{equation*}
        \label{eq:tree_multicast5}
        d_{\lambdamc}(e) 
        = \sum_{\substack{W\subset V: \\ W\setminus\{u\}\neq\emptyset}}
        \lambdamc_{u,W}
        + \sum_{\tilde{u}\neq u}\sum_{\substack{W\subset V:\\ u\in W}}
        \lambdamc_{\tilde{u},W}.
    \end{equation*}

    If $\ell = L(n)+1$, then \eqref{eq:tree_multicast1}, \eqref{eq:tree_multicast3}, and
    \eqref{eq:tree_multicast2} imply that $\lambdamc\notin\hLambdabmc$
    and therefore $\lambdamc\notin\frac{1}{\gamma(n)+1}\hLambdabmc$.
    If $\ell \leq L(n)$ then, since $\lambdamc$ is $\gamma(n)$-balanced, we have
    \begin{equation}
        \label{eq:tree_multicast4}
        \sum_{u\in V_{\ell,i}}
        \sum_{\substack{W\subset V:\\ W\setminus V_{\ell,i}\neq\emptyset}}\lambdamc_{u,W}
        +\sum_{u\notin V_{\ell,i}(n)}
        \sum_{\substack{W\subset V:\\ W\cap V_{\ell,i}\neq\emptyset}}\lambdamc_{u,W}  
        \leq (\gamma(n)+1) \sum_{u\in V_{\ell,i}}
        \sum_{\substack{W\subset V:\\ W\setminus V_{\ell,i}\neq\emptyset}}\lambdamc_{u,W}.
    \end{equation}
    Combining \eqref{eq:tree_multicast1}, \eqref{eq:tree_multicast3},
    \eqref{eq:tree_multicast2}, and \eqref{eq:tree_multicast4} shows
    that $\lambdamc\notin\frac{1}{\gamma(n)+1}\hLambdabmc$ for $\ell\leq
    L(n)$ as well. 
    
    Hence, we have shown that $\lambdamc\in\Bmc\setminus\Lambdabmc_G$
    implies $\lambdamc\notin\frac{1}{\gamma(n)+1}\hLambdabmc$, proving
    the lemma.
\end{proof}

Combining Lemmas \ref{thm:tree_equivalence_multicast}, and
\ref{thm:tree_multicast}, and \ref{thm:mcv} shows that, with probability
$1-o(1)$ as $n\to\infty$,
\begin{equation*}
    n^{-o(1)}\hLambdabmc(n) 
    \subset n^{-o(1)}\Lambdabmc_G(n)
    \subset\Lambdabmc(n),
\end{equation*}
proving the inner bound in Theorem \ref{thm:multicast}.

We now turn to the proof of the outer bound to $\Lambdamc(n)$. The next
lemma combined with Lemma \ref{thm:mcv}, and taking the intersection
with $\Bmc(n)$, proves the outer bound in Theorem \ref{thm:multicast}. 
\begin{lemma}
    \label{thm:multicast_outer}
    Under fast fading, for any $\alpha>2$, there exists
    $b(n)=O(\log^6(n))$ such that for all $V(n)\in\mc{V}(n)$,
    \begin{align*}
        \Lambdamc(n) \subset b(n) \hLambdamc(n).
    \end{align*}
    The same statement holds under slow fading with probability
    $1-o(1)$ as $n\to\infty$.
\end{lemma}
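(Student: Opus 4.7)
The plan is to establish the two families of cut inequalities defining $\hLambdamc(n)$ by standard cut-set arguments, closely paralleling the proof of Lemma \ref{thm:tree_equivalence2_unicast}. Concretely, fix $\lambdamc\in\Lambdamc(n)$ with $V(n)\in\mc{V}(n)$. Since $\hLambdamc(n)$ is defined by constraints of two types---one for each square cut $V_{\ell,i}(n)$ with $\ell\in\{1,\ldots,L(n)\}$, $i\in\{1,\ldots,4^\ell\}$, and one for each singleton cut $\{u\}$ with $u\in V(n)$---it suffices to show that each such constraint is satisfied by $\lambdamc$ up to a multiplicative factor $O(\log^6(n))$.

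For a square cut $V_{\ell,i}(n)$, the key observation is that any multicast message $m_{u,W}$ with $u\in V_{\ell,i}(n)$ and $W\setminus V_{\ell,i}(n)\neq\emptyset$ has at least one destination outside $V_{\ell,i}(n)$, and hence its entire rate $\lambdamc_{u,W}$ must cross the cut from $V_{\ell,i}(n)$ to $V_{\ell,i}^c(n)$. By the standard MIMO cut-set bound (as in \cite[Theorem 14.10.1]{cov}),
\begin{equation*}
\sum_{u\in V_{\ell,i}(n)}\sum_{\substack{W\subset V(n):\\ W\setminus V_{\ell,i}(n)\neq\emptyset}}\lambdamc_{u,W}\leq C(V_{\ell,i}(n),V_{\ell,i}^c(n)).
\end{equation*}
The argument used in the proof of Lemma \ref{thm:cutset} produces an upper bound $C(V_{\ell,i}(n),V_{\ell,i}^c(n))\leq K\log^6(n)(4^{-\ell}n)^{2-\min\{3,\alpha\}/2}$ under either fast or slow fading; this bound depends only on the geometry of $V(n)\in\mc{V}(n)$ (and not on the nature of the traffic), so I can invoke it verbatim here.

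For a singleton cut $\{u\}$, observe that $\sum_{W:\,W\setminus\{u\}\neq\emptyset}\lambdamc_{u,W}$ is the total rate of messages that originate at $u$ and must leave $u$, hence is upper bounded by $C(\{u\},\{u\}^c)$; similarly $\sum_{\tilde u\neq u}\sum_{W:\,u\in W}\lambdamc_{\tilde u,W}$ is the total rate of messages from other sources that must reach $u$, hence upper bounded by $C(\{u\}^c,\{u\})$. Both capacities are $O(\log(n))$ by Lemma \ref{thm:cutset2}, so their sum is $O(\log(n))=O(\log^6(n))$. Combining this with the bound of the previous paragraph and taking $b(n)$ equal to the larger constant yields $\lambdamc\in b(n)\hLambdamc(n)$ with $b(n)=O(\log^6(n))$, proving the claim.

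Since $V(n)\in\mc{V}(n)$ holds with probability $1-o(1)$ by Lemma \ref{thm:mcv}, and since Lemmas \ref{thm:cutset} and \ref{thm:cutset2} both apply under either fading model for every $V(n)\in\mc{V}(n)$, the same bound is valid under fast and slow fading. I do not anticipate a substantive obstacle: the only conceptual point is recognizing that a multicast message with even a single destination on the far side of a cut contributes its full rate to the information that must cross that cut, so that the MIMO cut-set bounds developed for the unicast setting transfer with no loss to the multicast setting.
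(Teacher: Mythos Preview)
Your proof is correct and reaches the same conclusion through the same underlying machinery (the cut-set bound together with the MIMO capacity estimates from Lemmas~\ref{thm:cutset2} and~\ref{thm:cutset}), but the paper packages the argument differently. Rather than applying the cut-set bound directly to multicast traffic as you do, the paper introduces the notion of a unicast traffic matrix $\lambdauc$ \emph{compatible} with $\lambdamc$: for each pair $(u,W)$ one picks a single representative destination $f(u,W)\in W\cup\{u\}$ and sets $\lambdauc_{u,w}=\sum_{W:f(u,W)=w}\lambdamc_{u,W}$. Since any scheme achieving $\lambdamc$ automatically achieves such a $\lambdauc$ (just discard the message at all but the chosen destination), one has $\lambdauc\in\Lambdauc(n)$, and by choosing $f$ appropriately for each cut $V_{\ell,i}$ the unicast sum rate across that cut equals the multicast quantity you want to bound. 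This lets the paper invoke the \emph{statements} of Lemmas~\ref{thm:cutset2} and~\ref{thm:cutset} as black boxes, whereas you invoke the capacity estimate established \emph{inside} the proof of Lemma~\ref{thm:cutset}. Your route is more direct; the paper's is more modular, since any future sharpening of the unicast converse lemmas would propagate to the multicast case without reopening this proof.
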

\begin{proof}
    We say that a unicast traffic matrix $\lambdauc$ is
    \emph{compatible} with a multicast traffic matrix $\lambdamc$ if there
    exists a mapping $f:V(n)\times 2^{V(n)}\to V(n)$ such that $f(u,W)\in
    W\cup\{u\}$, for all $(u,W)$, and
    \begin{equation*}
        \lambdauc_{u,w} = \sum_{\substack{W\subset V(n): \\ f(u,W)=w}}
        \lambdamc_{u,W}
    \end{equation*}
    for all $(u,w)$. In words, $\lambdamc$ is compatible with
    $\lambdauc$ if we can create the unicast traffic matrix $\lambdauc$ from
    $\lambdamc$ by simply discarding the traffic for the pair
    $(u,W)$ at all the nodes $W\setminus\{f(u,W)\}$. 

    Note that if $\lambdamc\in \Lambdabmc$ and if $\lambdauc$ is
    compatible with $\lambdamc$ then $\lambdauc\in \Lambdauc$.  Indeed,
    we can reliably transmit at rate $\lambdauc$ by using the
    communication scheme for $\lambdamc$ and discarding all the
    unwanted messages delivered by this scheme. Now consider a cut
    $V_{\ell,i}$ with $\ell \leq L(n)$ in the wireless network, and
    choose a mapping $f:V(n)\times 2^{V(n)}\to V(n)$ such that
    \begin{equation*}
        \sum_{u\in V_{\ell,i}}\sum_{w\notin V_{\ell,i}} \lambdauc_{u,w}
        = \sum_{u\in V_{\ell,i}}
        \sum_{\substack{W\subset V:\\ W\setminus V_{\ell,i}\neq\emptyset}}\lambdamc_{u,W}.
    \end{equation*}
    Since $\lambdauc\in\Lambdauc$, we can apply Lemma \ref{thm:cutset}
    to obtain
    \begin{equation*}
        \sum_{u\in V_{\ell,i}}
        \sum_{\substack{W\subset V:\\ W\setminus V_{\ell,i}\neq\emptyset}}\lambdamc_{u,W}
        = \sum_{u\in V_{\ell,i}}\sum_{w\notin V_{\ell,i}} \lambdauc_{u,w}
        \leq b(n) \big(4^{-\ell}n\big)^{2-\min\{3,\alpha\}/2},
    \end{equation*}
    with $b(n)= O(\log^6(n))$. Repeating the same argument for cuts of
    the form $\{u\}$ and $\{u\}^c$ and using Lemma \ref{thm:cutset2},
    shows that $\lambdamc\in b(n)
    \hLambdamc$. Noting that the $b(n)$ term is uniform in $\lambdamc$
    yields that
    \begin{equation*}
        \Lambdamc \subset b(n)\hLambdamc,
    \end{equation*}
    concluding the proof of the lemma.
\end{proof}

\section{Discussion}
\label{sec:discussion}

We discuss several aspects and extensions of the three-layer
architecture introduced in Section \ref{sec:schemes_hr} and used in the
achievability parts of Theorems \ref{thm:unicast} and
\ref{thm:multicast}. In Section \ref{sec:tree}, we comment on the
various tree structures used in the three-layer architecture.  In
Section \ref{sec:second} we show that for certain values of $\alpha$ the
bounds in the theorems can be significantly sharpened. In Section
\ref{sec:nonbalanced}, we discuss bounds for traffic that is not
balanced. In Section \ref{sec:large}, we show that for large values of
path-loss exponent ($\alpha>5$) these bounds are tight. Hence in the
large path-loss regime the requirement of balanced traffic is not
necessary, and we obtain a scaling characterization of the entire
unicast and multicast capacity regions. In Section \ref{sec:dense}, we
point out how the results discussed so far can be used to obtain the
scaling of the unicast and multicast capacity regions of dense
networks (where $n$ nodes are randomly placed on a square of unit area). 

\subsection{Tree Structures}
\label{sec:tree}

There are two distinct tree structures that are used in the construction
of the three-layer communication scheme proposed in this paper---one
explicit and one implicit. These two tree structures appear in different
layers of the communication scheme and serve different purposes.

The first (explicit) tree structure is given by the tree $G$ utilized in
the routing layer and implemented in the cooperation layer. The main
purpose of this tree structure is to perform localized load balancing.
In fact, the distribution and concentration of traffic is used 
to avoid unnecessary bottlenecks. Note that the tree $G$ is used by the
scheme for any value of $\alpha$.

The second (implicit) tree structure occurs in the physical layer. This
tree structure appears only for $\alpha\in(2,3]$. In this regime, the
physical layer uses the hierarchical relaying scheme. It is the
hierarchical structure of this scheme that can equivalently be
understood as a tree. The purpose of this second tree structure is to
enable distributed multiple-antenna communication, i.e., to perform
cooperative communication.

\subsection{Second-Order Asymptotics}
\label{sec:second}

The scaling results in Theorems \ref{thm:unicast} and
\ref{thm:multicast} are up to a factor $n^{\pm o(1)}$ and hence preserve
information at scale $n^\beta$ for constant $\beta$ (see also the
discussion in Section \ref{sec:implications}). Here we examine in more
detail the behavior of this $n^{\pm o(1)}$ factor and show that in
certain situations it can be significantly sharpened.

Note first that the outer bound in Theorems \ref{thm:unicast} and
\ref{thm:multicast} hold up to a factor $O(\log^6(n))$, i.e.,
poly-logarithmic in $n$. However, the inner bound holds only up to the
aforementioned $n^{-o(1)}$ factor. A closer look at the proofs of the
two theorems reveals that the precise inner bound is of order
\begin{equation*}
    \gamma^{-1}(n)n^{-O\big(\log^{-1/3}(n)\big)},
\end{equation*}
where $\gamma(n)$ is the factor in the definition of $\Buc(n)$ and
$\Bmc(n)$ (see \eqref{eq:unicast_balanced} and
\eqref{eq:multicast_balanced}).  With a more careful analysis (see
\cite{nie} for the details), this can be sharpened to essentially
\begin{equation*}
    \gamma^{-1}(n)n^{-O\big(\log^{-1/2}(n)\big)}.
\end{equation*}

The exponent $\log^{-1/2}(n)$ in the inner bound has two causes. The
first is the use of hierarchical relaying (for $\alpha\in(2,3]$). The
second is the operation of the physical layer at level $L(n)+1$ of the
tree (i.e., to implement communication between the leaf nodes of $G$ and
their parents). Indeed at that level, we are operating on a square of
area
\begin{equation*}
    4^{-L(n)}n = n^{\log^{-1/2}(n)},
\end{equation*}
and the loss is essentially inversely proportional to that area.  Now,
the reason why $L(n)$ can not be chosen to be larger (to make this loss
smaller), is because hierarchical relaying requires a certain amount of
regularity in the node placement, which can only be guaranteed for large
enough areas.

This suggests that for the $\alpha>3$ regime, where multi-hop
communication is used at the physical layer instead of hierarchical
relaying, we might be able to significantly improve the inner bound.  To
this end, we have to choose more levels in the tree $G$, such that at
the last level before the tree nodes, we are operating on a square that
has an area of order $\log(n)$. Changing the three-layer architecture in
this manner, and choosing $\gamma(n)$ appropriately, for $\alpha>3$ the
inner bound can be improved to $\Omega(\log^{-2}(n))$ in $n$.  Combined
with the poly-logarithmic outer bound, this yields a $O(\log^8(n))$
approximation of the balanced unicast and multicast capacity regions for
$\alpha>3$.

\subsection{Non-Balanced Traffic}
\label{sec:nonbalanced}

Theorems \ref{thm:unicast} and \ref{thm:multicast} describe the scaling
of the balanced unicast and multicast capacity regions $\Lambdabuc(n)$
and $\Lambdabmc(n)$, respectively. As we have argued, the balanced
unicast region $\Lambdabuc(n)$ coincides with the unicast capacity
region $\Lambdauc(n)$ along at least $n^2-n$ out of $n^2$ total
dimensions, and the balanced multicast region $\Lambdabmc(n)$ coincides
with the multicast capacity region $\Lambdauc(n)$ along at least
$n2^n-n$ out of $n2^n$ total dimensions. However, the proofs of these
results provide also bounds for traffic that is not balanced, i.e., for
the remaining $2n$ dimensions.

Define the following two regions:
\begin{align*}
    \hLambdauc_1(n)
    \defeq \Big\{ \lambdauc\in\Rp^{n\times n}: &
    \sum_{u\in V_{\ell,i}(n)}\sum_{w\notin V_{\ell,i}(n)}(\lambdauc_{u,w}+\lambdauc_{w,u})
    \leq (4^{-\ell}n)^{2-\min\{3,\alpha\}/2} \\
    & \qquad\qquad \forall \ell\in\{1,\ldots,L(n)\}, i\in\{1,\ldots, 4^\ell\}, \\
    & \sum_{w\neq u}(\lambdauc_{u,w}+\lambdauc_{w,u}) \leq 1 
    \ \forall u\in V(n) 
    \Big\},
\end{align*}
and
\begin{align*}
    \hLambdamc_1(n) 
    \defeq \Big\{\lambdamc\in\Rp^{n\times 2^n}: 
    & \sum_{u\in V_{\ell,i}(n)} 
    \sum_{\substack{W\subset V(n):\\ W\setminus V_{\ell,i}(n)\neq\emptyset}}\lambdamc_{u,W}
    +\sum_{u\notin V_{\ell,i}(n)} 
    \sum_{\substack{W\subset V(n):\\ W\cap V_{\ell,i}(n)\neq\emptyset}}\lambdamc_{u,W} 
    \leq (4^{-\ell}n)^{2-\min\{3,\alpha\}/2} \\
    & \qquad\qquad \ \forall \ell\in\{1,\ldots,L(n)\}, i\in\{1,\ldots, 4^\ell\}, \\
    & \qquad 
    \sum_{\substack{W\subset V(n):\\ W\setminus \{u\}\neq\emptyset}}\lambdamc_{u,W}
    +\sum_{\tilde{u}\neq u}
    \sum_{\substack{W\subset V(n):\\ u\in W}}\lambdamc_{\tilde{u},W}
    \leq 1 \ \forall u\in V(n) 
    \Big\}.
\end{align*}
$\hLambdauc(n)$ and $\hLambdauc_1(n)$ differ in that for
$\ell\in\{1,\ldots,L(n)\}$, $\hLambdauc(n)$ only bounds traffic flow out
of $V_{\ell,i}(n)$, whereas $\hLambdauc_1(n)$ bounds traffic in both
directions across $V_{\ell,i}(n)$ (and similar for $\hLambdamc(n)$ and
$\hLambdamc_1(n)$). 

The analysis in Sections \ref{sec:proof_unicast} and
\ref{sec:proof_multicast} shows that
\begin{align*}
    n^{-o(1)}\hLambdauc_1(n) & \subset \Lambdauc(n) \subset O(\log^6(n))\hLambdauc(n), \\
    n^{-o(1)}\hLambdamc_1(n) & \subset \Lambdamc(n) \subset O(\log^6(n))\hLambdamc(n),
\end{align*}
with probability $1-o(1)$ as $n\to\infty$. In other words, we obtain an
inner and an outer bound on the capacity regions $\Lambdauc(n)$ and
$\Lambdamc(n)$. These bounds coincide in the scaling sense for balanced
traffic, for which we recover Theorems \ref{thm:unicast} and
\ref{thm:multicast}.

\subsection{Large Path-Loss Exponent Regime}
\label{sec:large}

The discussion in Section \ref{sec:nonbalanced} reveals that in order to
obtain scaling information for traffic that is not balanced, a stronger
version of the converse results in Lemma \ref{thm:cutset} is needed. In
particular, Lemma \ref{thm:cutset} bounds the sum-rate 
\begin{equation*}
    \sum_{u\in V_{\ell,i}(n)}\sum_{w\notin V_{\ell,i}(n)}\lambdauc_{u,w}
\end{equation*}
for $\lambdauc\in\Lambdauc(n)$. The required stronger version of the
lemma would also need to bound sum rates in the other direction, i.e., 
\begin{equation*}
    \sum_{u\notin V_{\ell,i}(n)}\sum_{w\in V_{\ell,i}(n)}\lambdauc_{u,w}.
\end{equation*}

For large path-loss exponents $\alpha>5$, such a stronger version of
Lemma \ref{thm:cutset} holds (see Lemma \ref{thm:cutset3}). With this,
we obtain that for $\alpha>5$,
\begin{align*}
    n^{-o(1)}\hLambdauc_1(n) & \subset \Lambdauc(n) \subset O(\log^6(n))\hLambdauc_1(n), \\
    n^{-o(1)}\hLambdamc_1(n) & \subset \Lambdamc(n) \subset O(\log^6(n))\hLambdamc_1(n),
\end{align*}
with probability $1-o(1)$ as $n\to\infty$. In other words, in the high
path-loss exponent regime $\alpha>5$, $\hLambdauc_1(n)$ and
$\hLambdamc_1(n)$ characterize the scaling of the entire unicast and
multicast capacity regions, respectively.

\subsection{Dense Networks}
\label{sec:dense}

So far, we have only discussed \emph{extended} networks, i.e., $n$ nodes
are located on a square of area $n$. We now briefly sketch how these
results can be recast for \emph{dense} networks, in which $n$ nodes are
located on a square of unit area. 

Note first that by rescaling power by a factor $n^{-\tilde{\alpha}/2}$,
a dense network with any path-loss exponent $\alpha$ can essentially be
transformed into an equivalent extended network with path-loss exponent
$\tilde{\alpha}$. In particular, any scheme for extended networks with
path-loss exponent $\tilde{\alpha}$ yields a scheme with same
performance for dense networks with any path-loss exponent $\alpha$ (see
also \cite[Section V.A]{ozg}). To optimize the resulting scheme for the
dense network, we start with the scheme for extended networks
corresponding to $\tilde{\alpha}$ close to $2$.  Hence an inner bound
for the unicast and multicast capacity regions for dense networks with
path-loss exponent $\alpha$ can be obtained from the ones for extended
networks by taking a limit as $\tilde{\alpha}\to 2$.  Moreover, an
application of Lemma \ref{thm:cutset2} yields a matching (in the scaling
sense) outer bound.

The resulting approximate balanced capacity regions $\hLambdabuc(n)$ and
$\hLambdabmc(n)$ have particularly simple shapes in this limit. In fact,
the only constraints in \eqref{eq:approx_unicast} and
\eqref{eq:approx_multicast} that can be tight are at level
$\ell=\log(n)$. Moreover, as in Section \ref{sec:large}, it can be shown
that the restriction of balanced traffic is not necessary for dense
networks. This results in the following approximate capacity regions for
dense networks:
\begin{equation*}
    \hLambdauc(n)
    \defeq \Big\{\lambdauc\in\Rp^{n\times n}:
    \sum_{w\neq u}(\lambdauc_{u,w}+\lambdauc_{w,u})
    \leq 1, \forall \ u\in V(n) \Big\}
\end{equation*}
for unicast, and
\begin{equation*}
    \hLambdamc(n) 
    \defeq \Big\{ \lambdamc\in\Rp^{n\times 2^n}:  
    \sum_{\substack{W\subset V(n):\\ W\setminus\{u\}\neq\emptyset}}\lambdamc_{u,W} 
    +\sum_{\tilde{u}\neq u}\sum_{\substack{W\subset V(n):\\ u\in W}}\lambdamc_{\tilde{u},W}
    \leq 1, \ \forall u\in V(n) \Big\}
\end{equation*}
for multicast. We obtain that for dense networks, for any $\alpha>2$,
\begin{align*}
    n^{-o(1)}\hLambdauc(n) & \subset \Lambdauc(n) \subset O(\log^6(n))\hLambdauc(n), \\
    n^{-o(1)}\hLambdamc(n) & \subset \Lambdamc(n) \subset O(\log^6(n))\hLambdamc(n),
\end{align*}
with probability $1-o(1)$ as $n\to\infty$.

\section{Conclusions}
\label{sec:conclusions}

In this paper, we have obtained an explicit information-theoretic
characterization of the scaling of the $n^2$-dimensional balanced
unicast and $n 2^n$-dimensional balanced multicast capacity
regions of a wireless network with $n$ randomly placed nodes and
assuming a Gaussian fading channel model. These regions span at least
$n^2-n$ and $n2^n-n$ dimensions of $\Rp^{n\times n}$ and $\Rp^{n\times
2^n}$, respectively, and hence determine the scaling of the unicast
capacity region along at least $n^2-n$ out of $n^2$ dimensions and the
scaling of the multicast capacity region along at least $n2^n-n$ out of
$n2^n$ dimensions. The characterization is in terms of $2n$ weighted
cuts, which are based on the geometry of the locations of the source
nodes and their destination nodes and on the traffic demands between
them, and thus can be readily evaluated. 

This characterization is obtained by establishing that the unicast and
multicast capacity regions of a capacitated (wireline, noiseless) tree
graph under routing have essentially the same scaling as that of the
original network. The leaf nodes of this tree graph correspond to the
nodes in the wireless network, and internal nodes of the tree graph
correspond to hierarchically growing sets of nodes.

This equivalence suggests a three-layer communication architecture for
achieving the entire balanced unicast and multicast capacity regions (in the
scaling sense). The top or routing layer establishes paths from each of
the source nodes to its destination (for unicast) or set of destinations
(for multicast) over the tree graph.  The middle or cooperation layer
provides this tree abstraction to the routing layer by distributing the
traffic among the corresponding set of nodes as a message travels up the
tree graph, and by concentrating the traffic on to the corresponding
set of nodes as a message travels down the tree. The bottom or
physical layer implements this distribution and concentration of traffic
over the wireless network. This implementation depends on the path-loss
exponent: For low path loss, $\alpha \in (2,3]$, hierarchical relaying
is used, while for high path loss ($\alpha>3$), multi-hop communication
is used. 

This scheme also establishes that a separation based approach, in which the
routing layer works essentially independently of the physical layer, can
achieve nearly the entire balanced unicast and multicast capacity
regions in the scaling sense. Thus, for balanced traffic, such
techniques as network coding can provide at most a small increase in the
scaling.

\section{Acknowledgments}

The authors would like to thank David Tse, Greg Wornell, and Lizhong
Zheng for helpful discussions, and the anonymous reviewers for their
help in improving the presentation of this paper.

\end{document}